\DeclareMathOperator{\SP}{SP}
\DeclareMathOperator{\AC}{AC}
\DeclareMathOperator{\HS}{HS}
\DeclareMathOperator{\essinf}{ess\,inf}
\DeclareMathOperator{\ESJD}{ESJD}
\DeclareMathOperator{\Var}{Var}
\DeclareMathOperator{\Gap}{Gap}
\DeclareMathOperator{\MTM}{MTM}
\DeclareMathOperator{\SMTM}{SMTM}
\newcommand{\dee }{ \text{d}}
\theoremstyle{plain}
\newtheorem{theorem}{Theorem}[section]
\newtheorem{lemma}[theorem]{Lemma}
\newtheorem{proposition}[theorem]{Proposition}
\theoremstyle{definition}
\newtheorem{remark}[theorem]{Remark}
\renewcommand{\journal@name}{} 
\begin{document}

\begin{frontmatter}
\title{Stereographic Multiple-Try Metropolis}
\runtitle{Stereographic Multiple-Try Metropolis}

\begin{aug}
\author{\inits{F.}\fnms{Zhihao}~\snm{Wang}\ead[label=e1]{zw@math.ku.dk}\orcid{0009-0006-2794-6869}}
\author{\inits{S.}\fnms{Jun}~\snm{Yang}\ead[label=e2]{jy@math.ku.dk}}

\address{Department of Mathematical Sciences,
University of Copenhagen, Denmark
\printead[presep={,\ }]{e1,e2}}
\end{aug}

\begin{abstract}
Multiple-proposal MCMC algorithms have recently gained attention for their potential to improve performance, especially through parallel implementation on modern hardware. We introduce Stereographic Multiple-Try Metropolis (SMTM), a novel family of gradient-free algorithms designed for sampling high-dimensional distributions. By integrating multiple-try Metropolis (MTM) with the stereographic MCMC framework, SMTM overcomes the traditional limitations of MTM, particularly its pathological convergence behavior often observed in high dimensions. For both light-tailed and heavy-tailed targets, SMTM not only outperforms classical MTM and the existing stereographic random-walk Metropolis but also demonstrates strong robustness to tuning. These advantages are supported by high-dimensional scaling analysis and validated through extensive simulation studies.
\end{abstract}

\begin{keyword}
\kwd{Metropolis--Hastings}
\kwd{stereographic MCMC}
\kwd{heavy-tailed sampling}
\kwd{optimal scaling}
\end{keyword}

\end{frontmatter}

\section{Introduction}
Markov chain Monte Carlo (MCMC) methods are essential tools for sampling from complex probability distributions, particularly in Bayesian inference and statistical physics. Their widespread use stems from their ability to approximate expectations under intractable distributions efficiently. In this paper, we focus on the multiple-try Metropolis (MTM) algorithms, first proposed by \cite{liu2000mtm}. Given a target distribution $\pi$ on $\mathbb{R}^d$, we want to construct a Markov chain $X^d(t)$ such that it has $\pi$ as its invariant distribution. Unlike the random-walk Metropolis (RWM) algorithm, which considers a single proposal per iteration, MTM generates multiple, say $N$, candidate proposals and selects one based on a weight function $\omega(\cdot,\cdot)$. See \cref{alg:MTM} for one version of MTM with a general weight function \citep{Pandolfi2010}. MTM has recently gained attention for its potential to facilitate parallel computation (see applications in \cite{frenkel2004,Calderhead2014,Holbrook2023}), enhancing computational efficiency. It does not require gradient information, making it applicable to a broader class of target distributions.

\begin{algorithm}
\caption{Multiple-try Metropolis (MTM)}
\label{alg:MTM}
\begin{algorithmic}
    \State \textbf{Input: }Current state $X^d(t)=x$, target $\pi$, proposal density $q(\cdot,\cdot)$ and weight function $\omega(\cdot,\cdot)$.
    \begin{itemize}
        \item Draw $y_1,\dots,y_N$ independently from $q(x,\cdot)$;
        \item Select one proposal from $y_1,\dots,y_N$, say $y_j$ with probability proportional to  $\omega(x,y_j)$;
        \item Draw $z_1,\dots,z_{N-1}$ independently from $q(y_j,\cdot)$;
        \item $X^d(t+1)=y_j$ with probability
        \begin{equation}
            \alpha(x,y_j)=1\wedge\frac{\pi(y_j)q(y_j,x)\omega(y_j,x)/(\sum_{i=1}^{N-1}\omega(y_j,z_i)+\omega(y_j,x))}{\pi(x)q(x,y_j)\omega(x,y_j)/(\sum_{i=1}^N\omega(x,y_i))};
        \end{equation}
        otherwise $X^d(t+1)=x$.
    \end{itemize}
\end{algorithmic}
\end{algorithm}

For the weight function, one popular choice is $\omega(x,y)=\pi(y)/\pi(x)$ where $x\in\mathbb{R}^d$ denotes the current state and $y\in\mathbb{R}^d$ denotes the proposal, and the corresponding MTM will be referred to as globally-balanced (GB) MTM as in \cite{Gagnon2023}. 
Recently, \cite{Gagnon2023} proposed the locally-balanced (LB) MTM where the weight function is chosen as $\omega(x,y)=\sqrt{\pi(y)/\pi(x)}$. For other versions of MTM: \cite{craiu2007} propose a modification of MTM allowing the use of correlated proposals; \cite{casarin2013interacting} use different independent distributions to sample proposals and introduce an interacting MTM mechanism; \cite{pandolfi2014generalized} generalize the reversible jump algorithm through the multiple-try method; \cite{Luo2018} propose an MTM algorithm with tailored proposal distributions; \cite{yang2019adaptive} introduce a component-wise multiple-try Metropolis algorithm whose computational efficiency is increased using an adaptation rule; \cite{fontaine2022} propose an adaptive MTM algorithm for target distributions with complex geometry such as multiple-modality; \cite{li2024} propose a rejection-free modification of MTM. We refer to \cite{martino2018} for a review of MTM algorithms. 

The properties of MTM have also been extensively studied: \cite{martino2017} describe the cases where the increase of the number of proposals does not produce a corresponding enhancement of the
performance and then introduce possible solutions for these issues; Recently, \cite{pozza2024fundamental} investigated the tradeoff between the number of candidates and the computational efficiency of MTM. \cite{Gagnon2023} also study the convergence behavior of GB-MTM and LB-MTM when the number of proposals $N\to\infty$ and find that GB-MTM has a pathological behavior when the initial state is far away; the convergence rate of MTM independent sampler is derived in \cite{Yang2023MTMIS}; the mixing time of MTM for Bayesian model selection is studied in \cite{chang2022} and the spectral gap for light-tailed targets is studied in \cite{pozza2024fundamental,caprio2025}.

For analyzing the asymptotic behavior of MCMC methods in high-dimensional settings, the optimal scaling framework, e.g., \cite{roberts97,roberts98,roberts01}, is a widely used and effective approach. The results of optimal scaling provide mathematically grounded insights that aid in optimizing MCMC performance by guiding the tuning of proposal distribution parameters in Metropolis–Hastings (MH) algorithms. A well-known example is the classical result recommending an acceptance probability of $0.234$ for RWM \citep{roberts97}. Optimal scaling for MTM has been studied and shown to outperform RWM in \cite{bedard2012scaling,Gagnon2023}. The optimal acceptance rate of MTM for fixed $N$ is known in \citet[Table 1]{bedard2012scaling}, for example, $0.32$ when $N=2$ and $0.37$ when $N=3$ for GB-MTM. In the case of $N\to\infty$, the optimal acceptance rate is proven in \citet[Corollary 1]{Gagnon2023} as 0.234 for GB-MTM and 0.574 for LB-MTM.

However, for heavy-tailed target distributions, it is well established that many popular Metropolis--Hastings algorithms lack geometric ergodicity \citep{jarner2000geometric, jarner2003necessary, jarner2007convergence, roberts1996geometric}. For instance, RWM is known to fail geometric ergodicity in \cite{jarner2000geometric}. Furthermore, as we later prove in \cref{prop:mtm non uniform}, MTM does not exhibit geometric ergodicity either. Recently, \cite{yang2024} introduced a new framework for gaining uniform ergodicity, a property stronger than geometric ergodicity, even for heavy-tailed distributions, by leveraging the concept of stereographic projection. One example they introduce is a variant of RWM, the Stereographic Projection Sampler (SPS) \cite[Algorithm 1]{yang2024}, which we refer to as the stereographic random-walk Metropolis (SRWM) in this paper. The established improvements of both MTM and SRWM over RWM motivate combining MTM with SRWM to develop an efficient sampling algorithm for both light-tailed and heavy-tailed targets.

In this paper, we propose stereographic multiple-try Metropolis (SMTM), a novel family of gradient-free MCMC algorithms that combine MTM with the stereographic MCMC framework, and establish several key theoretical results that highlight its advantages. We prove that, like SRWM, SMTM achieves uniform ergodicity for both light-tailed and heavy-tailed distributions, whereas MTM is not geometrically ergodic for any heavy-tailed distribution. Furthermore, a detailed high-dimensional scaling analysis shows that SMTM outperforms both MTM and SRWM. Moreover, we also prove that SMTM avoids the pathological convergence behavior often observed in MTM. Finally, numerical experiments demonstrate the strong robustness of SMTM to tuning. 

This paper is organized as follows. In \cref{Sec:SMTM}, we introduce SMTM and compare its ergodic properties with those of the MTM algorithm. We show that, while MTM fails to be geometrically ergodic for any heavy-tailed distribution, SMTM can retain uniform ergodicity under certain conditions. \cref{Sec:optimal scaling} investigates the high-dimensional scaling limit of SMTM in the fixed 
$N$ regime, focusing on the limiting acceptance rate and expected squared jumping distance (ESJD) as the dimension $d\to\infty$. \cref{Sec:n to infty} turns to the large $N$ regime. We first provide partial results on the optimal acceptance rate in the limit $N\to\infty$. Then we demonstrate that SMTM avoids the pathological behavior known to affect classical MTM in the large $N$ regime. Finally, we discuss the open problem on how the spectral gap of SMTM scales with $N$ in a heavy-tailed setting. \cref{Sec:numerical} presents numerical experiments illustrating SMTM’s practical performance.

\section{Stereographic Multiple-Try Metropolis}
\label{Sec:SMTM}

\subsection{Stereographic MCMC}

The idea of stereographic MCMC \citep{yang2024} is to map the distribution from Euclidean space onto the sphere, construct a Markov chain on the sphere, and then project the chain back to Euclidean space. Stereographic projection plays a key role in this process. It is a classical mapping technique that projects points from a sphere onto a plane. Let $\mathbb{S}^d$ denote the unit sphere in $\mathbb{R}^{d+1}$ with $\mathscr{N}=(0,\dots,0,1)^T$ as its north pole. Then we can define the stereographic projection $\SP:\mathbb{S}^d\backslash \mathscr{N}\to\mathbb{R}^d$ as 
\begin{equation}\label{def_SP}
    x=\SP(z):=\left(R\frac{z_1}{1-z_{d+1}},\dots,R\frac{z_d}{1-z_{d+1}}\right)^T,
\end{equation}
where $R$ is a tuning parameter. Intuitively, $R$ acts as a scaling factor for the stereographic coordinates in $\mathbb{R}^d$ after projecting from the sphere. Increasing $R$ stretches the projected space, pushing points farther from the origin, while decreasing $R$ compresses the projection, bringing points closer to the origin. Although $R$ does not represent the radius of the sphere, which is always 1, we refer to $R$ as the radius parameter in this paper. The inverse of the stereographic projection $\SP^{-1}:\mathbb{R}^d\to\mathbb{S}^d\backslash \mathscr{N}$ can be written as 
\begin{equation}\label{def_invSP}
    z=\SP^{-1}(x)=\left(\frac{2Rx_1}{\|x\|^2+R^2},\dots,\frac{2Rx_d}{\|x\|^2+R^2},\frac{\|x\|^2-R^2}{\|x\|^2+R^2}\right)^T.    
\end{equation}

\begin{figure}[]
    \centering
    \includegraphics[width=0.6\linewidth]{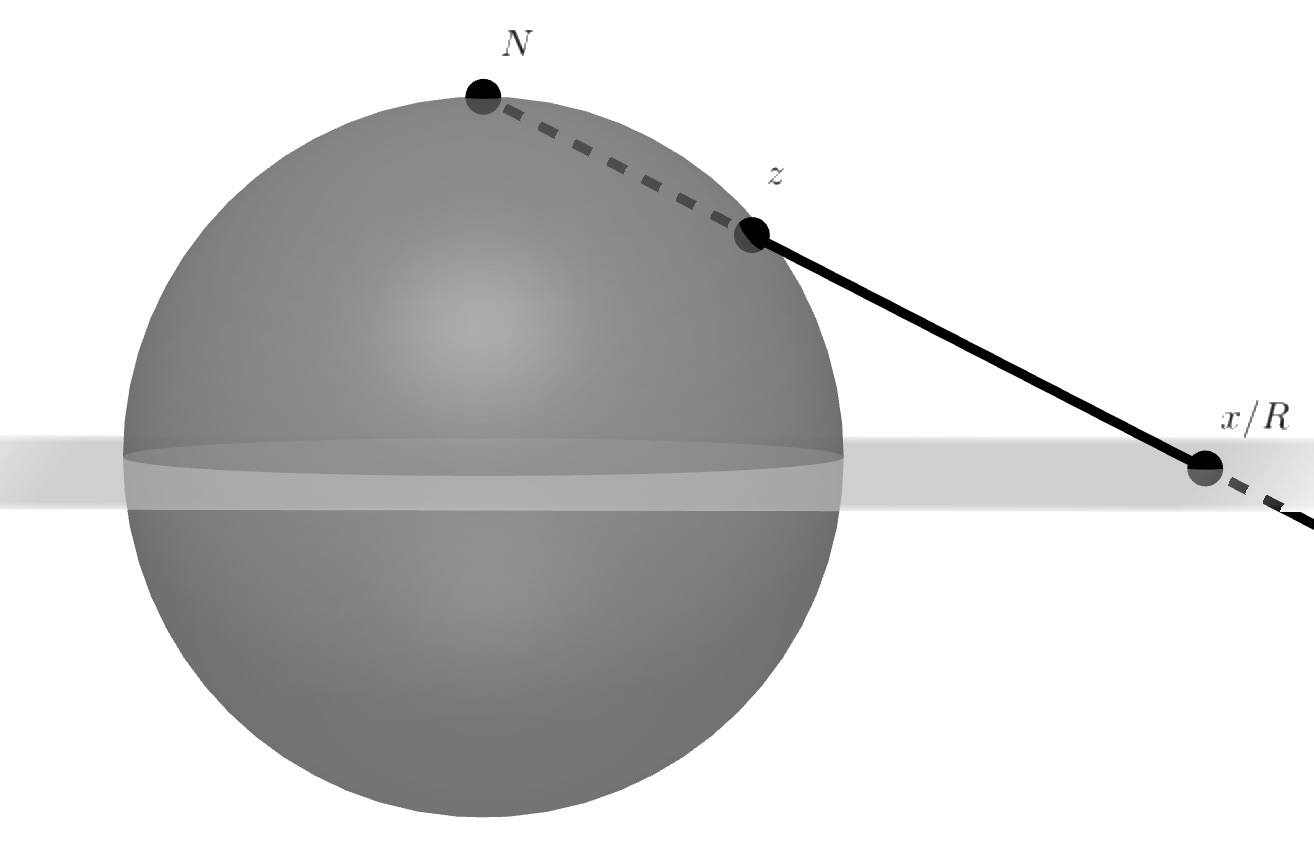}
    \caption{Stereographic Projection $\SP:\mathbb{S}^2\backslash \mathscr{N}\to\mathbb{R}^2$}
    \label{fig:stereographic projection}
\end{figure}

By \citet[Eq.(2)]{yang2024}, the Jacobian determinant at $x\in\mathbb{R}^d$ of $\SP$ satisfies
\begin{equation}
    J_{\SP}(x)\propto(R^2+\|x\|^2)^d.
\end{equation}
Given a target distribution $\pi(x)$ where $x\in\mathbb{R}^d$, we can transfer the target to a distribution on the sphere $\mathbb{S}^d$ using the Jacobian determinant, which is
\begin{equation}
    \pi_S(z)\propto\pi(x)(R^2+\|x\|^2)^d,
\end{equation}
where $z=\SP^{-1}(x)$. Therefore, one can construct an MCMC on the sphere targeting $\pi_S$ and then project the chain back to $\mathbb{R}^d$ \citep{yang2024}.

One of such MCMC algorithms is SRWM, see \cref{alg:SPS}. SRWM demonstrates significantly better ergodic properties than RWM and has been proven to be uniformly ergodic for a large class of distributions whose tails are heavier than exponential (but no heavier than multivariate Student's $t$ with degrees of freedom $d$). In contrast, RWM fails to achieve even geometric ergodicity for any heavy-tailed distribution. Furthermore, SRWM consistently outperforms RWM for a stylish family of targets in optimal scaling theory with a larger ESJD \cite[Corollary 5.1]{yang2024} and is more robust to the choice of tuning parameters \citep{yang2024,bell2024adaptive}.

\begin{remark}
\label{remark:SP_extend}
    The stereographic projection can be trivially extended by introducing additional parameters, such as a ``location'' parameter and a preconditioning matrix. By \citet[Section 4]{yang2024}, given a vector $\mu\in\mathbb{R}^d$ and a positive definite matrix $\Sigma\in\mathbb{R}^{d\times d}$, a natural extension of $\SP$ can be defined by
    \begin{equation}
        \SP_{\mu,\Sigma}(z)=\Sigma^{1/2}\left(\frac{z_1}{1-z_{d+1}},\dots,\frac{z_d}{1-z_{d+1}}\right)^T+\mu.
    \end{equation}
    \begin{equation}
        \SP_{\mu,\Sigma}^{-1}(x)=\left(\frac{2\Sigma^{-1/2}(x_1-\mu_1)}{\|\Sigma^{-1/2}(x-\mu)\|^2+1},\dots,\frac{2\Sigma^{-1/2}(x_d-\mu_d)}{\|\Sigma^{-1/2}(x-\mu)\|^2+1},\frac{\|\Sigma^{-1/2}(x-\mu)\|^2-1}{\|\Sigma^{-1/2}(x-\mu)\|^2+1}\right)^T.
    \end{equation}
 
 This extension of stereographic projection integrates naturally into the adaptive MCMC framework; see \citet{bell2024adaptive} for principled strategies for adaptive parameter tuning.
\end{remark}

\begin{algorithm}[]
\caption{Stereographic Random-walk Metropolis (SRWM)}
\label{alg:SPS}
\begin{algorithmic}
    \State \textbf{Input: }Current state $X^d(t)=x$ and target distribution $\pi$.
    \begin{itemize}
        \item Let $z=\SP^{-1}(x)$; \hfill Project the current state onto $\mathbb{S}^d$
        \item Sample independently $\dee\tilde{z}\sim\mathcal{N}(0,h^2I_{d+1})$; \hfill Compute the proposal on $\mathbb{S}^d$
        \item Let $\dee z=\dee\tilde{z}-\frac{(z^T\cdot \dee\tilde{z})z}{\|z\|^2}$ and $\hat{z}=\frac{z+\dee z}{\|z+\dee z\|}$; 
        \item Define the proposal as $\hat{x}=\SP(\hat{z})$;\hfill Project the proposal back to $\mathbb{R}^d$
        \item $X^d(t+1)=\hat{x}$ with probability $1\wedge\frac{\pi(\hat{x})(R^2+\|\hat{x}\|^2)^d}{\pi(x)(R^2+\|x\|^2)^d}$; otherwise $X^d(t+1)=x$.
    \end{itemize}
\end{algorithmic}
\end{algorithm}

\subsection{Stereographic MTM}
The SMTM algorithm integrates the principles of MTM and SRWM. Let $Q_S(\cdot, \cdot)$ denote the proposal distribution on $\mathbb{S}^d$ of SRWM (steps 2--3 of \cref{alg:SPS}), i.e., $Q_S(z, \cdot)$ is the distribution of $\hat{z}$ in the third step of \cref{alg:SPS} when $z$ is given. Here we give details of SMTM, which is in \cref{alg:SMTM}. Note that \cref{alg:SMTM} requires $N>1$. Without loss of generality, we call SRWM in \cref{alg:SPS} a special case of SMTM when $N=1$.
Similar to \citet[Section 1.2]{Gagnon2023}, we call the weight function $\omega(z,\hat{z})=\pi_S(\hat{z})/\pi_S(z)$ globally-balanced and $\omega(z,\hat{z})=\sqrt{\pi_S(\hat{z})/\pi_S(z)}$ locally-balanced. Note that the weight functions defined here differ from those in \cite{Gagnon2023}, as ours are defined on the sphere. 

Because the stereographic projection and its Jacobian determinant admit closed-form expressions, they do not incur additional computational cost. Therefore, with the same $N$, the computational complexity of SMTM and MTM is similar in each iteration. Next, we show SMTM is a valid MCMC algorithm with invariant distribution $\pi$.

\begin{algorithm}[]
\caption{Stereographic Multiple-try Metropolis (SMTM)}
\label{alg:SMTM}
\begin{algorithmic}
    \State \textbf{Input:} Current state $X^d(t)=x$, $N>1$, target $\pi$ on $\mathbb{R}^d$ (or $\pi_S$ on $\mathbb{S}^d$) and weight function $\omega(\cdot,\cdot)$.
    \begin{itemize}
        \item Let $z=\SP^{-1}(x)$;\hfill Project the current state onto $\mathbb{S}^d$
        \item Draw $\hat{z}_1, \dots, \hat{z}_N$ independently from $Q_S(z, \cdot)$;\hfill Compute the candidates on $\mathbb{S}^d$
        \item Select $\hat{z} = \hat{z}_j$ with probability proportional to $\omega(z, \hat{z}_j)$;\hfill Select one proposal
        \item Draw $z_1^{\ast}, \dots, z_{N-1}^{\ast}$ independently from $Q_S(\hat{z}_j, \cdot)$;\hfill Compute the auxiliary variables
        \item Define the proposal as $\hat{x}_j=\SP(\hat{z}_j)$;\hfill Project the proposal back to $\mathbb{R}^d$
        \item $X^d(t+1)=\hat{x}_j$ with probability
        \begin{equation}\label{eq:SMTM_alpha1}
            \alpha(z,\hat{z}_j)=1\wedge\frac{\pi_S(\hat{z}_j)\omega(\hat{z}_j,z)/(\sum_{i=1}^{N-1}\omega(\hat{z}_j,z_i^{\ast})+\omega(\hat{z}_j,z))}{\pi_S(z)\omega(z,\hat{z}_j)/(\sum_{i=1}^N\omega(z,\hat{z}_i))};
        \end{equation}
        otherwise $X^d(t+1)=x$.
    \end{itemize}
\end{algorithmic}
\end{algorithm}

\begin{proposition}
\label{prop:detailed balance}
    Let $\pi$ be positive and continuous on $\mathbb{R}^d$, then SMTM with fixed $N$ and globally or locally-balanced weight function induces an ergodic Markov chain with $\pi$ as its invariant distribution.
\end{proposition}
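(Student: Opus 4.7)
The strategy is to view the SMTM kernel on $\mathbb{R}^d$ as the pushforward, via the stereographic projection $\SP$, of a standard multiple-try Metropolis kernel $P_S$ on the sphere $\mathbb{S}^d$ with target $\pi_S$ and symmetric proposal $Q_S$. Since $\SP$ is a diffeomorphism between $\mathbb{R}^d$ and $\mathbb{S}^d\setminus\{\mathscr{N}\}$ and the single point $\mathscr{N}$ carries zero $\pi_S$-mass, it suffices to establish that $P_S$ admits $\pi_S$ as its invariant distribution and is $\pi_S$-ergodic; the corresponding statements for SMTM on $\mathbb{R}^d$ follow by change of variables using the Jacobian relation $\pi_S(z)\propto\pi(\SP(z))(R^2+\|\SP(z)\|^2)^d$.

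For $\pi_S$-invariance, the plan is to invoke the classical MTM detailed balance argument of \textcite{liu2000mtm}, generalised to arbitrary positive weight functions in \textcite{Pandolfi2010}, applied verbatim on the state space $\mathbb{S}^d$. Two structural facts make this transfer clean. First, the proposal $Q_S$ constructed in steps 2--3 of \cref{alg:SPS} is symmetric with respect to the surface measure on $\mathbb{S}^d$, namely $Q_S(z,\hat{z})=Q_S(\hat{z},z)$; this was verified in \cite{yang2024} and follows from the rotational invariance of the Gaussian increment together with the projection to the tangent space. As a consequence, the proposal-density ratio that appears in the general MTM acceptance probability in \cref{alg:MTM} cancels, explaining why \cref{alg:SMTM} omits it. Second, positivity and continuity of $\pi$ on $\mathbb{R}^d$ transfers via \eqref{def_invSP} to positivity and continuity of $\pi_S$ on $\mathbb{S}^d$, so the globally balanced weight $\omega(z,\hat{z})=\pi_S(\hat{z})/\pi_S(z)$ and locally balanced weight $\omega(z,\hat{z})=\sqrt{\pi_S(\hat{z})/\pi_S(z)}$ are strictly positive. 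Under these conditions, integrating out the auxiliary variables $\hat{z}_1,\dots,\hat{z}_N$ and $z_1^{\ast},\dots,z_{N-1}^{\ast}$ in the joint density and using the symmetry of the resulting expression under $z\leftrightarrow\hat{z}_j$ yields the reversibility identity $\pi_S(z)P_S(z,\dee \hat{z})=\pi_S(\hat{z})P_S(\hat{z},\dee z)$.

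For ergodicity, I would first show $\pi_S$-irreducibility and aperiodicity of $P_S$. Since $Q_S(z,\cdot)$ has a strictly positive density on $\mathbb{S}^d$ and the acceptance probability is strictly positive on a set of full $\pi_S^{\otimes N}$-measure (by continuity and positivity of $\pi_S$ and of the weights), for any Borel $A\subset\mathbb{S}^d$ with $\pi_S(A)>0$ one has $P_S(z,A)>0$ for every $z\in\mathbb{S}^d$; this simultaneously gives $\pi_S$-irreducibility and a standard minorisation that rules out periodicity. Pushing these properties back through $\SP$ yields $\pi$-irreducibility and aperiodicity of SMTM on $\mathbb{R}^d$.

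The main obstacle I anticipate is item (ii) of the detailed balance step: although the MTM identity is standard on Euclidean space, one must check that the reference-measure bookkeeping is consistent when the proposal density is defined with respect to the surface measure on the compact manifold $\mathbb{S}^d$ rather than Lebesgue measure. I expect this to be purely mechanical given the smoothness of $Q_S$ and $\pi_S$, but it is worth writing out carefully to confirm that the acceptance ratio displayed in \cref{alg:SMTM}---which drops the proposal-density factor present in the general MTM formula---is indeed the correct one.
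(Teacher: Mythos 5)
Your detailed-balance half is essentially the paper's own argument: the paper also works on the sphere with respect to the surface measure and verifies the MTM reversibility identity by an explicit computation whose symmetry in $z$ and $\hat z$ is exactly the cancellation you attribute to the symmetry of $Q_S$; your observation about why the proposal-density ratio of \cref{alg:MTM} disappears from the acceptance probability in \cref{alg:SMTM} is correct, and the reference-measure bookkeeping you flag is indeed mechanical.

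The genuine gap is in your ergodicity step. The proposal $Q_S(z,\cdot)$ does \emph{not} have strictly positive density on all of $\mathbb{S}^d$: since $\dee z$ is orthogonal to $z$, the proposal $\hat z=(z+\dee z)/\|z+\dee z\|$ satisfies $z^T\hat z=1/\|z+\dee z\|>0$, so $Q_S(z,\cdot)$ is supported only on the open hemisphere $\{\hat z\in\mathbb{S}^d: z^T\hat z>0\}$ (equivalently, the density from \cite{Milinanni2025} used elsewhere in the paper is a function of $\cos(\mathrm{dist}(z,\hat z))$ and vanishes beyond geodesic distance $\pi/2$). Hence your claim that $P_S(z,A)>0$ for every $z$ and every positive-$\pi_S$-measure $A$ in a single step fails: take $A$ contained in the far hemisphere. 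Irreducibility is still true, but it needs at least two steps (for any $w\neq -z$ there is an intermediate point lying in both open hemispheres, and the acceptance probabilities are positive by positivity of $\pi_S$), or one can argue as the paper does, which simply invokes \cite[Proposition 4.3]{fontaine2022} for $\pi$-irreducibility and aperiodicity given that $\pi$ is finite and continuous; note that the multi-step minorizations in the proof of \cref{Thm:uniform ergodic} and in \cite[Theorem 2.1]{yang2024} exist precisely because one step cannot reach the whole sphere. A further minor imprecision: positivity and continuity of $\pi$ only give continuity of $\pi_S$ on $\mathbb{S}^d\setminus\{\mathscr{N}\}$ (behaviour at the north pole requires the tail condition appearing in \cref{Thm:uniform ergodic}), but this is harmless here since $\mathscr{N}$ is $\pi_S$-null.
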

\begin{proof}
    See \cref{proof:prop:detailed balance}.
\end{proof}

Next, we compare the ergodic properties of SMTM and MTM for heavy-tailed distributions. We show that MTM lacks geometric ergodicity for any heavy-tailed distribution, whereas SMTM can achieve uniform ergodicity under certain conditions. Recall that a Markov chain with transition kernel $P$ is said to be geometrically ergodic if for any $x$ and $n\in\mathbb{N}$ we have $\|P^n(x,\cdot)-\pi(\cdot)\|_{\text{TV}}\leq C(x)r^n$ with constant $r\in(0,1)$, where $\|\cdot\|_{\text{TV}}$ represents the total variation. A Markov chain is said to be uniformly ergodic if for any $\epsilon>0$, there exists $N\in\mathbb{N}$ such that $\|P^N(x,\cdot)-\pi(\cdot)\|_{\text{TV}}\leq\epsilon,\forall x$. 
The geometric and uniform ergodicity of RWM has been extensively studied. For heavy-tailed target distributions, it is well known that RWM is not geometrically ergodic \citep{jarner2000geometric, jarner2003necessary, jarner2007convergence, roberts1996geometric}. Although one might intuitively expect that MTM could improve convergence over RWM, it is not expected to achieve geometric ergodicity under such conditions. However, to the best of our knowledge, no rigorous result confirming this has been established in the existing literature. In the following proposition, we provide a formal proof, following similar arguments as in \cite{jarner2000geometric}, showing that random-walk-based MTM fails to be geometrically ergodic when the target distribution is heavy-tailed.
\begin{proposition}
\label{prop:mtm non uniform}
    Suppose $q$ is a random-walk-based proposal distribution, i.e., $q(x,y) = q(\|x-y\|)$, satisfying $\int_{\mathbb{R}^d} \|x\| q(\|x\|)\dee x < \infty$. Suppose an MTM algorithm with a fixed number of candidates $N$, proposal distribution $q$, target distribution $\pi$, and weight function $\omega$, is ergodic. If it is geometrically ergodic, then there exists $s>0$ such that
    \begin{equation}
    \label{Eq:finite exponential moments}
        \int_{\mathbb{R}^d}e^{s\|x\|}\pi(x)\dee x<\infty.
    \end{equation}
\end{proposition}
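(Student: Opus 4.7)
The strategy is to reduce the claim to the classical necessary condition of \textcite{jarner2003necessary} for random-walk-type Markov chains. The key structural property that MTM inherits from RWM is that a single iteration produces a jump of size at most one proposal, which yields a uniform bound on the expected one-step displacement.

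\textbf{Step 1 (bounded expected increment).} Observe that at every MTM iteration $X^d(t+1) \in \{x, y_1, \dots, y_N\}$ with $y_i \sim q(x,\cdot)$ i.i.d., so almost surely
\[
\|X^d(t+1) - x\| \;\le\; \max_{1 \le i \le N}\|y_i - x\| \;\le\; \sum_{i=1}^N \|y_i - x\|,
\]
hence
\[
E\bigl[\|X^d(t+1) - x\| \,\bigm|\, X^d(t)=x\bigr] \;\le\; N \int_{\mathbb{R}^d} \|u\|\, q(\|u\|) \,\dee u \;=:\; N m \;<\; \infty.
\]
Thus the MTM kernel has uniformly bounded first-moment increments.

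\textbf{Step 2 (invoke a JT-style necessary condition).} Geometric ergodicity together with $\phi$-irreducibility (which holds here because $\pi$ and $q$ are positive and continuous) is equivalent, by Meyn--Tweedie, to a geometric drift condition $PV \le \lambda V + b\mathbb{1}_C$ on a petite set $C$ with $V\ge 1$ and $\pi(V)<\infty$. The core of the JT argument for RWM shows that, under a uniformly bounded-increment assumption, the drift function can be upgraded (or replaced, via a return-time argument) to the exponential test function $V_s(x)=e^{s\|x\|}$ for some $s>0$, whose $\pi$-integrability then yields \eqref{Eq:finite exponential moments}. The only chain-specific input used is the one from Step~1, so the argument transfers to MTM.

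\textbf{Step 3 (how the adaptation works).} Concretely, one argues by contradiction: if \eqref{Eq:finite exponential moments} fails for every $s>0$, then the drift test $PV_s(x) \le \lambda V_s(x) + b\mathbb{1}_C$ must fail as well. For the MTM chain, the bound from Step~1 controls $PV_s(x)/V_s(x)$ on the ``rejection regime'': since for large $\|x\|$ the acceptance probability in a heavy-tailed regime is bounded away from $1$, the chain frequently stays put, and the contribution of the accepted moves is controlled by the first-moment bound on $\max_i\|y_i-x\|$. Using a return-time argument (as in JT), geometric ergodicity forces $E_x[\kappa^{\tau_C}]<\infty$ for some $\kappa>1$, and combining this with the linear-in-$\tau_C$ growth of $\|X_{\tau_C}\|$ afforded by Step~1 yields an exponential moment of $\|X_{\tau_C}\|$; $\pi$-stationarity then transfers this to $\pi$ itself, contradicting the assumption.

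\textbf{Main obstacle.} The cleanest part is Step~1, which is purely structural. The subtle part is Step~3: unlike for RWM, the MTM acceptance ratio involves sums of $N$ weights, which complicates any direct computation of $PV_s$. The right way around this, I expect, is not to handle $PV_s$ head-on but to use the return-time / renewal formulation of geometric ergodicity together with the uniform increment bound from Step~1, exactly as in JT; the $N$-proposal structure then enters only through the constant $Nm$ in the increment bound and does not interfere with the qualitative obstruction to geometric drift.
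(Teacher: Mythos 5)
Your overall strategy is the same as the paper's: reduce the claim to the Jarner--Hansen/Jarner--Tweedie necessary-condition machinery for random-walk-type chains, with the only MTM-specific input being that the one-step displacement is bounded (in expectation) by $N\int\|x\|q(\|x\|)\,\mathrm{d}x$, so the number of candidates enters only through this constant. Your Step 1 is exactly the bound the paper uses (its mean decrease $\gamma$ satisfies $\gamma\le N\int\|x\|q(\|x\|)\,\mathrm{d}x$), and the endgame (a drift function satisfying $V(x)\ge c\,e^{s\|x\|}$ with $\pi(V)<\infty$, via $V(x)\ge\exp(\rho\,\mathbb{E}_x[\tau_C])$) is the paper's route through Jarner--Hansen (2000), Lemma 3.2 and Corollary 3.4.

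The genuine gap is in Steps 2--3, where the crucial quantitative lemma is asserted rather than proved: you never establish $\mathbb{E}_x[\tau_C]\gtrsim(\|x\|-S)/(Nm)$. The phrase ``linear-in-$\tau_C$ growth of $\|X_{\tau_C}\|$ afforded by Step 1'' hides a real argument: with only a bound on the \emph{expected} jump size, the naive submartingale/optional-stopping version (using $\|X_n\|+Nm\,n$) runs into the problem of controlling $\mathbb{E}[\|X_n\|\mathbf{1}_{\tau>n}]$ when passing to the limit, and you give no substitute. The paper resolves precisely this point by constructing a one-sided comparison walk $W(i)=W(i-1)+\min_k J^k(i)$, where $J^k(i)$ is the negative part of the radial projection of the $k$-th proposed increment: one checks inductively $W(i)\le\|X(i)\|$ (this is where the MTM structure ``next state is the current state or one of the $N$ candidates'' is actually used), the increments of $W$ are i.i.d.\ with mean decrease at most $N\int\|x\|q(\|x\|)\,\mathrm{d}x$ by symmetry of $q$, and Wald's identity then gives the return-time lower bound cleanly because the stopped walk cannot overshoot below $S$ in a problematic way. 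In addition, your Step 3 leans on the claim that ``for large $\|x\|$ the acceptance probability in a heavy-tailed regime is bounded away from $1$''; this is both unjustified and irrelevant here, since the proposition makes no tail assumption on $\pi$ (heavy tails only enter afterwards, when one checks that \eqref{Eq:finite exponential moments} fails). Finally, citing a Jarner--Tweedie-style theorem ``off the shelf'' for MTM is not automatic — MTM is not literally in the class treated there — which is why the paper re-runs the Jarner--Hansen argument with the dominating-walk construction; that construction is the missing piece your sketch would need to supply.
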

\begin{proof}
    See \cref{proof:prop:mtm non uniform}.
\end{proof}

When the target distribution $\pi$ is heavy-tailed, such as multivariate Student's $t$ distributions with any degrees of freedom, \cref{Eq:finite exponential moments} does not hold anymore, which implies that MTM is not geometrically ergodic. Meanwhile, due to the compactness of the sphere, it is proven in \citet[Theorem 2.1]{yang2024} that SRWM can be uniformly ergodic for certain heavy-tailed targets. We next show that SMTM retains uniform ergodicity in such cases.
\begin{theorem}
\label{Thm:uniform ergodic}
    If $\pi(x)$ is continuous and positive in $\mathbb{R}^d$, $\sup_{x\in\mathbb{R}^d}\pi(x)(R^2+\|x\|^2)^d<\infty$, then SMTM, with fixed $N$ and globally or locally-balanced weight function, is uniformly ergodic.
\end{theorem}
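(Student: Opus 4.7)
I plan to work on the compact sphere $\mathbb{S}^d$, onto which SMTM is conjugated via $z=\SP^{-1}(x)$, and establish Doeblin's minorization for the resulting Markov chain with invariant measure $\pi_S$. Since $\SP$ is a bijection away from a null set, uniform ergodicity on the sphere transfers back to uniform ergodicity on $\mathbb{R}^d$. The hypothesis $\sup_x\pi(x)(R^2+\|x\|^2)^d<\infty$ is exactly $M:=\sup_{z\in\mathbb{S}^d}\pi_S(z)<\infty$, while $\pi_S$ is continuous and strictly positive on $\mathbb{S}^d\setminus\{\mathscr{N}\}$. The goal is therefore to exhibit $n\ge 1$, $\varepsilon>0$ and a probability measure $\nu$ on $\mathbb{S}^d$ with $P_S^n(z,\cdot)\ge\varepsilon\,\nu(\cdot)$ uniformly in $z$.

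First I choose a compact $A\subset\mathbb{S}^d\setminus\{\mathscr{N}\}$ on which $\pi_S$ is pinched between $0<m_A\le M_A<\infty$. Conditioning on the event that all candidates $\hat z_1,\dots,\hat z_N$ fall in $A$, the MTM mechanism is easy to control. In the globally balanced case, the probability of selecting any specific $\hat z_j$ equals $\pi_S(\hat z_j)/\sum_i\pi_S(\hat z_i)\ge m_A/(NM_A)$, and the acceptance probability collapses to
\[
\alpha(z,\hat z_j)=1\wedge\frac{\sum_{i=1}^N\pi_S(\hat z_i)}{\pi_S(z)+\sum_{i=1}^{N-1}\pi_S(z_i^*)}\;\ge\;\frac{Nm_A}{NM}=\frac{m_A}{M},
\]
using $\pi_S\le M$ everywhere. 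An essentially identical computation with square roots gives the same bound $m_A/M$ on $\alpha$ and the selection bound $\sqrt{m_A/M_A}/N$ in the locally balanced case. Multiplying these constants and integrating the proposal density then yields
\[
P_S(z,B)\;\ge\;\frac{m_A^2}{NM_AM}\,Q_S(z,A)^{N-1}\int_{B\cap A}q_S(z,\hat z)\,d\sigma(\hat z)
\]
for every Borel $B$, where $q_S$ is the density of $Q_S$ with respect to the Hausdorff measure $\sigma$ on $\mathbb{S}^d$. A uniform lower bound $Q_S(z,A)\ge c_1>0$ together with $q_S(z,\cdot)\ge c_2$ on $A$ would therefore deliver Doeblin's condition with $\nu=\sigma|_A/\sigma(A)$.

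The main obstacle is precisely these two uniform lower bounds on the proposal, because $Q_S(z,\cdot)$ is supported on the open hemisphere $\{\hat z\cdot z>0\}$ of the current state, so no fixed $A$ can satisfy them in a single step for every $z\in\mathbb{S}^d$. Exactly the same difficulty is overcome in \cite[Theorem~2.1]{yang2024} for SRWM by a two-step argument: from any $z$ one first reaches, with uniformly positive probability, an intermediate point $z'$ near the equator of $z$, and the hemisphere at $z'$ then covers a fixed target set $A$ regardless of the original $z$. The SMTM-specific contribution is only to verify that the additional MTM weighting and reference-candidate steps preserve this two-step minorization, which is exactly what the uniform selection and acceptance estimates derived above guarantee.
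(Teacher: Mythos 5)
Your spherical reduction, the choice of $M=\sup_z\pi_S(z)$, and the one-step MTM estimates (selection probability $\ge m_A/(NM_A)$, acceptance $\ge m_A/M$, and their locally balanced analogues) are correct and closely mirror the auxiliary lemma the paper proves (\cref{Lemma:uniform ergodic}). The gap is in the last paragraph, which is where the proof actually lives. Your proposed fix is a \emph{two-step} scheme: reach an intermediate $z'$ in the hemisphere of $z$, whose hemisphere ``then covers a fixed target set $A$ regardless of the original $z$.'' This is false, and not just for technical reasons. Since $\hat z=(z+\dee z)/\|z+\dee z\|$ with $\dee z$ tangent at $z$, the proposal $Q_S(z,\cdot)$ is supported in the open hemisphere $\{\hat z:\hat z^Tz>0\}$. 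Take $A$ a cap centred at $a_0$ and start at $z=-a_0$: any $z'$ reachable in one step satisfies $z'^Ta_0<0$, so $A\not\subset\HS(z',\epsilon')$ for small $\epsilon'$ and cap radius. More strongly, no two-step Doeblin minorization is possible at all: a two-step move from $z$ to $z'$ requires some $z_1$ with $z_1^Tz>0$ and $z_1^Tz'>0$, which is empty when $z'=-z$, and the two-step transition density vanishes continuously as $z'\to -z$; hence $\inf_z P_S^2(z,\cdot)$ cannot dominate $\varepsilon\nu$ for any fixed $\nu$ and $\varepsilon>0$. This is precisely why the paper (following the SRWM argument of \cite[Theorem 2.1]{yang2024}) uses a \emph{three}-step minorization $z\to z_1\to z_2\to z'$, with $z_1\in\HS(z,\epsilon')\backslash\AC(\epsilon)$, $z_2\in\HS(z',\epsilon')\backslash\AC(\epsilon)$ and the bridging condition $z_1\in\HS(z_2,\epsilon')$, whose joint measure is bounded below uniformly over all pairs $(z,z')$.

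A second, related issue is that your one-step bound is not in a form that can be chained. It is conditioned on all candidates landing in a \emph{fixed} compact $A$ and needs $Q_S(z,A)\ge c_1$ and a density lower bound $q_S(z,\cdot)\ge c_2$ on $A$ uniformly in $z$, which, as you note yourself, fail (and in fact $Q_S(z,A)=0$ when $A$ lies outside the hemisphere of $z$). The paper's \cref{Lemma:uniform ergodic} instead establishes the \emph{pointwise, state-dependent} comparison $Q_M(z,\hat z)\ge c\,Q_S(z,\hat z)$ for all $\hat z\notin\AC(\epsilon)$ (with $c$ built from $\delta_\epsilon,\delta'_{\epsilon'},S_{\epsilon,\epsilon',d},M$), by restricting only the auxiliary candidates to $\HS(z,\epsilon')\backslash\AC(\epsilon)$ and the reference draws to $\HS(\hat z,\epsilon')\backslash\AC(\epsilon)$. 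That form survives composition over the three steps and produces a minorizing measure with density proportional to $g(z')\propto\pi_S(z')\bigl(1\wedge\frac{(N-1)\delta_\epsilon+\pi_S(z')}{NM}\bigr)$ rather than the uniform measure on $A$. So the MTM-layer estimates you derived are essentially the right ingredients, but the claim that ``the SMTM-specific contribution is only to verify that the MTM weighting preserves the two-step minorization'' both misstates the geometry (three steps are needed) and skips the reformulation of the one-step bound that makes the chaining possible; as written, the Doeblin condition is not established.
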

\begin{proof}
    See \cref{proof:Thm:uniform ergodic}.
\end{proof}
\begin{remark}
    \cref{Thm:uniform ergodic} shows that SMTM is uniformly ergodic if $\pi$ is a multivariate Student's $t$ distribution with degrees of freedom no less than $d$. To achieve uniform ergodicity for target distributions with heavier tails, two approaches are available. First, one could apply a composition of the variable transformation proposed by \cite{johnson2012variable}, which transfers the target $\pi$ to a distribution that satisfies the conditions of \cref{Thm:uniform ergodic}, and the stereographic projection. Alternatively, one could leverage recent extensions of the stereographic projection by \citet{grazzi2026sub}, which directly establish uniform ergodicity for \emph{sub-Cauchy} distributions. In this paper, we focus exclusively on the traditional stereographic projection and leave a detailed investigation of these alternative approaches to future work.
\end{remark}

\section{Optimal Scaling for SMTM}
\label{Sec:optimal scaling}

In the context of MCMC methods, optimal scaling \citep{roberts97,roberts01} addresses the problem of choosing a proposal distribution that results in efficient sampling from the target distribution. Practical implementations of the MH algorithm often face challenges related to proposal step size: when the proposed jumps are too small, the Markov chain explores the state space very slowly, leading to inefficient sampling; if the proposed jumps are too large, the chain frequently proposes moves into low-probability regions, resulting in high rejection rates and long periods of the chain staying in the same state. Optimal scaling aims to find a balance: the optimal choice of proposal step size that maximizes the efficiency of the algorithm for exploring the state space.

A powerful theoretical tool used to study optimal scaling of MCMC is the analysis of diffusion limits \citep{roberts97}. This approach involves considering a sequence of target distributions in increasing dimension and analyzing the limiting behavior of the corresponding Markov chains. Another popular approach to studying optimal scaling is to maximize ESJD of the chain. Given a Markov chain $\{X^d(t)\}$ with a stationary distribution $\pi$, the ESJD is defined as
\begin{equation}
\label{Eq:original esjd}
    \ESJD:=\mathbb{E}_{X^d(t)\sim\pi}\mathbb{E}_{X^d(t+1)\,|\,X^d(t)}\left[\|X^d(t+1)-X^d(t)\|^2\right].
\end{equation}

The justification for using ESJD as an optimization criterion comes from its connection to diffusion limits. When weak convergence to a diffusion limit is established, the ESJD of the discrete-time Markov chain converges to the quadratic variation of the limiting diffusion \citep{roberts01,Yang2020}. A larger ESJD generally indicates that the Markov chain is exploring the state space more efficiently, avoiding both overly conservative moves and excessive rejection. For instance, as demonstrated in \citet[Section 5.4]{yang2024}, when the same optimal acceptance rate is maintained, the maximum ESJD achieved by SRWM exceeds that of RWM. This suggests that SRWM offers improved sampling efficiency compared to RWM.

In most existing literature of optimal scaling, the standard assumption for the target distribution $\pi$ is to have a product i.i.d.\,form \citep{roberts97}, that is
\begin{equation}\label{eq:iid}
    \pi(x)=\prod_{i=1}^d f(x_i),
\end{equation}
where $f$ is a one-dimensional probability density function. Recent work has shown that the product i.i.d.\,assumption can be relaxed \citep{Yang2020, Ning2024}. However, in this paper, we will restrict our attention to the case of product i.i.d.\,target distributions for ease of analysis.

In this section, we present our main results on the optimal scaling of SMTM with fixed $N$. We analyze the limiting acceptance rate and ESJD in \cref{Thm:accept rate} and \cref{Thm:esjd}, respectively. 
Other than \cref{eq:iid}, we adopt the same additional assumptions on the target density as outlined in \citet[Section 5.1]{yang2024}. Specifically, we assume $X\sim f$ satisfies
\begin{equation}\label{eq_finite_moment}
    \mathbb{E}_f\left[X^2\right]=\int x^2f(x)\dee x=1,\quad\mathbb{E}_f\left[X^6\right]<\infty.
\end{equation}
Furthermore we assume that $f'/f$ is Lipschitz continuous, $\lim_{x\to\pm\infty}xf'(x)=0$, and 
\begin{equation}
    \mathbb{E}_f\left[\left(\frac{f'(X)}{f(X)}\right)^8\right]<\infty,\quad\mathbb{E}_f\left[\left(\frac{f''(X)}{f(X)}\right)^4\right]<\infty,\quad\mathbb{E}_f\left[\left(\frac{Xf'(X)}{f(X)}\right)^4\right]<\infty.
\end{equation}
Notably, product i.i.d.\,distributions under these assumptions include a range of heavy-tailed distributions. For example, $f$ can be the density of Student’s $t$ distribution with degrees of freedom greater than $6$. As discussed in \citet[Section 5.2]{yang2024}, under the assumptions on $f$ in \cref{eq_finite_moment}, effective application of the SRWM algorithm requires scaling $R$ to be $O(d^{1/2})$, otherwise the transformed target will finally be concentrated at the north or south poles of the sphere. Thus, we assume that $R=\sqrt{\lambda d}$ for a fixed $\lambda>0$.

To study the limiting acceptance probability of SMTM, we focus on two related quantities. First, conditional on a candidate $\hat{z}_j$ having already been selected from the proposal pool, we consider the probability that this candidate $\hat{z}_j$ is accepted:
\begin{equation}
    \alpha_1^j:=\mathbb{P}(\text{accept }\hat{z}_j\mid \text{choose }\hat{z}_j, z).
\end{equation}
Clearly, $\alpha_1^j$ measures how likely the algorithm is to accept $\hat{z}_j$ once it has been chosen. Second, closely related to $\alpha_1^j$, we consider the probability that $\hat{z}_{j}$ is first selected from the proposal pool and subsequently accepted:
\begin{equation}
    \alpha_2^j:=\mathbb{P}(\text{choose }\hat{z}_j, \text{ accept }\hat{z}_j\mid z).
\end{equation}
Note that $\sum_j \alpha_2^j$ represents the probability that a proposal from the pool is accepted, which will be referred to as the (total) acceptance rate of SMTM.
Note that both $\alpha_1^j$ and $\alpha_2^j$ depend on the weight function. Therefore, we give their explicit expressions under both globally-balanced and locally-balanced weight functions in the following.
\begin{lemma}
\label{lemma:alpha computation}
    In the globally-balanced case, we have
    \begin{equation}
        \alpha_1^j=1\wedge\frac{\sum_{i=1}^N\frac{\pi_S(\hat{z}_i)}{\pi_S(z)}}{\sum_{i=1}^{N-1}\frac{\pi_S(z_i^{\ast})}{\pi_S(\hat{z}_j)}\frac{\pi_S(\hat{z}_j)}{\pi_S(z)}+1},\quad\alpha_2^j=\frac{\frac{\pi_S(\hat{z}_j)}{\pi_S(z)}}{\sum_{i=1}^N\frac{\pi_S(\hat{z}_i)}{\pi_S(z)}}\wedge\frac{\frac{\pi_S(\hat{z}_j)}{\pi_S(z)}}{\sum_{i=1}^{N-1}\frac{\pi_S(z_i^{\ast})}{\pi_S(\hat{z}_j)}\frac{\pi_S(\hat{z}_j)}{\pi_S(z)}+1};
    \end{equation}
    And in the locally-balanced case, we have
    \begin{equation}
    \label{Eq:LB alpha}
        \alpha_1^j=1\wedge\frac{\sum_{i=1}^N\sqrt{\frac{\pi_S(\hat{z}_i)}{\pi_S(z)}\frac{\pi_S(\hat{z}_j)}{\pi_S(z)}}}{\sum_{i=1}^{N-1}\sqrt{\frac{\pi_S(z_i^{\ast})}{\pi_S(\hat{z}_j)}\frac{\pi_S(\hat{z}_j)}{\pi_S(z)}}+1},\quad\alpha_2^j=\frac{\frac{\pi_S(\hat{z}_j)}{\pi_S(z)}}{\sum_{i=1}^N\sqrt{\frac{\pi_S(\hat{z}_i)}{\pi_S(z)}\frac{\pi_S(\hat{z}_j)}{\pi_S(z)}}}\wedge\frac{\frac{\pi_S(\hat{z}_j)}{\pi_S(z)}}{\sum_{i=1}^{N-1}\sqrt{\frac{\pi_S(z_i^{\ast})}{\pi_S(\hat{z}_j)}\frac{\pi_S(\hat{z}_j)}{\pi_S(z)}}+1}.
    \end{equation}
\end{lemma}
\begin{proof}
    See \cref{proof:lamma:alpha computation}.
\end{proof}

To express the scaling limits of acceptance rate and ESJD more concisely, we define the following two functions which will be later used to approximate $\alpha_1^j$ and $\alpha_2^j$, respectively: for each $j\in\{1,\dots,N\}$, define $\phi_1^j,\phi_2^j:\mathbb{R}^{2N-1}\to\mathbb{R}$, where $N$ is the number of candidates in each step, by
\begin{equation}
\label{eq:phi 1}
    \phi_1^j\left((x_i)_{i=1}^{N},(y_i)_{i=1}^{N-1}\right):=  
    \left\{
        \begin{aligned}
            &1\wedge\frac{\sum_{i=1}^{N}e^{x_i}}{\sum_{i=1}^{N-1}e^{x_j}e^{y_i}+1}, & \text{globally-balanced},\\
            &1\wedge\frac{\sum_{i=1}^N\sqrt{e^{x_i}e^{x_j}}}{\sum_{i=1}^{N-1}\sqrt{e^{x_j}e^{y_i}}+1}, & \text{locally-balanced}.
        \end{aligned}
        \right.
\end{equation}
\begin{equation}
\label{eq:phi 2}
    \phi_2^j\left((x_i)_{i=1}^{N},(y_i)_{i=1}^{N-1}\right):=\left\{
        \begin{aligned}
            &\frac{e^{x_j}}{\sum_{i=1}^{N}e^{x_i}}\wedge\frac{e^{x_j}}{\sum_{i=1}^{N-1}e^{x_j}e^{y_i}+1}, & \text{globally-balanced},\\
            &\frac{e^{x_j}}{\sum_{i=1}^N\sqrt{e^{x_i}e^{x_j}}}\wedge\frac{
        e^{x_j}}{\sum_{i=1}^{N-1}\sqrt{e^{x_j}e^{y_i}}+1}, & \text{locally-balanced}.
        \end{aligned}
        \right.
\end{equation}
The exponential terms in $\phi_1^j$ and $\phi_2^j$ arise from later approximations of the logarithm of the acceptance probabilities. Note that when $N=1$, both $\phi_1^j$ and $\phi_2^j$ reduce to the function $1 \wedge e^x$, which plays a central role in optimal scaling results for RWM \citep{roberts97}. In this sense, $\phi_1^j$ and $\phi_2^j$ can be viewed as natural extensions of this function to the multiple-proposal setting.
Next, we establish a few properties of these functions, such as boundedness and Lipschitz continuity.
\begin{lemma}
\label{Lemma:lipschitz}
    For any $j\in\{1,\dots,N\}$, $\phi_1^j,\phi_2^j$ are all bounded and Lipschitz continuous.
\end{lemma}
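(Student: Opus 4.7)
The plan is to verify boundedness by direct inspection and reduce the Lipschitz estimate to the elementary inequality ``each summand of a positive sum is at most the sum.'' Boundedness is immediate: the outer $1\wedge$ and nonnegativity of the ratio force $\phi_1^j\in[0,1]$, and in $\phi_2^j$ the first argument of the minimum, $e^{x_j}/\sum_{i}e^{x_i}$, already lies in $[0,1]$.

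For Lipschitz continuity the first step is to divide numerator and denominator by $e^{x_j}$ in every ratio, obtaining $\phi_1^j=1\wedge(A/B)$ and $\phi_2^j=1/\max(A,B)$, with $A:=\sum_{i=1}^{N}e^{x_i-x_j}$ and $B:=\sum_{i=1}^{N-1}e^{y_i}+e^{-x_j}$. The key structural observation is that the $i=j$ summand of $A$ equals $1$, so $A\geq 1$ and hence $\max(A,B)\geq 1$ everywhere; moreover on the active region $\{A\leq B\}$ of $\phi_1^j$ one also has $B\geq 1$. Both functions are continuous and piecewise smooth, so it suffices to bound their partial derivatives where they exist.

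For $\phi_2^j=1/\max(A,B)$ I would compute the partials separately on $\{A>B\}$ and $\{B>A\}$; each partial has the schematic form $(\text{summand of }A)/A^2$ or $(\text{summand of }B)/B^2$, e.g.\ $\partial_{x_j}(1/A)=(A-1)/A^2$ and $\partial_{y_i}(1/B)=-e^{y_i}/B^2$. Since every summand is bounded by the total sum and $A,B\geq 1$ in the respective regions, every such partial is bounded by $1$ in absolute value. For $\phi_1^j$ the gradient vanishes on $\{A>B\}$ where $\phi_1^j\equiv 1$, while on $\{A\leq B\}$ we have $\phi_1^j=A/B$, and the partials $\partial_{x_k}(A/B)=e^{x_k-x_j}/B$ for $k\neq j$ and $\partial_{y_i}(A/B)=-Ae^{y_i}/B^2$ are bounded by $1$ via the same summand-versus-sum estimate together with $A\leq B$.

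The main obstacle is the remaining partial $\partial_{x_j}(A/B)=-(A-1)/B+Ae^{-x_j}/B^2$, which naively contains the unbounded factor $e^{-x_j}$. The saving observation is that $e^{-x_j}$ is itself a summand of $B$, so $e^{-x_j}/B\leq 1$ and hence $|Ae^{-x_j}/B^2|\leq (A/B)(e^{-x_j}/B)\leq 1$; combined with $(A-1)/B\leq A/B\leq 1$ this gives $|\partial_{x_j}(A/B)|\leq 2$. Assembling these essentially bounded partials and integrating along line segments then yields global Lipschitz continuity for both $\phi_1^j$ and $\phi_2^j$.
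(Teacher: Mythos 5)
Your proof is correct and follows essentially the same route as the paper: split according to which branch of the minimum is active, bound the partial derivatives on the active region (using that each exponential summand is dominated by the full sum, with the ratio at most $1$ there), and patch the pieces together across the switching boundary. The normalization by $e^{x_j}$ giving $A\geq 1$ is a clean cosmetic simplification, and you explicitly treat $\phi_2^j$ (including the potentially problematic $e^{-x_j}$ term) where the paper only asserts it is analogous.
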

\begin{proof}
    See \cref{proof:Lemma:lipschitz}.
\end{proof}

Next, we establish the high-dimensional scaling limits of both $\alpha_1^j$ and $\alpha_2^j$ using functions $\phi_1^j$ and $\phi_2^j$, respectively. The results will be used to approximate the limiting ESJD of SMTM. Note that the convergence we established is in the $L^1$ norm, which is stronger than the weak convergence typically used in existing literature of optimal scaling analysis.
\begin{theorem}
\label{Thm:accept rate}
    Under the assumptions on $\pi$ at the beginning of \cref{Sec:optimal scaling}, suppose the current state of SMTM on $\mathbb{S}^d$ is $z=\SP^{-1}(X)$ and the radius parameter of the SMTM  is $R=\sqrt{\lambda d}$, where $\lambda>0$ is a fixed constant. Then, if either $\lambda\neq1$ or $f$ is not the standard Gaussian density, there exists a sequence of sets $\{F_d\}$ in $\mathbb{R}^d$ such that $\pi(F_d)\to1$ and for any $j\in\{1,\dots,N\}$,
    \begin{equation}
        \begin{split}
            \sup_{\SP(z)\in F_d}\mathbb{E}\left[\left|\alpha_1^j-\phi_1^j((W_i)_{i=1}^{N},(V_i)_{i=1}^{N-1})\right|\middle| z\right]=o(d^{-1/4}\log(d)),\\
            \sup_{\SP(z)\in F_d}\mathbb{E}\left[\left|\alpha_2^j-\phi_2^j((W_i)_{i=1}^{N},(V_i)_{i=1}^{N-1})\right|\middle| z\right]=o(d^{-1/4}\log(d)),
        \end{split}
    \end{equation}
    where $(W_i)_{i=1}^{N}$ and $(V_i)_{i=1}^{N-1}$ are conditionally i.i.d.\,random variables with distribution $\mathcal{N}(\mu,\sigma^2)$,
    \begin{equation}
    \label{Eq:normal distribution}
        \mu=\frac{\ell^2}{2}\left(\frac{4\lambda}{(1+\lambda)^2}-\mathbb{E}[((\log f)')^2]\right),\quad\sigma^2=\ell^2\left(\mathbb{E}[((\log f)')^2]-\frac{4\lambda}{(1+\lambda)^2}\right),
    \end{equation}
    and $\ell$ is a re-parametrization of $h$ satisfying
    \begin{equation}
    \label{Eq:reparametrization}
        \frac{1}{\sqrt{1+h^2(d-1)}}=1-\frac{\ell^2}{2d}\frac{4\lambda}{(1+\lambda)^2}.
    \end{equation}
\end{theorem}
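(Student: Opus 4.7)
The plan is to reduce the theorem to an $L^1$ estimate on log-density ratios and then carry out a scaling expansion of a single log-ratio. Set
\[
\tilde{W}_i := \log\frac{\pi_S(\hat{z}_i)}{\pi_S(z)}, \qquad \tilde{V}_i := \log\frac{\pi_S(z_i^{\ast})}{\pi_S(\hat{z}_j)}.
\]
By the global-balance expression in \cref{lemma:alpha computation}, exactly
\[
\alpha_1^j = \phi_1^j\bigl((\tilde{W}_i)_{i=1}^N,(\tilde{V}_i)_{i=1}^{N-1}\bigr), \qquad \alpha_2^j = \phi_2^j\bigl((\tilde{W}_i)_{i=1}^N,(\tilde{V}_i)_{i=1}^{N-1}\bigr).
\]
Since $\phi_1^j, \phi_2^j$ are Lipschitz (\cref{Lemma:lipschitz}), it suffices to build a coupling with conditionally i.i.d.\,$\mathcal{N}(\mu,\sigma^2)$ variables $W_i,V_i$ such that $\mathbb{E}[|\tilde{W}_i-W_i|\mid X]+\mathbb{E}[|\tilde{V}_i-V_i|\mid X]=o(d^{-1/4}\log d)$ uniformly over a set $F_d$ with $\pi(F_d)\to 1$. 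The $2N-1$ candidates are conditionally independent given the base points, so such a coupling can be built coordinate by coordinate and the joint $L^1$ bound follows.

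The technical heart is a Taylor expansion of a single $\tilde{W}$ in the Gaussian noise $d\tilde{z}\sim\mathcal{N}(0,h^2 I_{d+1})$ generating $\hat{z}$. Writing $\pi_S(z)\propto\pi(\SP(z))(R^2+\|\SP(z)\|^2)^d$ and $\hat{z}=(z+dz)/\|z+dz\|$ with $dz$ the projection of $d\tilde{z}$ onto $T_z\mathbb{S}^d$, expand
\[
\tilde{W} = L(d\tilde{z}) + \tfrac{1}{2}Q(d\tilde{z}) + \mathrm{Rem},
\]
where $L$ is linear and $Q$ is quadratic in $d\tilde{z}$. With $h^2=O(1/d)$ from \eqref{Eq:reparametrization} and $R=\sqrt{\lambda d}$, the linear part is conditionally centred Gaussian whose variance converges to $\sigma^2$, and $Q(d\tilde{z})$ concentrates on the deterministic constant $\mu$ on the typical set $F_d$. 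The two contributions in \eqref{Eq:normal distribution}, $\mathbb{E}[((\log f)')^2]$ and $4\lambda/(1+\lambda)^2$, come respectively from the log-density of $\pi$ and from the stereographic-Jacobian factor. One then \emph{defines} the Gaussian $W$ as the linear part of $\tilde W$ plus $\mu$, so that $\tilde{W}-W$ equals a centred quadratic fluctuation plus a cubic remainder; the excluded case $(\lambda,f)=(1,\mathcal{N}(0,1))$ is exactly where $\sigma^2=0$ and the limit degenerates.

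Bounding $\tilde{W}-W$ in $L^1$ then decomposes into (i) a quadratic-fluctuation term, which is an empirical average over $d$ near-independent coordinates of quantities like $(f'/f)^2(X_i)$ and $(Xf'/f)(X_i)$ together with Jacobian cross-terms, controlled at rate $d^{-1/2}$ by the fourth-moment hypotheses on $f'/f$, $f''/f$, $Xf'/f$; and (ii) a cubic remainder, controlled at the same rate using Lipschitz continuity of $f'/f$ and the $L^4$/$L^8$ moment bounds. The set $F_d$ is chosen as the standard-type typical set on which $\|X\|^2/d$ is close to $1$ and the empirical averages concentrate; $\pi(F_d)\to 1$ follows by Chebyshev. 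The main obstacle is upgrading the usual weak-convergence analysis to $L^1$ uniformly over $F_d$: one must (a) control the quadratic form in $L^1$ rather than in probability, which forces the fourth-moment hypotheses, and (b) handle the backward ratios $\tilde{V}_i$, which are centred around $\hat{z}_j$ rather than $z$; for (b) one verifies that the $O(d^{-1/2})$ discrepancy between $\hat{z}_j$ and $z$ perturbs the coefficients of the expansion only by a negligible amount, so the same limiting $(\mu,\sigma^2)$ still governs the backward log-ratios.
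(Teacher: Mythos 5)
Your reduction is the same as the paper's: write $\alpha_1^j,\alpha_2^j$ as $\phi_1^j,\phi_2^j$ evaluated at the log-ratios (via \cref{lemma:alpha computation}), use the Lipschitz property of \cref{Lemma:lipschitz}, and reduce to an $L^1$ coupling of each log-ratio with a $\mathcal{N}(\mu,\sigma^2)$ variable. The difference is that the paper does \emph{not} re-derive the one-proposal coupling: it imports it from \cite[Lemma 5.1]{yang2024} (stated as \cref{Lemma:accept rate}), together with two facts extracted from that lemma's proof, namely the $L^1$ bound \eqref{Eq:accept convergence} for the \emph{truncated} log-likelihood $\log\frac{\pi_S(\hat z)}{\pi_S(z)}\mathbf{1}_{\hat z_{d+1}\le 1-\epsilon}$ and the bound \eqref{Eq:north pole} $\sup_{X\in F_d}\mathbb{P}(\hat z_{d+1}>1-\epsilon)=o(d^{-1/2})$, whereas you propose to redo the whole scaling expansion from scratch.

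There are two concrete gaps in your version. First, your target estimate $\mathbb{E}[|\tilde W_i-W_i|\mid X]=o(d^{-1/4}\log d)$ for the \emph{untruncated} log-ratio is not what the available analysis gives and is doubtful as stated: on the event that a proposal lands near the north pole, $\|\hat x\|$ explodes and $\log\pi_S(\hat z)/\pi_S(z)$ can be of order $d\log\|\hat x\|$, so the rare-event contribution to the $L^1$ distance is not obviously negligible at the required rate under the stated moment assumptions. The paper avoids ever bounding the raw log-ratio there: it couples only the truncated quantities, and on the pole event it uses boundedness of $\phi_1^j,\phi_2^j$ together with the union bound and \eqref{Eq:north pole}. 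Your sketch needs this truncation device; without it the step ``the joint $L^1$ bound follows coordinate by coordinate'' does not go through. Second, the backward ratios $\tilde V_i$ and their Gaussian partners are naturally constructed conditional on $\hat X_j$, not on $X$. The paper handles this by restricting to the event $\hat X_j\in F_d$ (paying an extra $\sup_{X\in F_d}\mathbb{P}(\hat X_j\in F_d^c\mid X)$ term) and applying the same one-step lemma with base point $\hat X_j$ via a tower argument, and then separately verifies that, because $\mu$ and $\sigma^2$ are constants, the $V_i$ are still i.i.d.\ $\mathcal{N}(\mu,\sigma^2)$ and conditionally independent given $X$ (using the explicit representation $W_i=\mu-\sigma^2\tilde U_i$ with direction $\tilde v$ depending only on $X$). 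Your phrase ``the $O(d^{-1/2})$ discrepancy perturbs the coefficients only negligibly'' presupposes exactly this control of $\hat X_j$ and skips the exact-Gaussianity/conditional-independence verification; relatedly, defining $W$ as ``the linear part plus $\mu$'' gives a variance depending on $X$ and $d$, so an extra (easy but necessary) adjustment is needed to land on exactly $\mathcal{N}(\mu,\sigma^2)$ as the theorem requires.
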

\begin{proof}
    See \cref{proof:Thm:accept rate}.
\end{proof}

\begin{remark} Note that \cref{Thm:accept rate} generalizes \citet[Lemma 5.1]{yang2024}.
    When $N=1$, $\phi_1^1(x)=\phi_2^1(x)=1\wedge e^x$, SMTM in \cref{alg:SMTM} reduces to SRWM in \cref{alg:SPS} and \cref{Thm:accept rate} reduces to \citet[Lemma 5.1]{yang2024}.
\end{remark}

Establishing diffusion limits for stereographic MCMC algorithms presents significant technical challenges. For instance, to establish such a limit in the original stereographic MCMC paper, \citet[Theorem 5.2]{yang2024} had to introduce a revised version of SRWM. However, they also demonstrated that the limiting ESJD can be relatively readily established, providing a valuable diagnostic for understanding algorithmic performance and guiding parameter tuning. Motivated by this precedent, we adopt the same perspective in this paper, utilizing ESJD as a key analytical tool in our study of SMTM. Since the proposals in SMTM are conditionally i.i.d., the original ESJD in \cref{Eq:original esjd} can be rewritten as
\begin{equation}
    \ESJD_{\SMTM}=\mathbb{E}_{X\sim\pi}\left[\sum_{j=1}^N\mathbb{E}_{\hat{X}_j\mid X}\left[\|\hat{X}_j-X\|^2\alpha_2^j\middle| X\right]\right]=N\mathbb{E}_{X\sim\pi}\left[\mathbb{E}_{\hat{X}_1\mid X}\left[\|\hat{X}_1-X\|^2\alpha_2^1\middle| X\right]\right].
\end{equation}
Using $\mathbb{E}[\|\hat{X}_1-X\|^2]\to\ell^2$ and \cref{Thm:accept rate}, we can obtain a widely-used approximation of the limiting ESJD of SMTM:
\begin{equation}
\label{Eq:approx esjd}
    \begin{aligned}
        \ESJD_{\SMTM}&\approx N\mathbb{E}\left[\|\hat{X}_1-X\|^2\right]\cdot\mathbb{E}\left[\alpha_2^1\right]\\
        &\xrightarrow{d\to\infty}N\ell^2\mathbb{E}_{X\sim\pi}\left[\mathbb{E}\left[\phi_2^1((W_i)_{i=1}^N,(V_i)_{i=1}^{N-1})\middle| X\right]\right]=:\widetilde{\ESJD}.
    \end{aligned}
\end{equation}
Although explicit expressions for the expectations in \cref{Eq:approx esjd} are not available, we can still compare the maximum (approximate) ESJD of SMTM and MTM. In particular, we show that the maximum approximate ESJD of SMTM is always no smaller than that of MTM in \cref{Thm:maxesjd}. Moreover, they can also be computed numerically, allowing us to study their properties through simulations. Later in \cref{Sec:numerical}, we present a numerical study to illustrate the behavior of the ESJD under specific scenarios.

\begin{theorem}
\label{Thm:maxesjd}
    Under the same assumptions in \cref{Thm:accept rate} and denote the mean value of $f$ as $\mathbb{E}_f[X]=m$, then we have
    \begin{equation}
        \max_\ell\widetilde{\ESJD}=\frac{1}{1-\alpha\cdot\beta\cdot\gamma}\cdot\left[\max_\ell\lim_{d\to\infty}\ESJD_{\MTM}\right],\quad\alpha,\beta,\gamma\in[0,1],
    \end{equation}
    where $\alpha=\frac{4\lambda}{(1+\lambda)^2}, \beta=1-m^2, \gamma=\frac{1}{(1-m^2)\mathbb{E}\left[((\log f)')^2\right]}$.
\end{theorem}
\begin{proof}
    See \cref{proof:thm:maxesjd}.
\end{proof}

\cref{Thm:maxesjd} highlights the theoretical robustness and efficiency gains of SMTM over MTM. Specifically, we establish that $\max_\ell\widetilde{\ESJD}\ge \max_\ell\lim_{d\to\infty}\ESJD_{\MTM}$, meaning SMTM dominates MTM in terms of ESJD. Crucially, the performance gap between the two algorithms is governed by three underlying geometric and distributional factors: $\alpha$, $\beta$, and $\gamma$. Each factor takes values in $[0, 1]$ and acts as a penalty for a specific type of misspecification. The equality $\max_\ell\widetilde{\ESJD} = \max_\ell\lim_{d\to\infty}\ESJD_{\MTM}$ holds if and only if the product $\alpha \cdot \beta \cdot \gamma = 0$. We can interpret these three factors as follows:

    \begin{itemize}
        \item \emph{Radius misspecification ($\alpha$):} Recalling that $R=\sqrt{\lambda d}$, the term $\alpha=\frac{4\lambda}{(1+\lambda)^2}$ measures the penalty for sub-optimal radius parameter of SMTM. It attains its maximum $\alpha=1$ when $\lambda=1$ (meaning the radius parameter is perfectly scaled). Conversely, as $\lambda \to 0$ or $\lambda \to \infty$, we have $\alpha \to 0$, meaning that the benefit of the stereographic construction vanishes and SMTM naturally degenerates to the Euclidean MTM.
        
        \item \emph{Location misspecification ($\beta$):} Recall that $\mathbb{E}_f[X]=m$, so the term $\beta=1-m^2$ measures the penalty for mis-aligning the sphere's center at $(0,\dots,0)^T\in\mathbb{R}^d$ relative to the mean of the target distribution at $(m,\dots,m)^T\in\mathbb{R}^d$. Since we assume $\mathbb{E}_f\left[X^2\right]=1$, $\beta$ represents the variance of the marginal target $f$. This penalty is minimized (i.e., $\beta$ is maximized at $1$) when $m=0$, which occurs when the center of the stereographic projection exactly matches the mean of the target distribution.
        
        \item \emph{Distribution misspecification ($\gamma$):} The term $\gamma=\frac{1}{(1-m^2)\mathbb{E}\left[((\log f)')^2\right]}$ measures the penalty associated with the target distribution's departure from isotropy (in the product i.i.d.\,case, Gaussianity). By Cauchy--Schwarz inequality and integration by parts, we can bound $1/\gamma$ from below:
        \begin{equation*}
        \frac{1}{\gamma}=\left[\int_{\mathbb{R}} \left(\frac{f'(x)}{f(x)}\right)^2f(x)\dee x\right]\cdot\left[ \int_{\mathbb{R}} (x-m)^2 f(x)\dee x\right]\geq \left[\int_{\mathbb{R}} (x-m)f'(x)\dee x\right]^2=\left[\int_{\mathbb{R}} f(x)\dee x\right]^2=1.
        \end{equation*}
        This proves that $\gamma \leq 1$. The Cauchy--Schwarz inequality is tight if and only if $(\log f)' \propto (x-m)$. Because $f$ has a full support on $\mathbb{R}$ with $\mathbb{E}_f\left[X^2\right]=1$, $(\log f)' \propto (x-m)$ holds if and only if $f$ is Gaussian with mean $m$ and variance $1-m^2$. 
    \end{itemize}

Next, under the assumption that $\lambda=1$, or equivalently $R=\sqrt{d}$, we show that the limit of ESJD of SMTM is indeed the approximation in \cref{Eq:approx esjd}.
\begin{theorem}
\label{Thm:esjd}
    Under the assumptions on $\pi$ in \cref{Sec:optimal scaling}, suppose that the current state $X$ is in the stationary phase, the radius parameter of SMTM is chosen as $R=\sqrt{d}$ and $f$ is not the probability density function of the standard Gaussian distribution, then 
    \begin{equation}
        \lim_{d\to\infty}\ESJD_{\SMTM}(\ell)=\widetilde{\ESJD}(\ell),
    \end{equation}
    for any given $\ell$, where $(W_i)_{i=1}^{N}$ and $(V_i)_{i=1}^{N-1}$ are conditionally normal i.i.d.\,random variables with distribution in \cref{Eq:normal distribution}.
\end{theorem}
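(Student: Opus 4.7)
The starting point is the identity
$$\ESJD \;=\; N\, \mathbb{E}_{X \sim \pi}\!\left[\mathbb{E}\!\left[\|\hat{X}_1 - X\|^2\, \alpha_1^1 \,\middle|\, X\right]\right],$$
and the goal is to rigorously justify the decoupling heuristic \eqref{Eq:approx esjd}. The overall plan is to combine the conditional $L^1$ convergence of $\alpha_1^1$ to $\phi_1^1((W_i),(V_i))$ supplied by \cref{Thm:accept rate} with a separate concentration estimate for the squared proposal displacement under the scaling $R=\sqrt{d}$, and then to dispose of the exceptional set $F_d^c$ via a uniform-in-$d$ integrability argument.

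First, I would establish that for $R=\sqrt{d}$, uniformly over $X\in F_d$,
$$\mathbb{E}\!\left[\|\hat{X}_1-X\|^2 \mid X\right] \longrightarrow \ell^2, \qquad \Var\!\left(\|\hat{X}_1-X\|^2 \mid X\right) \longrightarrow 0.$$
Writing $\hat{X}_1=\SP(\hat{z}_1)$ explicitly in terms of the tangential Gaussian increment in \cref{alg:SPS} and using the reparametrisation \eqref{Eq:reparametrization}, this reduces to a law-of-large-numbers computation over the $d+1$ Gaussian coordinates of $\dee\tilde{z}$. The choice $\lambda=1$ is crucial here: it forces $R^2+\|X\|^2$ to concentrate around $2d$ under $\pi$, which tames otherwise fatal fluctuations of the stereographic Jacobian.

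On $F_d$, I would then use the decomposition
$$\|\hat{X}_1-X\|^2\, \alpha_1^1 \;=\; \ell^2\, \alpha_1^1 \;+\; \bigl(\|\hat{X}_1-X\|^2 - \ell^2\bigr)\, \alpha_1^1.$$
For the first summand, the boundedness of $\phi_1^1$ from \cref{Lemma:lipschitz} combined with the uniform $L^1$ bound of \cref{Thm:accept rate} yields $\mathbb{E}[\alpha_1^1\mid X]\to \mathbb{E}[\phi_1^1((W_i),(V_i))\mid X]$ uniformly over $X\in F_d$. For the second summand, since $\alpha_1^1\in[0,1]$, Cauchy--Schwarz and the concentration estimate above give
$$\bigl|\mathbb{E}[(\|\hat{X}_1-X\|^2-\ell^2)\,\alpha_1^1 \mid X]\bigr| \;\leq\; \sqrt{\Var(\|\hat{X}_1-X\|^2 \mid X)} \;\longrightarrow\; 0$$
uniformly over $F_d$.

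The final and hardest step is to show that the contribution from $X\in F_d^c$ to the ESJD is negligible after averaging over $\pi$. Because $\SP$ maps neighbourhoods of the north pole to the exterior of large balls in $\mathbb{R}^d$, the squared jumping distance is not uniformly bounded, so $\pi(F_d^c)\to 0$ alone does not suffice; one must establish uniform integrability of $\|\hat{X}_1-X\|^2\,\alpha_1^1$ under the joint law of $(X,\hat{X}_1)$. This is where the detailed properties of the stereographic proposal, together with the Jacobian weight $(R^2+\|\cdot\|^2)^d$ in $\pi_S$, are used to bound higher moments. The requirement $f\ne\mathcal{N}(0,1)$ is needed precisely so that $\sigma^2$ in \eqref{Eq:normal distribution} is strictly positive, making the asymptotic reduction provided by \cref{Thm:accept rate} non-degenerate.
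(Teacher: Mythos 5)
Your route is genuinely different from the paper's, but as written it leaves the two load-bearing steps unproven. The paper never works with the full norm $\|\hat{X}_1-X\|^2$ at all: it uses the product structure to write $\ESJD=Nd\,\mathbb{E}[(\hat{X}_{1,1}-X_1)^2\alpha_1^1\mid X]$, truncates on $F_d$ and $\bar F_d$ exactly as in the SRWM analysis of Yang et al., replaces the log-ratios inside $\phi_1^1$ by Gaussian surrogates via \cref{Lemma:lipschitz} and the $L^1$ estimate behind \cref{Thm:accept rate}, and then — this is the key idea — builds couplings $(\tilde X_j,\bar X_j,\tilde W_i,\bar W_i)$ in which the first-coordinate displacement is made conditionally independent of the variables entering $\phi_1^1$, so that $\mathbb{E}[(\hat X_{1,1}-X_1)^2\phi_1^1\mid X]$ factorizes \emph{exactly} into $\mathbb{E}[(\hat X_{1,1}-X_1)^2]\,\mathbb{E}[\phi_1^1\mid X]$. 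No concentration of the jump size is needed (indeed a single coordinate does not concentrate). You instead propose to decouple via concentration of the whole norm: $\mathbb{E}[\|\hat X_1-X\|^2\mid X]\to\ell^2$ and $\Var(\|\hat X_1-X\|^2\mid X)\to0$ uniformly on $F_d$, plus Cauchy--Schwarz. This could in principle work, but the variance statement is precisely the hard analytic content and you dismiss it as ``a law-of-large-numbers computation''. It is not routine: the Euclidean displacement is a nonlinear functional of the tangential Gaussian step involving the renormalisation $\|z+\dee z\|$ and the factors $(1-\hat z_{d+1})^{-1}$, the induced proposal on $\mathbb{R}^d$ has only polynomially many moments (cf.\ the paper's \cref{lemma:SPS finite moments}), and near-pole events where $\|\hat X_1-X\|^2$ blows up must be shown to contribute negligibly to the conditional \emph{fourth} moment, uniformly over $F_d$. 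Neither this paper nor the cited SRWM work supplies such a fourth-moment estimate; only convergence of the expectation is available, which is why the paper takes the coupling route.

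The second gap is the $F_d^c$ contribution: you correctly identify that $\pi(F_d^c)\to0$ does not suffice because $\mathbb{E}[\|\hat X_1-X\|^2\mid X]$ is unbounded in $X$, but you only name the required uniform integrability of $X\mapsto\mathbb{E}[\|\hat X_1-X\|^2\alpha_1^1\mid X]$ under $\pi$ without giving any bound; the paper disposes of this by the coordinate-level truncation inequality (the $o(d^{-1})$ term after dividing by $Nd$) inherited from the SRWM proof, and your argument has no substitute for it. Two smaller points: your explanation of why $\lambda=1$ matters is off — $R^2+\|X\|^2$ concentrates at $(1+\lambda)d$ for \emph{any} fixed $\lambda$, so concentration is not what singles out $R=\sqrt d$; and in the Cauchy--Schwarz step the bound should involve the conditional mean-square error $\mathbb{E}[(\|\hat X_1-X\|^2-\ell^2)^2\mid X]$, i.e.\ variance plus squared bias, not the variance alone (harmless given your first claim, but worth stating correctly). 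In short: a defensible alternative strategy, but the concentration lemma and the uniform-integrability step that it hinges on are asserted rather than proved, and they are where essentially all the work lies.
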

\begin{proof}
    See \cref{proof:thm:esjd}.
\end{proof}

\section{Large $N$ Regime}
\label{Sec:n to infty}

Exploring the behavior of SMTM in the regime of large $N$ is of both theoretical and practical interest. For example, \cite{Gagnon2023} compare the performance of MTM under different weight functions and address its pathological behavior in the large $N$ regime. In this section, we perform a similar study. First, when $N\to\infty$, the limiting acceptance rates of SMTM for both globally-balanced and locally-balanced cases are established in \cref{prop:accept rate converge}. We find that, for large enough $N$, using the same step size, the acceptance rate of SMTM in the locally-balanced case is strictly higher than that in the globally-balanced case. Moreover, in \cref{sec:pathological} we prove that SMTM avoids some of the pitfalls of Euclidean MTM, such as pathological behavior in the transient phase discovered by \citet{Gagnon2023}. We also discuss some open problems, such as the spectral gap of multiple-proposal algorithms studied by \citet{pozza2024fundamental}.

\subsection{Limiting Acceptance Rates}
The acceptance of a proposal in multiple-try MCMC can be viewed as choosing one proposal from all candidates and accepting it. The acceptance rate of SMTM can be computed as $\sum_{j=1}^N\alpha_2^j$. By Theorem \ref{Thm:accept rate}, this acceptance rate can be approximated by \begin{equation}
    \sum_{j=1}^N\mathbb{E}\left[\phi_2^j((W_i)_{i=1}^N,(V_i)_{i=1}^{N-1})\middle| X\right].
\end{equation}
It is quite challenging to explicitly express $\mathbb{E}[\phi_2^j((W_i)_{i=1}^N,(V_i)_{i=1}^{N-1})\mid X]$. However, rather than attempting to derive an exact formula, we can focus on understanding its asymptotic behavior as $N$ approaches infinity. Interestingly, in the globally-balanced case, the optimal acceptance rate of SMTM, when the ESJD is maximized, ultimately converges to the same limiting value as that of SRWM, which is approximately $0.234$ \cite[Corollary 5.2]{yang2024}. In contrast, in the locally-balanced case, the acceptance rate converges to $1$.

\begin{proposition}
\label{prop:accept rate converge}
    Assume $h$ is chosen as in \cref{Eq:reparametrization} with $\ell$ fixed. Then, as $N\to\infty$, we have in the globally-balanced case, 
    \begin{equation}
        \sum_{j=1}^{N}\mathbb{E}\left[\phi_2^j((W_i)_{i=1}^N,(V_i)_{i=1}^{N-1})\middle| X\right]\to\mathbb{E}\left[1\wedge\exp(W_1)\middle| X\right].
    \end{equation}
    in the locally-balanced case
    \begin{equation}
    \label{eq:accept rate large N}
        \sum_{j=1}^{N}\mathbb{E}\left[\phi_2^j((W_i)_{i=1}^N,(V_i)_{i=1}^{N-1})\middle| X\right]\to 1, 
    \end{equation}
    where $W_i$ and $V_i$ are the same as in \cref{Thm:accept rate}.
\end{proposition}
\begin{proof}
    See \cref{proof:prop:accept rate converge}.
\end{proof}

In both globally and locally-balanced cases, we use the re-parametrization of $h$ in \cref{Eq:reparametrization}. When $\ell$ is fixed as a constant, then by \citet[Eq.(15)]{yang2024}, $h\approx\frac{\ell}{d}\sqrt{4\lambda/(1+\lambda)^2}$, implying that $h$ is scaled as $O(d^{-1})$. \cref{eq:accept rate large N} suggests that, in the locally-balanced case, $h$ should follow a scaling different from $O(d^{-1})$ to obtain a non-degenerate limit for the acceptance rate. We conjecture that the appropriate scaling is $O(d^{-1/3})$, which is the same as the scaling order of common gradient-based MCMC \citep{roberts98}. However, establishing this rigorously is challenging due to the geometric complexity introduced by the stereographic transformation. We leave a full theoretical analysis for future work.

\subsection{Pathological Performance and Spectral Gap}
\label{sec:pathological}
It is shown in \citet[Proposition 3]{Gagnon2023} that when the weight function is chosen as globally-balanced, MTM exhibits pathological behavior during its convergence phase. Specifically, when the initial state is in the tail, the expected acceptance probability approaches 0 in the large $N$ regime, causing the chain to get stuck. While for the locally-balanced weight function, \cite{Gagnon2023} present some numerical results suggesting that as the initial state moves further into the tail, the acceptance rate does not converge to 0. A rigorous lower bound of the acceptance rate has recently been established for Gaussian targets by \cite{caprio2025}.

In this section, we show that the acceptance rate of SMTM will always have a positive lower bound. Therefore, SMTM will not exhibit the pathological performance mentioned in \cite{Gagnon2023} in either globally or locally balanced cases. In the following, we consider two cases of tails: $\pi(x)\propto\|x\|^{-\alpha},\alpha>2d$ and $\pi(x)\propto\exp(-\|x\|^\beta),\beta>0$. The pathological behavior established in \citet[Proposition 3]{Gagnon2023} arises in the setting where the target distribution is standard normal (i.e., $\beta=2$).

Firstly, we study the weak convergence of the proposal distribution in SMTM. According to \cite{liu2000mtm, Gagnon2023}, the proposal distribution of MTM weakly converges to a limiting distribution when $N\to\infty$. We show that the weak convergence still holds for SMTM. Define the ideal scheme (\cref{alg:ideal}) as a random-walk-type algorithm similar to SRWM but using a limiting proposal distribution $Q_\infty$
\begin{equation}
    Q_{\infty}(z,\hat{z}):=\frac{\omega(z,\hat{z})Q_S(z,\hat{z})}{\int_{\mathbb{S}^d}\omega(z,\hat{z})Q_S(z,\hat{z})\dee\hat{z}},
\end{equation}
where $Q_S(\cdot,\cdot)$ is the proposal distribution of the SRWM algorithm, assuming that the denominator exists and is finite.
\begin{algorithm}[]
\caption{Ideal Scheme}
\label{alg:ideal}
\begin{algorithmic}
    \State \textbf{Input: }Current state $X^d(t)=x$, target distribution $\pi$ on $\mathbb{R}^d$ (or $\pi_S$ on $\mathbb{S}^d$).
    \begin{itemize}
        \item Let $z=\SP^{-1}(x)$;
        \item Sample $\hat{z}$ with $Q_\infty(z,\hat{z})$;
        \item Define the proposal as $\hat{x}=\SP(\hat{z})$;
        \item $X^d(t+1)=\hat{x}$ with probability
        \begin{equation}\label{eq:alpha_inf}
            \alpha_\infty(z,\hat{z})=1\wedge\frac{\pi_S(\hat{z})Q_\infty(\hat{z},z)}{\pi_S(z)Q_\infty(z,\hat{z})};
        \end{equation}
        otherwise $X^d(t+1)=x$.
    \end{itemize}
\end{algorithmic}
\end{algorithm}

Write the Markov chain simulated by SMTM with fixed $N$ as $\{X_N(m):m\in\mathbb{N}\}$ and the Markov chain simulated by the ideal scheme in \cref{alg:ideal} as $\{X_{\infty}(m):m\in\mathbb{N}\}$. Following a similar proof as in \citet[Proof of Theorem 1]{Gagnon2023}, we establish in the following the weak convergence result for our ideal scheme.

\begin{theorem}
\label{thm:infinity weak convergence}
    Assume that $\mathbb{E}[\omega(Z,\hat{Z})^4]<\infty$ and $\mathbb{E}[\omega(Z,\hat{Z})^{-4}]<\infty$ where $Z\sim\pi_S$ and $\hat{Z}\mid Z\sim Q_S(Z,\cdot)$, then as $N\to\infty$, $\{X_N(m)\}$ converges weakly to $\{X_{\infty}(m)\}$ provided that $X_N(0)\sim\pi$ and $X_{\infty}(0)\sim\pi$.
\end{theorem}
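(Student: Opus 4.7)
The plan is to mirror the argument of \cite[Theorem 1]{Gagnon2023}, transplanted onto the sphere with the SRWM proposal $Q_S$: first establish pointwise (in the current state) convergence of the SMTM one-step transition kernel to that of the ideal scheme, then lift this to weak convergence of the whole process using stationarity and the Markov property.

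The first ingredient is a law-of-large-numbers for the weight sums. For $\pi_S$-a.e.\ $z$ the moment assumption $\mathbb{E}[\omega(Z,\hat{Z})^4]<\infty$ gives, by Fubini, $\mathbb{E}_{\hat{Z}\sim Q_S(z,\cdot)}[\omega(z,\hat{Z})^2]<\infty$, hence $\bar{W}_N(z):=\tfrac{1}{N}\sum_{i=1}^N\omega(z,\hat{z}_i)\to m(z):=\mathbb{E}_{Q_S(z,\cdot)}[\omega(z,\hat{Z})]$ in $L^2$; the same holds for the backward average $\tfrac{1}{N}\bigl(\omega(\hat{z}_j,z)+\sum_{i=1}^{N-1}\omega(\hat{z}_j,z_i^*)\bigr)$ conditional on $\hat{z}_j$. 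The reciprocal moment assumption $\mathbb{E}[\omega(Z,\hat{Z})^{-4}]<\infty$ ensures $m(z)>0$ and that $1/\bar{W}_N(z)$ is uniformly integrable in $N$, so quotients of these averages may be passed to the limit safely.

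Combining these, the selected-proposal conditional density can be written as
\[
p_N(\hat{z}\mid z)=N\,Q_S(z,\hat{z})\,\mathbb{E}\!\left[\frac{\omega(z,\hat{z})}{\omega(z,\hat{z})+\sum_{i=2}^N\omega(z,\hat{z}_i)}\,\middle|\,\hat{z}\right],
\]
which converges by dominated convergence to $\omega(z,\hat{z})Q_S(z,\hat{z})/m(z)=Q_\infty(z,\hat{z})$; meanwhile, dividing numerator and denominator of the SMTM acceptance ratio by $N$ and applying the same LLN yields
\[
\frac{\pi_S(\hat{z}_j)\omega(\hat{z}_j,z)}{\pi_S(z)\omega(z,\hat{z}_j)}\cdot\frac{\bar{W}_N(z)}{(1/N)\bigl(\omega(\hat{z}_j,z)+\sum_{i=1}^{N-1}\omega(\hat{z}_j,z_i^*)\bigr)}\xrightarrow{N\to\infty}\frac{\pi_S(\hat{z})\omega(\hat{z},z)\,m(z)}{\pi_S(z)\omega(z,\hat{z})\,m(\hat{z})}.
\]
Using the symmetry $Q_S(z,\hat{z})=Q_S(\hat{z},z)$ of the SRWM proposal (which is what makes the SRWM acceptance in \cref{alg:SPS} proposal-free), this last limit is exactly the ideal scheme's Hastings ratio $\pi_S(\hat{z})Q_\infty(\hat{z},z)/(\pi_S(z)Q_\infty(z,\hat{z}))$. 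Taking $1\wedge$ and expectations therefore gives $P_N(z,\cdot)\to P_\infty(z,\cdot)$ weakly for $\pi_S$-a.e.\ $z$.

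To complete the proof I lift the kernel convergence to the process. Since both chains start from $\pi$, the one-step conclusion above combined with induction on $m$ and the Markov property shows that for any bounded continuous $g$,
\[
\mathbb{E}[g(X_N(0),\dots,X_N(m))]\to\mathbb{E}[g(X_\infty(0),\dots,X_\infty(m))],
\]
which is the desired weak convergence of finite-dimensional distributions. The main obstacle is the uniform integrability needed to pass the limit through the $1\wedge(\cdot)$ and through the product with $g$ after conditioning: the acceptance ratio contains the reciprocal $N/\sum\omega$, so one needs integrability of $1/\bar{W}_N(z)$ uniformly in $N$, and this is exactly where both hypotheses $\mathbb{E}[\omega^4]<\infty$ and $\mathbb{E}[\omega^{-4}]<\infty$ enter, via Hölder's inequality, to bound the relevant moments of the truncated acceptance ratios uniformly and close the induction.
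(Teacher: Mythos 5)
Your first half follows essentially the same route as the paper: a law of large numbers for the forward and backward weight averages identifies the limit of the selection density as $Q_\infty$ and of the SMTM acceptance ratio as $\alpha_\infty$ (using symmetry of $Q_S$), and the fourth-moment hypotheses on $\omega$ and $\omega^{-1}$ are used exactly as in the paper (Cauchy--Schwarz/H\"older) to get the uniform integrability needed to pass to the limit; since $P_Nh$ and $P_\infty h$ are bounded by $\|h\|_\infty$, your pointwise-a.e.\ statement upgrades to the $L^1(\pi_S)$ convergence $\int |P_Nh-P_\infty h|\,\pi_S\,\dee z\to 0$ that the paper establishes as its first condition.

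The genuine gap is in the lifting step. You assert that ``induction on $m$ and the Markov property'' yield convergence of finite-dimensional distributions, but the induction does not close with what you have proved. Already for two steps one writes
\begin{equation}
\mathbb{E}\left[g_0(Z_N(0))g_1(Z_N(1))g_2(Z_N(2))\right]=\int \pi_S(\dee z)\,g_0(z)\,P_N\left(g_1\cdot P_Ng_2\right)(z),
\end{equation}
and after replacing the inner $P_Ng_2$ by $P_\infty g_2$ (which stationarity and your $L^1$ bound do permit), you must control $\int \pi_S(\dee z)\,|P_N\phi-P_\infty\phi|(z)$ for $\phi=g_1\cdot P_\infty g_2$. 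Your kernel convergence is only available for test functions in the class you can iterate, so you need either that $P_\infty$ maps bounded continuous functions to continuous functions (a Feller-type property of the ideal kernel), or an extension of the $L^1$ convergence to bounded measurable test functions; you establish neither and do not flag the issue. This is precisely the second condition of \cite[Theorem 4]{Gagnon2023} that the paper invokes, and its verification occupies the entire second half of the paper's proof: using the explicit form of $Q_S(z,\hat z)$ as a function of the geodesic distance (\cite[Lemma 6.1]{Milinanni2025}), a Fatou argument to bound the normalizing integral $\int \omega(z_\epsilon,\hat z)Q_S(z_\epsilon,\hat z)\,\dee\hat z$ away from zero for nearby states, a uniform bound on the kernel to obtain an integrable dominating function, and dominated convergence to conclude continuity of $P_\infty h$. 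Without this (or an equivalent regularity statement), the induction you sketch does not go through, so the proposal as written is incomplete at exactly the point where the process-level conclusion is supposed to follow from the one-step kernel convergence.
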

\begin{proof}
    See \cref{proof:thm:infinity weak convergence}.
\end{proof}
Furthermore, we can show that our ideal scheme preserves uniform ergodicity, which is unlikely to hold for the Euclidean version in \citet[Theorem 1]{Gagnon2023}.
\begin{theorem}
\label{thm: ideal uniform ergodic}
If $\pi(x)$ is positive and continuous in $\mathbb{R}^d$ satisfying $\sup_{x\in\mathbb{R}^d}\pi(x)(R^2+\|x\|^2)^d<\infty$, then the ideal scheme in \cref{alg:ideal} is uniformly ergodic.
\end{theorem}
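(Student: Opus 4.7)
The ideal scheme is a Metropolis--Hastings chain on the compact sphere $\mathbb{S}^d$ with target density $\pi_S(z)\propto\pi(\SP(z))(R^2+\|\SP(z)\|^2)^d$, which is strictly positive, continuous on $\mathbb{S}^d\setminus\{\mathscr{N}\}$, and bounded above by $M:=\sup_x\pi(x)(R^2+\|x\|^2)^d<\infty$ by hypothesis. The plan is to exploit compactness together with these bounds to prove a uniform Doeblin minorization $P_\infty^n(z,\cdot)\geq\varepsilon\,\nu(\cdot)$ for some finite $n$, constant $\varepsilon>0$, and probability measure $\nu$ on $\mathbb{S}^d$; uniform ergodicity then follows by standard Markov chain theory. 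The overall structure mirrors the proof of \cref{Thm:uniform ergodic} for fixed $N$, with the multi-try mixture replaced by the limiting proposal $Q_\infty$.

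First I would derive an explicit closed form for the density of an accepted move. The SRWM proposal $Q_S$ is symmetric, $Q_S(z,\hat z)=Q_S(\hat z,z)$ (implicit in the simple acceptance used in \cref{alg:SPS}), and direct algebra in the globally balanced case collapses the normalisers, yielding
\[
p_\infty(z,\hat z)\;=\;\alpha_\infty(z,\hat z)\,Q_\infty(z,\hat z)\;=\;\frac{\pi_S(\hat z)\,Q_S(z,\hat z)}{\max\{U(z),U(\hat z)\}},\qquad U(z):=\int_{\mathbb{S}^d}\pi_S(\hat z')\,Q_S(z,\hat z')\,d\hat z',
\]
with the analogous formula (using $\sqrt{\pi_S}$ in the definition of $U$) in the locally balanced case. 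The upper bound $U(z)\leq M$ is then immediate, giving the uniform one-step lower bound $p_\infty(z,\hat z)\geq\pi_S(\hat z)\,Q_S(z,\hat z)/M$.

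The main obstacle is that $Q_S(z,\cdot)$ is supported only on the open hemisphere $H_z=\{w:w\cdot z>0\}$, so the one-step density vanishes on the antipodal hemisphere and the above bound cannot by itself yield a minorization uniform in $z$. To handle this I would pass to a two-step (or if needed three-step) kernel and bound
\[
p_\infty^{(2)}(z,z')\;\geq\;\frac{\pi_S(z')}{M^2}\int_{\mathbb{S}^d}\pi_S(\hat z)\,Q_S(z,\hat z)\,Q_S(\hat z,z')\,d\hat z,
\]
then invoke the rotational invariance and continuity of $Q_S$, together with the positivity of $\pi_S$ on $\mathbb{S}^d\setminus\{\mathscr{N}\}$, to show the inner convolution is bounded below by some $c>0$ uniformly in $z,z'$. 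Integrating gives $P_\infty^2(z,A)\geq(c/M^2)\,\pi_S(A)$ for every $z$ and every measurable $A$, i.e.\ Doeblin with minorizing measure $\nu=\pi_S$.

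The hardest technical piece is securing that uniform constant $c>0$ when $(z,z')$ approaches an antipodal pair: the intersection $H_z\cap H_{z'}$ shrinks to a set of vanishing surface measure, and in parallel $\pi_S$ may fail to be bounded away from zero near $\mathscr{N}$ for targets with sufficiently light decay. I expect the first issue to be resolved by passing to $n=3$ (allowing a detour through a third intermediate point so that any two consecutive hops stay inside a common non-degenerate cap), thereby restoring a uniform constant. The second issue is essentially benign because we minorize \emph{against} $\pi_S$ itself rather than against the uniform surface measure, so the near-pole behaviour is absorbed into the minorizing measure and no pointwise lower bound on $\pi_S$ is required.
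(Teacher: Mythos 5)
Your strategy is the same as the paper's: reduce uniform ergodicity to a multi-step Doeblin minorization on the compact sphere, using $\pi_S\leq M$, positivity of $\pi_S$ away from $\mathscr{N}$, and the hemisphere-restricted support of $Q_S$, with a minorizing measure built from $\pi_S$ itself. Your closed form $p_\infty(z,\hat z)=\pi_S(\hat z)Q_S(z,\hat z)/\max\{U(z),U(\hat z)\}$ in the globally balanced case is correct (the symmetry of $Q_S$ it relies on does hold, since by \cite[Lemma 6.1]{Milinanni2025} $Q_S$ depends only on the geodesic distance), and it delivers essentially the same one-step bound that the paper obtains by bounding $Q_\infty$ and $\alpha_\infty$ separately; note, though, that in the locally balanced case the normalisers do not collapse as cleanly, and the paper's separate bounds (which need the lower bound $V(z)\geq\sqrt{\delta_\epsilon}\,\delta'_{\epsilon'}S_{\epsilon,\epsilon',d}$ on the analogue of $U$) are what make that case go through.

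The genuine gap is the step you yourself flag but do not carry out: a lower bound on the multi-step convolution that is \emph{uniform} over all pairs $(z,z')$, including near-antipodal ones. Your two-step bound fails exactly there, since $H_z\cap H_{z'}$ degenerates and $Q_S$ vanishes as the geodesic distance approaches $\pi/2$, and "I expect passing to $n=3$ resolves it" is the crux of the theorem, not a detail. The paper closes it by first proving $T_\infty(z,\hat z)\geq c\,Q_S(z,\hat z)$ for $\hat z\notin\AC(\epsilon)$ with an explicit constant, and then reusing verbatim the three-step small-set computation from the proof of \cref{Thm:uniform ergodic}: the intermediate points are restricted to the caps $\HS(z,\epsilon')\backslash\AC(\epsilon)$ and $\HS(z',\epsilon')\backslash\AC(\epsilon)$, on which $Q_S\geq\delta'_{\epsilon'}$ and $\pi_S\geq\delta_\epsilon$, and the quantitative ingredient is the uniform overlap estimate $\inf_{z,z'}\int\int\mathbf{1}_{z_1\in\HS(z_2,\epsilon')}\,\dee z_1\,\dee z_2\geq C/2>0$ imported from \cite[Proof of Theorem 2.1]{yang2024}. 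Without that (or an equivalent) overlap estimate your argument remains a plan. Relatedly, your claim that "no pointwise lower bound on $\pi_S$ is required" is only true for the terminal point: the intermediate points of the convolution must carry $\pi_S$-mass bounded below uniformly in the location of the caps, which is precisely why the paper excises the Arctic Circle $\AC(\epsilon)$ (where $\pi_S$ may vanish for lighter tails) and uses $\delta_\epsilon$ there; only $\pi_S(z')$ at the endpoint gets absorbed into the minorizing measure.
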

\begin{proof}
    See \cref{proof:thm: ideal uniform ergodic}.
\end{proof}

Now we prove that in the tail region, the acceptance rate $\alpha_\infty$ in \cref{eq:alpha_inf} of the ideal scheme has a positive lower bound in both globally and locally balanced cases under certain conditions on the target distribution and the step size.
\begin{proposition}
\label{prop:infinity lower bound}
    Suppose the target distribution $\pi$ satisfies that either $\pi(x)\propto\|x\|^{-\alpha}$ or $\pi(x)\propto\exp(-\|x\|^\beta)$ where $\alpha>2d,\beta>0$ and the step size $h=O(d^{-1/2})$, then there exists a constant $\xi>0$ such that when $\frac{1}{\|x\|}=o(h)$ and $\hat{x}=\SP(\hat{z})$ where $\hat{z}\sim Q_{\infty}(z,\cdot)$,
    \begin{itemize}
        \item in the globally-balanced case, we have
        \begin{equation}
            \mathbb{P}(\text{accept }\hat{x}\mid x)=\alpha_{\infty}(z,\hat{z})\geq \xi,
        \end{equation}
        \item in the locally-balanced case, for any $\varepsilon>0$ there exists $\mathscr{D}_\varepsilon\in\mathbb{N}$ such that when $d>\mathscr{D}_\varepsilon$, with probability no smaller than $1-\varepsilon$, we have
        \begin{equation}
            \mathbb{P}(\text{accept }\hat{x}\mid x)=\alpha_{\infty}(z,\hat{z})\geq \xi.
        \end{equation}        
    \end{itemize}
\end{proposition}
\begin{proof}
    See \cref{proof:prop:infinity lower bound}.
\end{proof}

Finally, we discuss some recent progress on the spectral gap of multiple-proposal algorithms. The spectral gap provides a measure of the convergence rate of a (reversible) Markov chain, with a larger gap indicating faster convergence. Intuitively, the improvement of MTM using $N$ candidates should be about $N$ times that of RWM. But for log-concave distributions, \cite{pozza2024fundamental} have proven that such improvement is at most $\log(N)$, which implies that if we want to achieve the computational efficiency of $N$ RWM algorithms, we may need the number of candidates in each step to be exponential of $N$. However, their result does not hold for heavy-tailed distributions. Therefore, we consider a similar spectral gap analysis on the role of $N$ for the improvement of SMTM compared to SRWM when the target distribution is heavy-tailed.

Recall that the spectral gap of a $\pi$-reversible Markov kernel $P$ is defined as
\begin{equation}
    \Gap(P)=\inf_{f\in L^2(\pi)}\frac{\int_{\mathbb{R}^d}\int_{\mathbb{R}^d}\left(f(y)-f(x)\right)^2\pi(\dee x)P(x,\dee y)}{2\Var_{\pi}(f)},
\end{equation}
where 
\begin{equation}
    L^2(\pi)=\left\{f:\mathbb{R}^d\to\mathbb{R}\mid\Var_{\pi}(f):=\int_{\mathbb{R}^d}f^2(x)\pi(\dee x)-\left(\int_{\mathbb{R}^d}\pi(f)\pi(\dee x)\right)^2<\infty\right\}.
\end{equation}
The following result is from \cite{pozza2024fundamental}.
\begin{proposition}
\label{prop:log gap}\cite[Proposition 2]{pozza2024fundamental}
    Let $P^{(N)}$ be the Markov kernel of MTM in \cref{alg:MTM} and $\tilde{P}$ be the kernel of RWM with the same proposal distribution $q$ as in \cref{alg:MTM} and denote $\pi$ as the target distribution. Let $X\sim\pi$, $Y_i\mid X\sim q(X,\cdot)$ and $Y_i$'s are conditionally independent. Assume that $\mathbb{E}[\exp(s(\nu^T(Y_i-X))^2)]\leq M^{\nu}(s)<\infty$ for some $s>0$ and all $i\in\{1,\dots,N\}$, then we have
    \begin{equation}
        \Gap(P^{(N)})\leq\min\left(2N\mathop{\essinf}\limits_{x \in \mathbb{R}^d}\tilde{P}(x,\mathbb{R}^d\backslash\{x\}),\inf_{\nu\in\mathbb{R}^d,\|\nu\|=1}\frac{1}{2s}\frac{\log(N)+\log(M^{\nu}(s))}{\Var(\nu^TX)}\right).
    \end{equation}
\end{proposition}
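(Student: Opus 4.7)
The plan is to prove the two upper bounds in the minimum separately.

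\textbf{First bound.} The strategy is a two-step comparison. I would first establish the pointwise inequality $P^{(N)}(x,\{x\}^c)\le N\tilde{P}(x,\{x\}^c)$ by showing that for each candidate $j$ the combined select-and-accept probability $\alpha_2^j$ in \cref{alg:MTM} is dominated by the single-try MH acceptance $\alpha_{\mathrm{MH}}(x,y_j):=\min\bigl(1,\pi(y_j)q(y_j,x)/[\pi(x)q(x,y_j)]\bigr)$, independently of the weight $\omega$. This follows because $\alpha_2^j$ is simultaneously at most $\omega(x,y_j)/\sum_i\omega(x,y_i)\le 1$ and at most $\pi(y_j)q(y_j,x)\omega(y_j,x)/[\pi(x)q(x,y_j)B]\le \pi(y_j)q(y_j,x)/[\pi(x)q(x,y_j)]$, where $B:=\sum_{i=1}^{N-1}\omega(y_j,z_i^{\ast})+\omega(y_j,x)\ge\omega(y_j,x)$; taking the minimum of these two bounds eliminates $\omega$. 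Summing over $j$ and taking expectation over the i.i.d.\ proposals gives the claim. Second, I would invoke the generic Cheeger-type bound $\Gap(P)\le 2\essinf_x P(x,\{x\}^c)$ for any $\pi$-reversible kernel: writing $r^{\ast}$ for that essential infimum, for any $\epsilon>0$ pick a measurable $A\subset\{x:P(x,\{x\}^c)\le r^{\ast}+\epsilon\}$ with $\pi(A)\in(0,1/2]$ and use the test function $f=\mathbbm{1}_A-\pi(A)$ in the variational formula; reversibility gives $\int_A P(x,A^c)\pi(\dee x)\le (r^{\ast}+\epsilon)\pi(A)$ while $\Var_\pi(f)=\pi(A)\pi(A^c)\ge\pi(A)/2$, so the ratio yields $\Gap(P)\le 2(r^{\ast}+\epsilon)$; letting $\epsilon\downarrow 0$ closes the step. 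Chaining the two inequalities produces $\Gap(P^{(N)})\le 2N\essinf_x\tilde{P}(x,\{x\}^c)$.

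\textbf{Second bound.} The plan is to plug a linear test function into the variational definition of $\Gap$. Fix a unit vector $\nu\in\mathbb{R}^d$ and set $f(x)=\nu^Tx$; the MGF hypothesis makes $f\in L^2(\pi)$ with $\Var_\pi(f)=\Var(\nu^TX)$. Since one MTM step outputs either $x$ or one of the $N$ proposals $Y_j$, we have the deterministic bound $(f(Y)-f(x))^2\le \max_{1\le j\le N}(\nu^T(Y_j-x))^2$. A standard sub-Gaussian maximum inequality (from $\exp(s\max_j Z_j^2)\le\sum_j\exp(sZ_j^2)$ followed by Jensen on $\log$) converts the assumed MGF bound into $\mathbb{E}[\max_j(\nu^T(Y_j-X))^2]\le (\log N+\log M^\nu(s))/s$. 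Substituting this into the variational definition of $\Gap$ bounds $\Gap(P^{(N)})$ by $(\log N+\log M^\nu(s))/[2s\Var(\nu^TX)]$, and optimising over $\nu$ gives the second bound.

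\textbf{Main obstacle.} The step I expect to require the most care is the pointwise comparison $\alpha_2^j\le\alpha_{\mathrm{MH}}(x,y_j)$: because the MTM acceptance formula entangles forward and backward weight sums in an $\omega$-dependent way, one must extract the two inequalities above and take their minimum precisely so that the weights cancel and only the MH acceptance ratio remains. Once this is in place, the remainder reduces to a routine combination of a conductance-type estimate and a sub-Gaussian maximal inequality, after which taking the minimum of the two resulting bounds yields the proposition.
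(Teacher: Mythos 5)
Your argument is correct, but note that the paper itself contains no proof of this statement: it is quoted verbatim as \cite[Proposition 2]{pozza2024fundamental}, and the only related argument the paper gives (the proof of \cref{prop:almost linear gap}) starts from the intermediate bound $\Gap(P^{(N)})\le\min\bigl(2N\essinf_x\tilde{P}(x,\mathbb{R}^d\backslash\{x\}),\,\inf_{\nu}\mathbb{E}[\max_i(\nu^T(\hat{X}_i-X))^2]/(2\Var(\nu^TX))\bigr)$, again imported from \cite[Theorem 2]{pozza2024fundamental}, and then controls the maximum via $p$-th moments rather than an MGF. Your proposal effectively reconstructs that imported result from first principles and then specialises it: the pointwise domination of the combined select-and-accept probability by the single-try MH acceptance (the two bounds $\omega(x,y_j)/\sum_i\omega(x,y_i)\le 1$ and $\omega(y_j,x)/B\le 1$ do cancel the weights exactly as you claim, giving $P^{(N)}(x,\{x\}^c)\le N\tilde{P}(x,\{x\}^c)$), the indicator test function giving $\Gap(P)\le 2\essinf_x P(x,\{x\}^c)$, and the linear test function combined with the Chernoff--Jensen maximal inequality $\mathbb{E}[\max_j Z_j]\le s^{-1}\log\sum_j\mathbb{E}[e^{sZ_j}]$; all steps check out, and this is presumably the route taken in the cited source. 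Two minor points: reversibility is what collapses the Dirichlet form to $2\int_A\pi(\dee x)P(x,A^c)$, not what gives the bound $(r^{\ast}+\epsilon)\pi(A)$, which follows simply from the choice of $A$; and the MGF hypothesis concerns the increments $Y_i-X$, so it does not by itself put $\nu^TX$ in $L^2(\pi)$ --- finiteness of $\Var(\nu^TX)$ is an implicit standing assumption (if it fails, the linear test function is inadmissible and the stated bound degenerates), but that caveat is inherited from the statement rather than introduced by your proof.
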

Notably, this proposition does not hold for heavy-tailed distributions, as such distributions do not satisfy the condition $\mathbb{E}[\exp(s(\nu^T(Y_i-X))^2)]\leq M^{\nu}(s)<\infty$ in Proposition \ref{prop:log gap}. Next, we show that for multivariate Student's $t$ distribution with degrees of freedom no smaller than $d$, following a similar proof technique, we can establish an upper bound with a factor of $N^{2/p}$ instead of $\log(N)$, where $p$ is a positive integer smaller than $d$.

\begin{proposition}
\label{prop:almost linear gap}
    Let $P^{(N)}$ be the Markov kernel of SMTM in Algorithm \ref{alg:SMTM} with proposal distribution $\widetilde{Q_S}(\cdot,\cdot)$ on $\mathbb{R}^d$ and $\tilde{P}$ be the kernel of SRWM in Algorithm \ref{alg:SPS}. Suppose the target $\pi$ is a Student's $t$ distribution with mean $0$ and degrees of freedom no smaller than $d$. Let $X\sim\pi$, $Y_i\mid X\sim\widetilde{Q_S}(X,\cdot)$ and $Y_i$'s are conditionally independent, then $\mathbb{E}[(\nu^T(Y_i-X))^p]\leq M^{\nu}(p)<\infty$ for any positive integer $p<d$ and all $i\in\{1,\dots,N\}$, and we also have
    \begin{equation}\label{eq:gap_poly}
        \Gap(P^{(N)})\leq\min\left(2N\mathop{\essinf}\limits_{x \in \mathbb{R}^d}\tilde{P}(x,\mathbb{R}^d\backslash\{x\}),\inf_{\nu\in\mathbb{R}^{d},\|\nu\|=1}\frac{N^{\frac{2}{p}}M^{\nu}(p)^{\frac{2}{p}}}{2\Var(\nu^TX)}\right).
    \end{equation}
\end{proposition}
\begin{proof}
    See \cref{proof:prop:almost linear gap}.
\end{proof}

The appearance of the factor $N^{2/p}$ in \cref{eq:gap_poly} may suggest that increasing the number of proposals $N$ in heavy-tailed settings could yield significant performance gains compared to light-tailed (log-concave) settings. Nevertheless, according to \cref{eq:gap_poly} (see also \citet[Corollary 1]{pozza2024fundamental}), the improvement of the spectral gap is at-most-linear in $N$. Therefore, in a serial computing environment, running SMTM with $N>1$ for $1$ interation is less efficient than running SRWM for $N$ iterations. However, if one considers a parallel computing context, see \citet[Discussion after Corollary 1]{pozza2024fundamental}, the factor $\log(N)$ in \cref{prop:log gap} has negative implications for the number of parallel workers. By contrast, the factor $N^{2/p}$ in \cref{prop:almost linear gap} suggests that SMTM might be more suited to a parallel computing context, where the number of parallel workers can be polynomial with $N$. However, it is important to note that $N^{2/p}$ in \cref{prop:almost linear gap} represents only an upper bound. To rigorously confirm the benefit of increasing $N$ for SMTM in heavy-tailed settings, a (matching) lower bound is necessary. Establishing such a bound remains an open problem and is left for future work.

\section{Numerical Experiments}\label{Sec:numerical}

In \cref{subsec_simulation}, we present numerical experiments illustrating the performance of SMTM relative to SRWM and MTM. In \cref{subsec_application}, we consider two realistic examples arising in Bayesian statistics. As noted in \cref{remark:SP_extend}, the stereographic projection admits a natural extension by introducing a location parameter $\mu$ and a covariance matrix $\Sigma$, enabling adaptive tuning. For simplicity, we fix $\Sigma = R^2 I_d$ and refer to $\mu$ and $R$ as the sphere parameters. These parameters are tuned adaptively following the framework of \cite{chimisov2018air}.

\subsection{Simulations}\label{subsec_simulation}

We begin by examining its convergence behavior through comparisons with SRWM, MTM, and RWM under both light-tailed and heavy-tailed target distributions. We further show that, in the globally-balanced case, SMTM avoids the pathological behavior observed in MTM. We then focus on the robustness of SMTM to sphere parameters by varying sphere parameters $\mu$ and $R$ in both the burn-in phase and the stationary phase. The results show that SMTM exhibits a faster convergence and a larger ESJD, highlighting SMTM’s superior robustness.

Firstly, we compare the burn-in phase between SMTM and the other algorithms. Let the dimension of the state space be $d=100$ and the initial state $(200,\dots,200)^T$. For the light-tailed target distribution, we consider the standard multivariate Gaussian. For the heavy-tailed distribution, we consider the product distribution: $\pi(x)=\prod_{i=1}^{100}f(x_i)$ where $f$ is the rescaled density function of one-dimensional standard Student's $t$ distribution with degrees of freedom $d+1$. Note that we rescale $f$ to have a unit second moment, which satisfies the assumption in \cref{eq_finite_moment}.

Unlike the multivariate Student’s $t$ distribution, the product of (rescaled) i.i.d. univariate Student’s $t$ distributions is not isotropic (rotation-invariant). Our optimal scaling results in \cref{Sec:optimal scaling} hold for these targets.
For MTM and SMTM, the number of candidates is set to $N=5$. For SRWM and SMTM, we fix the radius parameter at $R=\sqrt{d}=10$. In both examples, the acceptance rates of SRWM and SMTM remain above their respective optimal values, even for relatively large step sizes. To ensure a fair comparison, we therefore fix the step size at $h=0.01$ for both SMTM and SRWM, and compare their performance against the optimally tuned MTM and RWM. Specifically, the step sizes for RWM and MTM (with $N=5$) are calibrated to achieve acceptance rates of $23\%$ and $41\%$, respectively, which are (approximately) optimal according to \citet[Table~1]{bedard2012scaling}.
\cref{fig:heavy vs light} shows the burn-in phase of these algorithms. For both light-tailed and heavy-tailed target distributions, we observe that both SRWM and SMTM, under both globally and locally balanced settings, converge significantly faster than the Euclidean algorithms. 
\begin{figure}[h]
    \centering
    \begin{minipage}[b]{0.48\linewidth}
        \centering
        \includegraphics[width=\linewidth]{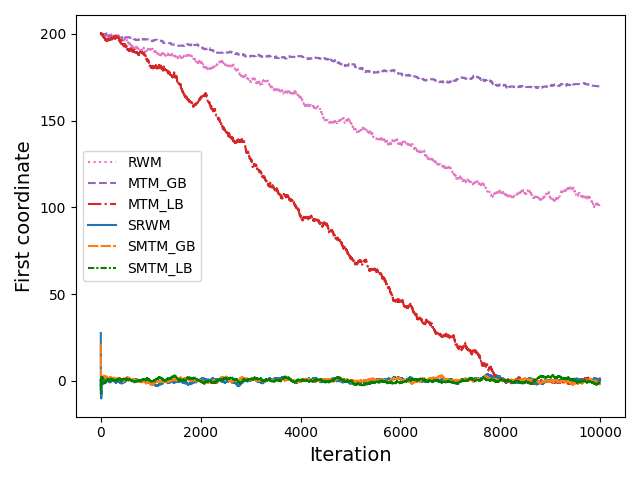}
        \subcaption{light-tailed distribution}
    \end{minipage}
    \hspace{-0.01\linewidth}
    \begin{minipage}[b]{0.48\linewidth}
        \centering
        \includegraphics[width=\linewidth]{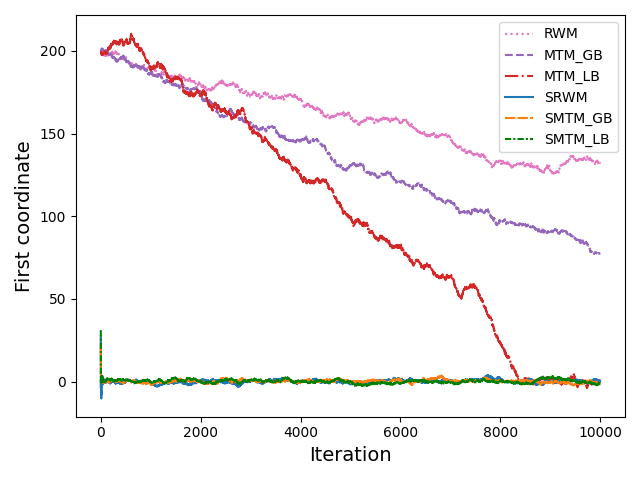}
        \subcaption{heavy-tailed distribution}
    \end{minipage}
    \caption{Burn-in Phase for different target distributions}
    \label{fig:heavy vs light}
\end{figure}

Next, we consider the large $N$ regime and demonstrate that GB-SMTM does not exhibit the pathological behavior of GB-MTM described in \cref{sec:pathological}. We set the dimension of state space to $d=100$ and choose the radius parameter $R=\sqrt{d}$. For GB-SMTM, the step size is fixed at $0.01$, whereas for GB-MTM, it is tuned to achieve the optimal acceptance rate. For the standard Gaussian target, we examine trace plots initialized at $(200,\dots,200)^T$. As shown in \cref{fig:pathological}, increasing the number of candidates $N$ significantly prolongs the burn-in phase of GB-MTM, making it more prone to becoming trapped in distant regions. In contrast, GB-SMTM does not suffer from this issue and consistently exhibits rapid convergence.
\begin{figure}[H]
    \centering
    \includegraphics[width=1\linewidth]{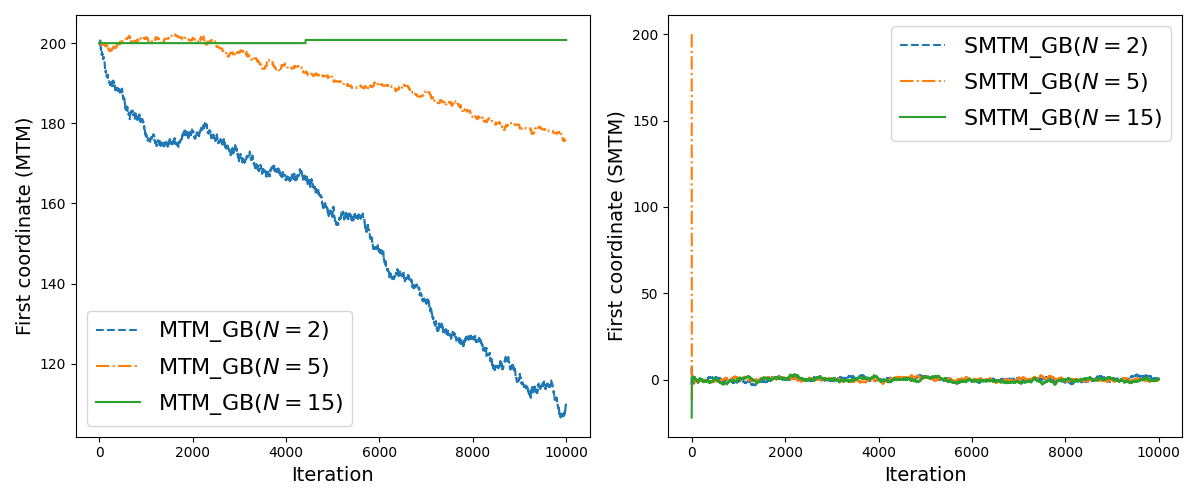}
    \caption{Burn-in Phase of GB-MTM and GB-SMTM}
    \label{fig:pathological}
\end{figure}

Next, we investigate the robustness of SMTM and SRWM to parameter choices. For simplicity, we restrict attention to the locally-balanced weight function for both MTM and SMTM. By default, the sphere is centered at the origin, whereas the target distribution may have a nonzero mean. This mismatch is equivalent to shifting the sphere away from the origin while keeping the target centered at zero. We therefore refer to any setting in which the target distribution has a nonzero mean as a \emph{mis-located sphere}. Intuitively, the larger the deviation of the target mean from the origin, the more severe the mis-location. We set the dimension to $d=50$ and consider the target distribution $\pi(x)=\prod_{i=1}^{50} f(x_i)$, where $f$ is a rescaled and shifted one-dimensional Student's $t$ density with mean $m=100$, variance $s^2=1$, and degrees of freedom $d+1$. The sphere parameter is chosen as $R=\sqrt{(s^2+m^2)d}$ (see \citet[Section~5.3]{yang2024}), and the chain is initialized at $(0,\dots,0)^T$. The step sizes of both SRWM and SMTM are tuned to achieve their respective optimal acceptance rates.
\cref{fig:convergence} shows that, in this setting, SMTM converges faster than SRWM, with its convergence further improving as the number of candidates increases. This suggests that using multiple candidates can substantially accelerate burn-in, particularly when the sphere is misaligned with the target distribution; moreover, increasing the number of candidates leads to progressively faster convergence.

\begin{figure}[]
    \centering
    \includegraphics[width=0.7\linewidth]{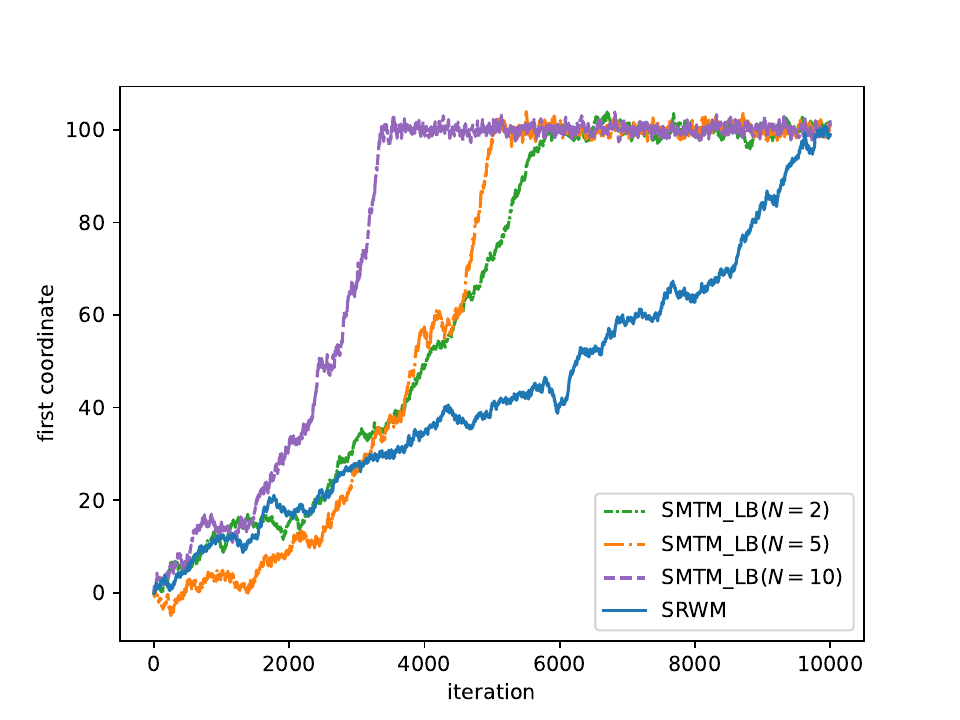}
    \caption{Burn-in Phase of SRWM and LB-SMTM for different $N$}
    \label{fig:convergence}
\end{figure}

Finally, we focus on the limiting ESJD in the stationary phase. Recall that under certain assumptions on the target distribution, we have proven in \cref{Sec:optimal scaling} that the (total) acceptance rate of SMTM $\sum_{j=1}^{N}\alpha_2^j$ converges to $\sum_{j=1}^{N}\phi_2^j((W_i)_{i=1}^{N},(V_i)_{i=1}^{N-1})$ and the ESJD of SMTM can be approximated by 
\begin{equation}
    N\ell^2\mathbb{E}_{X\sim\pi}\left[\mathbb{E}\left[\phi_2^j((W_i)_{i=1}^{N},(V_i)_{i=1}^{N-1})\middle| X\right]\right],
\end{equation} 
where $\phi_1^j, \phi_2^j$ are defined in \cref{eq:phi 1,eq:phi 2} and $(W_i)_{i=1}^N,(V_i)_{i=1}^N$ are defined in \cref{Eq:normal distribution}. If we take $f(x)$ as the probability density function of $\mathcal{N}(\mu,1-\mu^2)$ where $\mu\in(0,1)$, the assumptions on $f(x)$ stated at the beginning of \cref{Sec:optimal scaling} are satisfied:
\begin{equation}
    \label{eq:normal simulation}
    \mathbb{E}_f\left[X^2\right]=1,\quad\mathbb{E}_f\left[\left(\log f'\right)^2\right]=\frac{1}{1-\mu^2}.
\end{equation}
Recall that our target distribution is not isotropic (rotation-invariant) if $\mu\neq 0$, and $\mu$ indicates the level of mis-location of the sphere. 
\begin{figure}[]
    \centering
    \includegraphics[width=0.8\linewidth]{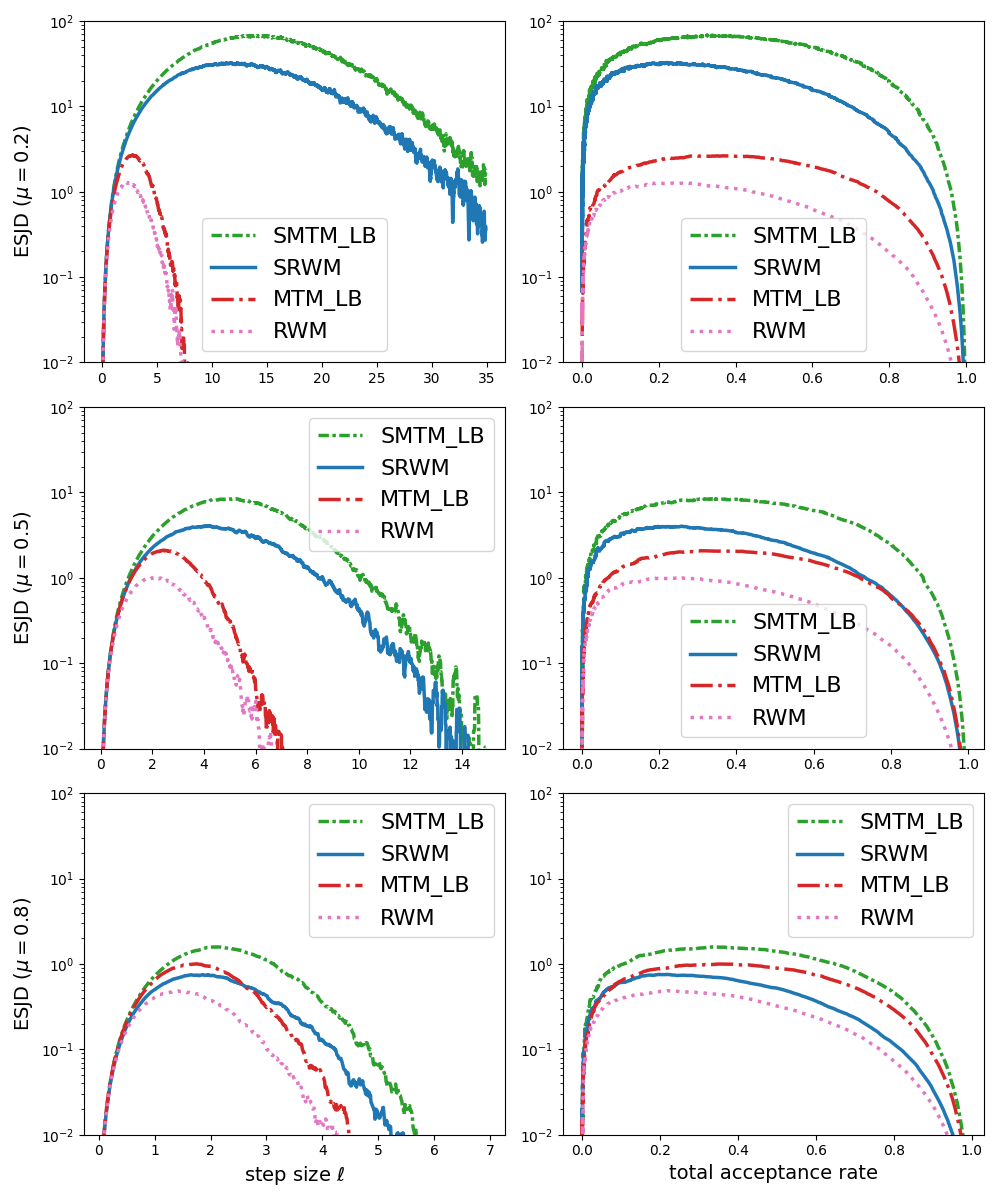}
    \caption{Simulation results with different locations when $N=3$}
    \label{fig:robustness center}
\end{figure}

\cref{fig:robustness center,fig:robustness radius} illustrate the performance of LB-SMTM, LB-MTM, SRWM, and RWM under varying sphere parameters $\mu$ and $R$. In \cref{fig:robustness center}, the radius is fixed at $R=\sqrt{d}$, while in \cref{fig:robustness radius}, the location parameter is fixed at $\mu=0.5$. As shown in \cref{fig:robustness center}, SMTM significantly outperforms the other three algorithms and exhibits greater robustness to the choice of sphere parameters. In particular, the performance of SRWM deteriorates as the sphere becomes increasingly mis-located and is eventually surpassed by MTM when $\mu=0.8$. In contrast, SMTM remains stable and consistently outperforms all competitors. To assess robustness with respect to the radius parameter, we fix $\mu=0.5$ and vary $R$. Writing $R=\sqrt{\lambda d}$ as in \cref{Thm:accept rate,Thm:esjd}, we consider $\lambda=0.1$, $1$, and $10$ in \cref{fig:robustness radius}. According to \cref{def_invSP}, when $\lambda=0.1$, most of the mass of the target distribution is projected near $z_{d+1}=9/11 \approx 0.82$, indicating a severe misspecification of the radius (similarly for $\lambda=10$). In contrast, $\lambda=1$ corresponds to a well-specified radius, with most of the mass projected near the equator. The conclusions from \cref{fig:robustness radius} are consistent with those from \cref{fig:robustness center}. SMTM again achieves the best performance among all four algorithms. Notably, while SRWM is dominated by MTM when the radius is either too large or too small, SMTM maintains strong performance across all settings, highlighting its robustness to the choice of radius.

\begin{figure}[]
    \centering
    \includegraphics[width=0.8\linewidth]{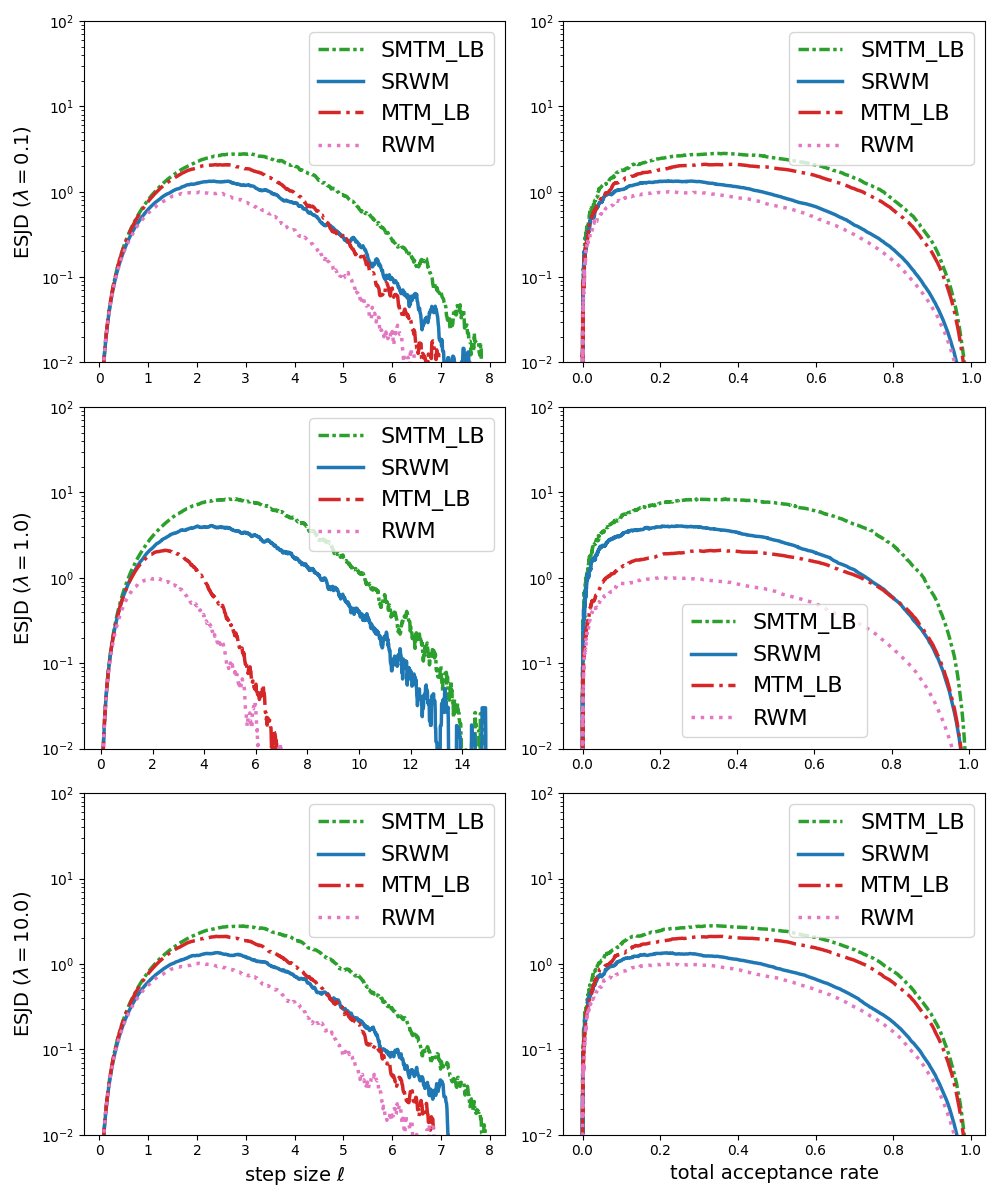}
    \caption{Simulation results with different radii when $N=3$}
    \label{fig:robustness radius}
\end{figure}

\subsection{Real Applications}\label{subsec_application}

In \cref{sec:adaptive_tuning}, we introduce an adaptive strategy for tuning the sphere parameters, following \cite{bell2024adaptive}. We then consider two realistic examples from Bayesian statistics in \cref{sec:bayesian_lasso,sec:bayesian_t}: (i) a light-tailed posterior arising from Bayesian Lasso, and (ii) a heavy-tailed posterior from Bayesian Student's $t$ regression. Our primary objective is to provide a clear comparison between SMTM, MTM, and SRWM. Accordingly, we do not pursue an exhaustive empirical evaluation against all competing methods, such as Gibbs samplers \citep{park2008bayesian,geweke1993}, which are typically tailored to specific model structures and prior choices.

\subsubsection{Adaptive Tuning}
\label{sec:adaptive_tuning}
The Adapting Increasingly Rarely (AIR) framework, introduced by \citet{chimisov2018air}, provides a structurally simplified approach to adaptive MCMC. In the AIR framework, the underlying Markov kernel is updated based on the entire available chain output, but these adaptations only occur at specific time points separated by an increasing number of iterations. In this section, we set the time points as $t_k:=1000k^{0.8}$ for $k\in\mathbb{N}$. $t_k$ grows exponentially and $t_k-t_{k-1}$ increases with $k$, so that the adaptations occur less frequently. 
As discussed at the beginning of \cref{Sec:numerical}, the stereographic projection can be extended by new parameters $\mu$ and $\Sigma$. Within the framework of AIR, $\mu$ and $\Sigma$ can be tuned adaptively to improve the geometric alignment between the target distribution and the sphere \cite[Section 3.2]{bell2024adaptive}. We fix $\Sigma=R^2I_d$ for simplicity, meaning that we adaptively tune the location $\mu$ and the radius $R$ (as well as $h$) under the AIR framework. During the adaptation steps, we use the empirical sample mean $\bar{X}$ to update $\mu$. For the radius paramter $R$, we update using $\bar{R}=\sqrt{\frac{1}{n-1}\sum\|X_i-\bar{X}\|^2}$. Finally, the step size $h$, which arises from the proposal of SMTM, can be tuned separately based on the optimal acceptance rate (see \cref{table:optimal_parameters}). 

\subsubsection{Example 1: Bayesian Lasso}
\label{sec:bayesian_lasso}
Bayesian Lasso regression \citep{park2008bayesian} is a Bayesian analogue of the classical Lasso estimator. Given a linear model
\begin{equation}
y=X\beta+\varepsilon,\quad\varepsilon\sim\mathcal{N}(0,\sigma^2I_n),
\end{equation}
where $X\in\mathbb{R}^{n\times p}, y\in\mathbb{R}^{n\times 1}$. Bayesian Lasso replaces the $L_1$ penalty in classical Lasso with a Laplace prior so that
    $p(\beta\mid\sigma,\lambda)=\left(\frac{\lambda}{2\sigma}\right)^{p}\exp(-\frac{\lambda}{\sigma}\|\beta\|_1)$. 
    Let the prior of $\sigma^2$ be
    $p(\sigma^2)\propto\frac{1}{\sigma^2}$. The posterior distribution can be written as
\begin{equation}
    p(\beta,\sigma^2\mid y)\propto\sigma^{-(n+p+2)}\exp\left(-\frac{1}{2\sigma^2}\|y-X\beta\|_2^2-\frac{\lambda}{\sigma}\|\beta\|_1\right).
\end{equation}
Bayesian Lasso and its variants have been widely applied in high-dimensional regression problems such as genomics \citep{li2010genome} and signal recovery \citep{green2025complex}.

In this example, we demonstrate the performance of the proposed SMTM compared with SRWM and MTM on the Bayesian Lasso posterior. We adopt the diabetes dataset used in \cite{efron2004least}. In this dataset, $X$ has size $442\times10$ containing the measurements of the explanatory variables (age, sex, body mass, etc.), and $y$ has size $442\times1$ corresponding to the response. Including $\sigma^2$, the dimension of the posterior is $d=11$. We first compare the convergence behavior during the burn-in phase using the adaptive tuning in \cref{sec:adaptive_tuning}. Under the setting that $\lambda=1$ and initial $\beta(0)=(100,\dots,100)^T$, the algorithms are tuned to achieve acceptance rates of approximately $24\%$ for RWM and SRWM, and $32\%$ for MTM and SMTM, which are nearly optimal according to \cref{table:optimal_parameters}. For the multiple-try algorithms, we use the locally-balanced weight function and $N=2$. For the stereographic algorithms, we let $\mu=(0,\dots,0)^T$ and  $R=\sqrt{d}=\sqrt{11}$ as the initial values. The traceplots of $\beta_3$ are shown in \cref{fig:bayeisanlasso_tuned}. We observe that the stereographic algorithms, SRWM and SMTM, converge much faster than RWM and MTM, with SMTM showing the fastest convergence among all. 
\begin{figure}[h]
    \centering
    \includegraphics[width=0.7\linewidth]{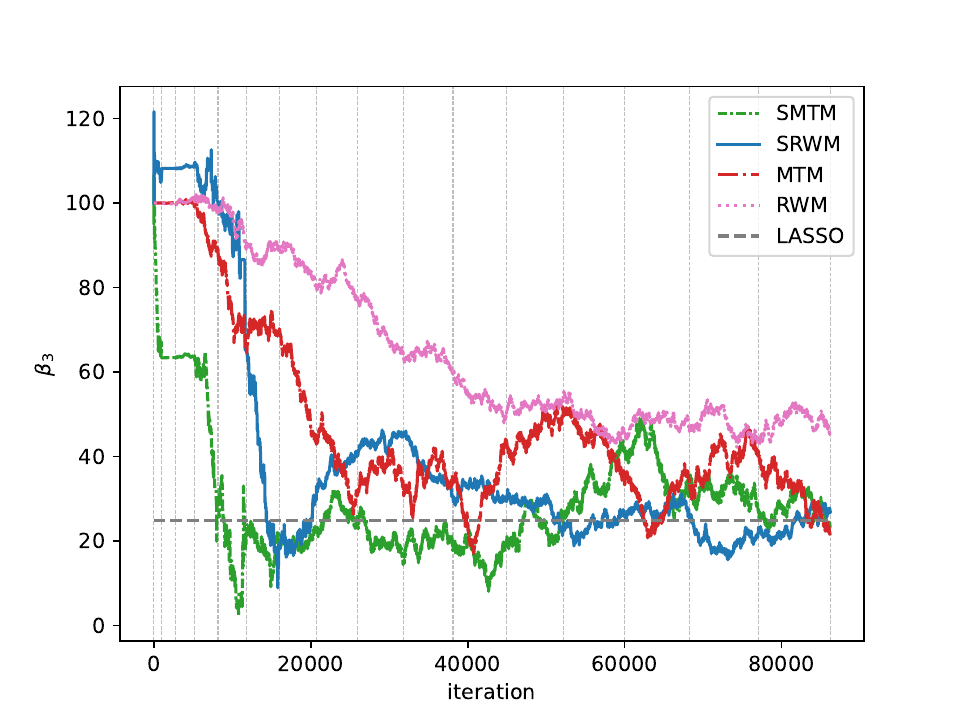}
    \caption{Burn-in Phases for Bayesian Lasso with Adaptive Tuning}
    \label{fig:bayeisanlasso_tuned}
\end{figure}

To assess the robustness of SMTM to the sphere parameters, we fix these parameters in the next example and only tune the step size $h$. We plot the burn-in phases using the same initial setting as before, while increasing the number of candidates to $N=10$. The step size $h$ is tuned to achieve an acceptance rate of $30\%$ for all the algorithms. The traceplots of $\beta_3$ are shown in \cref{fig:bayeisanlasso}. In this setting, SMTM and MTM outperform SRWM and RWM. The slow convergence of SRWM is mainly due to an inappropriate choice of the sphere parameters. Notably, even under such a misspecified setting of sphere parameters, SMTM continues to perform well. This indicates that SMTM is more robust to poor choices of the sphere parameters than SRWM.
\begin{figure}[h]
    \centering
    \includegraphics[width=0.7\linewidth]{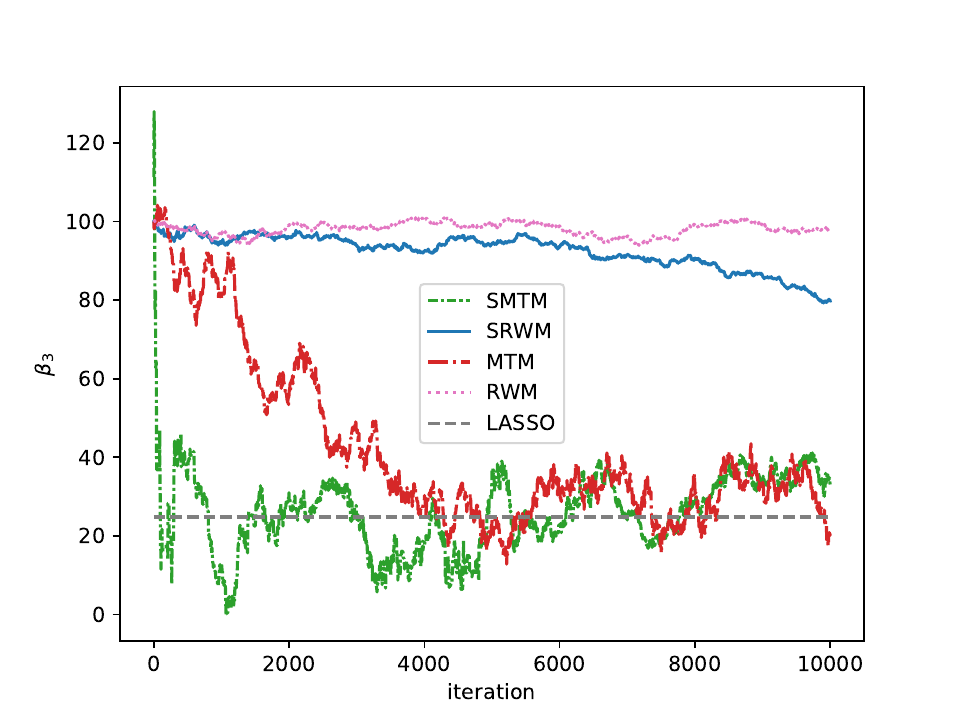}
    \caption{Burn-in Phases for Bayesian Lasso without tuning sphere parameters}
    \label{fig:bayeisanlasso}
\end{figure}

\subsubsection{Example 2: Bayesian Student's $t$ Regression}
\label{sec:bayesian_t}
Given the same linear equation $y= X\beta+\epsilon$, instead of assuming $\epsilon\sim \mathcal{N}(0,\sigma^2I_n)$, Bayesian Student's $t$ regression assumes $\{\epsilon_i\}$ to be independent with
\begin{equation}
    \epsilon_i\sim t_{\nu}(0,\sigma^2),
\end{equation}
where $t_{\nu}$ denotes the standard Student's $t$ distribution with degrees of freedom $\nu$. The Bayesian Student's $t$ regression model is well-known for its robustness to outliers and has been widely applied in areas such as mobility trend analysis \citep{boonstra2021multilevel} and traffic crash modeling \citep{li2023mitigating}. Let the priors of $\sigma^2$ and $\beta$ be $p(\sigma^2)\propto\frac{1}{\sigma^2}$ and $p(\beta)\propto 1$. The posterior can be written as 
\begin{equation}
    p(\beta,\sigma^2\mid y)\propto\sigma^{-n-2}\prod_{i=1}^n\left(1+\frac{(y_i-X_i^T\beta)^2}{\nu\sigma^2}\right)^{-\frac{\nu+1}{2}}.
\end{equation}
Note that, for any fixed $\nu$, the posterior tail of $\beta$ is heavier than exponential. Consequently, RWM and MTM are not geometrically ergodic for this target. We therefore expect the stereographic MCMC algorithms, SRWM and SMTM, to outperform RWM and MTM in this example.

We use the Boston Housing dataset in \cite{kuss2006}, which contains $506$ observations and $13$ input variables for predicting median house prices in the Boston metropolitan area. We first compare the burn-in phases with the adaptive tuning of sphere parameters in the AIR framework. For RWM and SRWM, the acceptance rates are tuned to approximately $24\%$. For MTM and SMTM (with the locally-balanced weight function and $N=2$), the acceptance rates are tuned to be approximately $32\%$. For the stereographic algorithms, the initial sphere parameters are $\mu=(0,\dots,0)^T$ and $R=\sqrt{d}=\sqrt{14}$. Let the degrees of freedom $\nu=4$ and the initial state at $(100,\dots,100)^T$. We draw traceplots of $\beta_1$ in the burn-in phase for RWM, MTM, SRWM, and SMTM, respectively. According to \cref{fig:bayesianstudent_tuned}, SRWM performs well after sphere parameters are adaptively updated, while SMTM performs the best among all.
\begin{figure}
    \centering
    \includegraphics[width=0.7\linewidth]{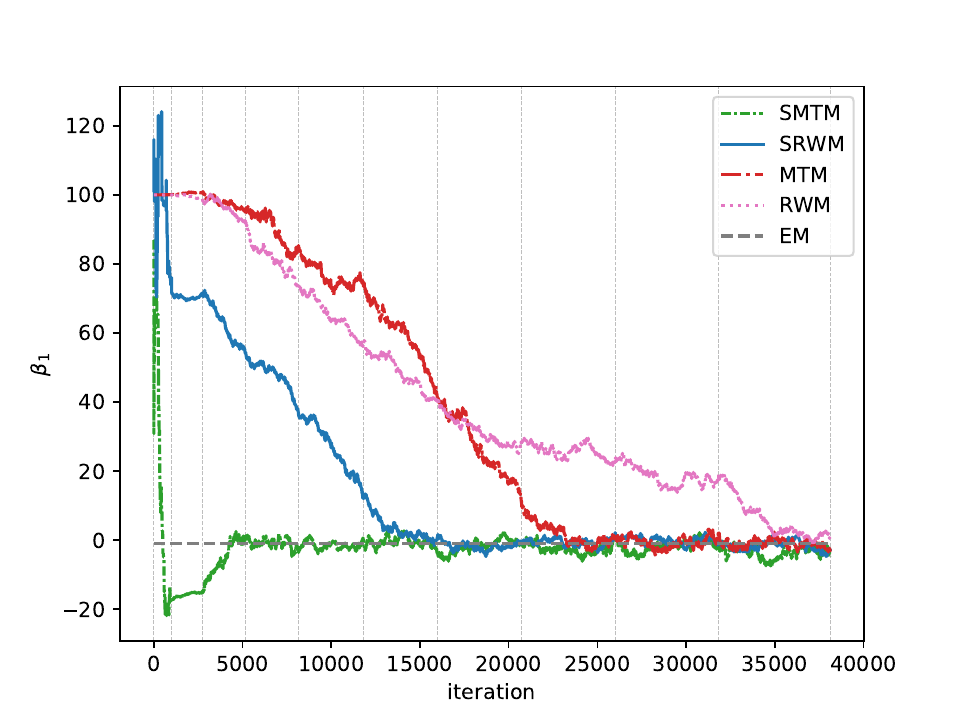}
    \caption{Burn-in Phases for Bayesian Student’s $t$ Regression with Adaptive Tuning}
    \label{fig:bayesianstudent_tuned}
\end{figure}

Next, we study how much the performance of SMTM and MTM can be improved by choosing a larger $N$. We increase the number of candidates to $N=10$ and fix the sphere parameters as their initial values, i.e, $\mu=(0,\dots,0)^T$ and $R=\sqrt{d}=\sqrt{14}$. The step sizes of the algorithms are tuned where the acceptance rates are around $30\%$. The traceplots of $\beta_1$ are shown in \cref{fig:bayesianstudent}. We observe that, using $10$ candidates in MTM still does not outperform SRWM. In contrast, SMTM shows a significant improvement with $N=10$ compared with $N=2$ and converges much faster than all the other algorithms.
\begin{figure}
    \centering
    \includegraphics[width=0.7\linewidth]{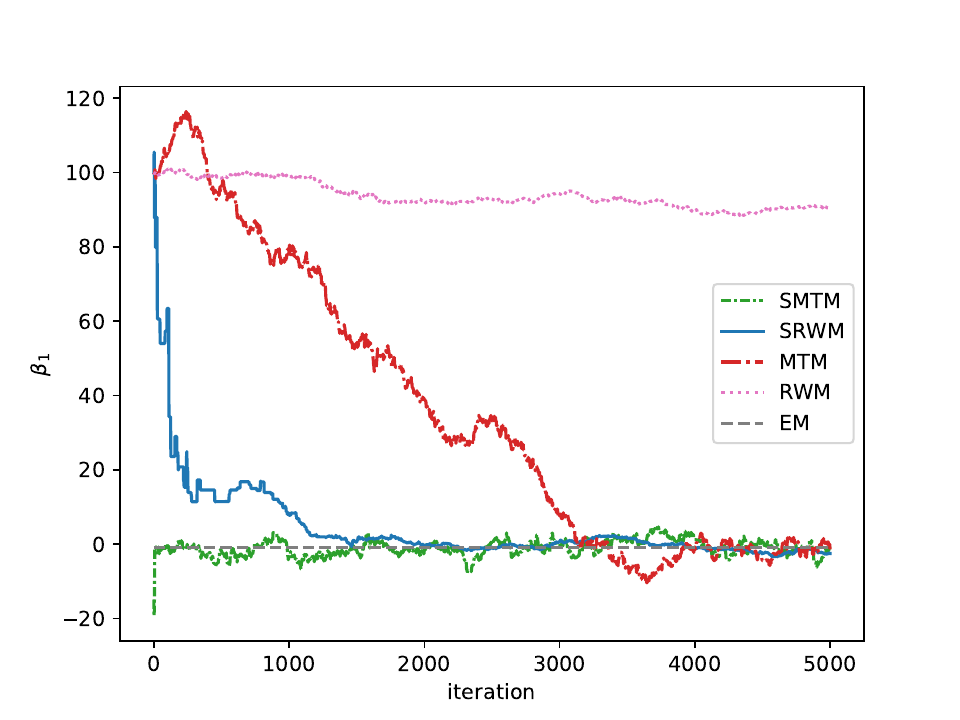}
    \caption{Burn-in Phases for Bayesian Student's $t$ Regression}
    \label{fig:bayesianstudent}
\end{figure}

\section*{Acknowledgement}
The authors are grateful to Roberto Casarin for helpful suggestions regarding the relevant literature. JY acknowledges support from the Independent Research Fund Denmark (DFF) through the Sapere Aude Starting Grant (No.~5251-00032B).

\bibliographystyle{imsart-nameyear}
\bibliography{bibliography}

\clearpage 
\begin{appendix}
\section{Proofs of Main Results}
\subsection{Proofs in \cref{Sec:SMTM}}
\subsubsection{Proof of \cref{prop:detailed balance}}\label{proof:prop:detailed balance}
\begin{proof}
    Denote by $\pi_S$ the pushforward of $\pi$ under $\SP^{-1}$ onto $\mathbb{S}^d$. For simplicity, we write its density with respect to the Lebesgue measure on $\mathbb{S}^d$ as $\pi_S(\cdot)$. Let $Q_M(z,\hat{z})$ and $Q_S(z,\hat{z})$ denote the transition kernels from $z$ to $\hat{z}$ for SMTM and SRWM, respectively, with respect to the Lebesgue measure on $\mathbb{S}^d$. We first show that the kernel $Q_M(z,\hat{z})$ satisfies the detailed balance condition with respect to $\pi_S$, i.e. $\pi_S(z) Q_M(z,\hat{z}) = \pi_S(\hat{z}) Q_M(\hat{z},z)$, which implies that $Q_M(z,\hat{z})$ is $\pi_S$-reversible and, equivalently, that $\pi$ is an invariant distribution of the Markov chain induced by SMTM.
    \begin{equation}
        \begin{aligned}
            &\quad\pi_S(z)Q_M(z,\hat{z})=\pi_S(z)\mathbb{P}\left(\bigcup_{j=1}^N\left\{(\hat{z}_j=\hat{z})\cap(I=j)\right\}\middle| z\right)\\
            &=N\pi_S(z)\mathbb{P}\left((\hat{z}_N=\hat{z})\cap(I=N)\middle| z\right)\\
            &=N\pi_S(z)\int\dots\int Q_S(z,\hat{z})Q_S(z,\hat{z}_1)\dots Q_S(z,\hat{z}_{N-1})\\
            &\quad\cdot\frac{\omega(z,\hat{z})}{\sum_{i=1}^{N-1}\omega(z,\hat{z}_i)+\omega(z,\hat{z})}\min\left\{1,\frac{\pi_S(\hat{z})\omega(\hat{z},z)/(\sum_{i=1}^{N-1}\omega(\hat{z},z_i^{\ast})+\omega(\hat{z},z))}{\pi_S(z)\omega(z,\hat{z})/(\sum_{i=1}^{N-1}\omega(z,\hat{z}_i)+\omega(z,\hat{z}))}\right\}\\
            &\quad\cdot Q_S(\hat{z},z_1^{\ast})\dots Q_S(\hat{z},z_{N-1}^{\ast}) \dee\hat{z}_1\dots  \dee\hat{z}_{N-1} \dee z_1^{\ast}\dots  \dee z_{N-1}^{\ast}\\
            &=N\pi_S(z)Q_S(z,\hat{z})\int\dots\int Q_S(z,\hat{z}_1)\dots Q_S(z,\hat{z}_{N-1})Q_S(\hat{z},z_1^{\ast})\dots Q_S(\hat{z},z_{N-1}^{\ast})\\
            &\quad\cdot\frac{\omega(z,\hat{z})}{\sum_{i=1}^{N-1}\omega(z,\hat{z}_i)+\omega(z,\hat{z})}\cdot\frac{\omega(\hat{z},z)}{\sum_{i=1}^{N-1}\omega(\hat{z},z_i^{\ast})+\omega(\hat{z},z)}\pi_S(\hat{z})\\
            &\quad\cdot\min\left\{\frac{\sum_{i=1}^{N-1}\omega(\hat{z},z_i^{\ast})+\omega(\hat{z},z)}{\pi_S(\hat{z})\omega(\hat{z},z)},\frac{\sum_{i=1}^{N-1}\omega(z,\hat{z}_i)+\omega(z,\hat{z})}{\pi_S(z)\omega(z,\hat{z})}\right\}  \dee\hat{z}_1\dots  \dee\hat{z}_{N-1} \dee z_1^{\ast}\dots  \dee z_{N-1}^{\ast}\\
            &=N\pi_S(z)\pi_S(\hat{z})\omega(z,\hat{z})\omega(\hat{z},z)Q_S(z,\hat{z}) \\
            &\quad\cdot \int\dots\int Q_S(z,\hat{z}_1)\dots Q_S(z,\hat{z}_{N-1})Q_S(\hat{z},z_1^{\ast})\dots Q_S(\hat{z},z_{N-1}^{\ast})\\
            &\quad\cdot\frac{1}{\sum_{i=1}^{N-1}\omega(z,\hat{z}_i)+\omega(z,\hat{z})}\cdot\frac{1}{\sum_{i=1}^{N-1}\omega(\hat{z},z_i^{\ast})+\omega(\hat{z},z)}\\
            &\quad\cdot\min\left\{\frac{\sum_{i=1}^{N-1}\omega(\hat{z},z_i^{\ast})+\omega(\hat{z},z)}{\pi_S(\hat{z})\omega(\hat{z},z)},\frac{\sum_{i=1}^{N-1}\omega(z,\hat{z}_i)+\omega(z,\hat{z})}{\pi_S(z)\omega(z,\hat{z})}\right\} \dee\hat{z}_1\dots  \dee\hat{z}_{N-1} \dee z_1^{\ast}\dots  \dee z_{N-1}^{\ast}.
        \end{aligned}
    \end{equation}
    Note that the integral term is symmetric in $z$ and $\hat{z}$; therefore, the detailed balance condition holds.

    Finally, we establish ergodicity of the Markov chain. It suffices to show that the chain is $\pi$-irreducible and aperiodic \citep{meyn2012markov}. This follows directly from \citet[Proposition 4.3]{fontaine2022}, since $\pi$ is finite and continuous.
\end{proof}

\subsubsection{Proof of \cref{prop:mtm non uniform}}\label{proof:prop:mtm non uniform}
Some definitions are needed for the proof of Proposition \ref{prop:mtm non uniform}. A Markov transition kernel $P$ on $\mathbb{R}^d$ is called to have uniformly tight increment distributions if for every $\epsilon>0$ there exists $K>0$ such that for all $x$ it holds $P(x,B(x,K))\geq 1-\epsilon$, where $B(x,K)$ denotes the open ball with center $x$ and radius $K$ w.r.t.\,Euclidean distance. Moreover, $P$ is said to have geometric drift towards the set $C$ if there exists a function $V\geq1$, finite for at least one $x$, and constant $\lambda<1$ and $b<\infty$ such that $PV(x)\leq\lambda V(x)+b\mathbf{1}_C(x)$. And a set $C$ is called small if there exists $n>0,\delta>0$ and a probability measure $\nu$ such that for $x\in C$ we have $P^n(x,\cdot)\geq\delta\nu(\cdot)$.

\begin{proof}[Proof of \cref{prop:mtm non uniform}]
   The proof below follows a similar argument to that in \citet{jarner2000geometric}. Since RWM has uniformly tight increment distributions, by the MTM proposal structure, we get that MTM also has uniformly tight increment distributions. Then by \citet[Lemma 2.2]{jarner2000geometric}, for such a Markov chain, every small set is bounded.

    Next, by  \citet[Theorem 3.1(ii)]{jarner2000geometric}, if a Markov chain is geometrically ergodic, then for some small set, the geometric drift condition holds. Under both conditions, by \citet[Lemma 3.2]{jarner2000geometric}, for such a Markov kernel $P$, there exist $\rho>0$ and $S>0$ such that 
    \begin{equation}
    \label{Eq:V(x) lower bound}
        V(x)\geq\exp\left(\rho\mathbb{E}_x\left[\tau_{B(0,S)}\right]\right),\quad\forall x: \|x\|\geq S,
    \end{equation}
    where $\tau_A=\min\{n\geq1\mid X(n)\in A\}$ is the first return time of the Markov chain $\{X(n)\}$ to the set $A$.

    Now we prove that for MTM with a given number of candidates $N$, if the single-try proposal $q(\cdot)$ satisfies $\int \|x\|q(\|x\|) \dee x<\infty$, there exist $s>0$ and $c>0$ such that
    \begin{equation}
        V(x)\geq c\exp(s\|x\|).
    \end{equation}
    Choose $S$ used in (\ref{Eq:V(x) lower bound}) and fix $\|x\|\geq S$. Let $\{X(i)\}$ be the Markov chain generated by MTM starting at $x$ and denote the proposed increments as $I^k(i)=X^k(i)-X(i-1)$, where $X^k(i)$ is the $k$-th candidate at the $i$-th iteration. Clearly $I^k(i)$'s are i.i.d.. For $i\geq1$, let
    \begin{equation}
        J^k(i)=n(X(i-1))\cdot I^k(i)\mathbf{1}_{\{n(X(i-1))\cdot I^k(i)<0\}},
    \end{equation}
    where $n(v)=v/\|v\|$ denotes the unit vector and if $v=0$, $n(v)$ can be chosen as any unit vector.

    Since the single-try proposal $q(\cdot)$ is random-walk-based, the distribution of $I^k(i)$ is symmetric, hence the distribution of $n(X(i-1))\cdot I^k(i)$ does not depend on the current state $X(i-1)$ and $J^k(i)$ is independent with $X(i-1)$. Using this and the fact that $I^k(i)$'s are i.i.d.\,we get that $J^k(i)$'s are also i.i.d.\,random variables. Since $J^k(i)$ is independent with $X(i-1)$, we can choose the unit vector of $X(i-1)$ as $n(X(i-1))=(0,\dots,0,1)$ and the CDF of $J^k(i)$ can be written as
    \begin{equation}
    \label{Eq:CDF of J}
        \mathbb{P}(J^k(i)\leq t)=\left\{
        \begin{aligned}
            &\int_{-\infty}^t\int_{\mathbb{R}^{d-1}}q(\|x\|) \dee x_1\dots  \dee x_{d-1} \dee x_d & t<0,\\
            &1 & t\geq0.
        \end{aligned}
        \right.
    \end{equation}

    Define a one-sided walk $W(i)$ on $\mathbb{R}$ by
    \begin{equation}
        W(0)=\|X(0)\|=\|x\|,\quad W(i)=W(i-1)+\min_kJ^k(i).
    \end{equation}
    We claim that $W(i)\leq \|X(i)\|$ for any $i$. Indeed, for $i=0$ we have $W(0)=\|X(0)\|$ and for $i>0$ we will show by induction that $W(i)\leq \|X(i)\|$ under the assumption of $W(i-1)\leq \|X(i-1)\|$.
    \begin{itemize}
        \item If $X(i)=X(i-1)$, since $W(i)\leq W(i-1)$ by $J^k(i)\leq0$ and $W_{i-1}\leq\|X(i-1)\|$ by assumption, we get that $W(i)\leq \|X(i-1)\|=\|X(i)\|$.
        \item If $X(i)=X(i-1)+I^l(i)$ for some $l$, then
        \begin{equation}
            \begin{aligned}
                \|X(i)\|&=\|X(i-1)+I^l(i)\|\\
                &=\|X(i-1)+n(X(i-1))(n(X(i-1))\cdot I^l(i))+I^l(i)-n(X(i-1))(n(X(i-1))\cdot I^l(i))\|\\
                &\geq \|X(i-1)+n(X(i-1))(n(X(i-1))\cdot I^l(i))\|\\
                &=\|n(X(i-1))\|X(i-1)\|+n(X(i-1))(n(X(i-1))\cdot I^l(i))\|\\
                &=|\|X(i-1)\|+n(X(i-1))\cdot I^l(i)|
                \geq\|X(i-1)\|+n(X(i-1))\cdot I^l(i)\mathbf{1}_{\{n(X(i-1))\cdot I^l(i)<0\}}\\
                &=\|X(i-1)\|+J^l(i)
                \geq\|X(i-1)\|+\min_k J^k(i)
                \geq W(i-1)+\min_k J^k(i)=W(i).
            \end{aligned}
        \end{equation}
    \end{itemize}
    Denote $\hat{\tau}_S$ as the first return time to $(-\infty,S]$ of $\{W(i)\}_{i=1}^{\infty}$, since $W(i)\leq \|X(i)\|$, we have $\hat{\tau}_S\leq\tau_{B(0,S)}$ and $\mathbb{E}_x[\hat{\tau}_S]\leq\mathbb{E}_x[\tau_{B(0,S)}]$. Let $\gamma$ denote the mean decrease of $\{W(i)\}$, then by (\ref{Eq:CDF of J}) we have
    \begin{equation}
        \begin{aligned}
            \gamma&=-\mathbb{E}\left[\min_kJ^k(1)\right]=\mathbb{E}\left[\max_k(-J^k(1))\right]\leq\mathbb{E}\left[\sum_k(-J^k(1))\right]=N\mathbb{E}\left[-J^1(1)\right]\\
            &=-N\int_{-\infty}^0x_d\int_{\mathbb{R}^{d-1}}q(\|x\|) \dee x_1\dots  \dee x_{d-1} \dee x_d\\
            &\leq N\int_{\mathbb{R}^d} \|x\| q(\|x\|)\dee x < \infty.
        \end{aligned}
    \end{equation}
    Then by Wald's formula \citep{wald1945} we get that
    \begin{equation}
        \mathbb{E}_x\left[\hat{\tau}_S\right]\geq\frac{\|x\|-S}{\gamma}.
    \end{equation}
    Now by (\ref{Eq:V(x) lower bound}) and $\mathbb{E}_x[\hat{\tau}_S]\leq\mathbb{E}_x[\tau_{B(0,S)}]$, we can lower bound $V(x)$ by
    \begin{equation}
        V(x)\geq\exp(\rho\frac{\|x\|-S}{\gamma}).
    \end{equation}
    Choosing $s$ and $c$ properly we have proven that
    \begin{equation}
        V(x)\geq c\exp(s\|x\|).
    \end{equation}
    Finally, by \citet[Corollary 3.4]{jarner2000geometric} we finish the proof of Proposition \ref{prop:mtm non uniform}.
\end{proof}

\subsubsection{Proof of \cref{Thm:uniform ergodic}}\label{proof:Thm:uniform ergodic}
We now present the proof of \cref{Thm:uniform ergodic}. Before doing so, we introduce some definitions and a lemma.
We define the following areas on sphere $\mathbb{S}^d$: the ``Arctic Circle", $\AC(\epsilon):=\{z\in\mathbb{S}^d:z_{d+1}\geq1-\epsilon\}$ and the almost ``hemisphere" from some point $z$, $\HS(z,\epsilon'):=\{z'\in\mathbb{S}^d:z^Tz'\geq\epsilon'\}$.
\begin{lemma}
\label{Lemma:uniform ergodic}
    Denote $Q_M(z,\hat{z})$, $Q_S(z,\hat{z})$ as the transition probabilities for moving from $z$ to $\hat{z}$ in SMTM and SRWM respectively. Then there exist positive constants $\delta_\epsilon$, $\delta'_{\epsilon'}$, $S_{\epsilon,\epsilon',d}$ and $M$ such that
    \begin{itemize}
        \item (globally-balanced case) when the weight function is taken as $\omega(z,\hat{z})=\frac{\pi_S(\hat{z})}{\pi_S(z)}$, 
        \begin{equation}
            Q_M(z,\hat{z})\geq Q_S(z,\hat{z})(\delta'_{\epsilon'})^{2N-2}\frac{\pi_S(\hat{z})}{M}\left(1\wedge\frac{(N-1)\delta_\epsilon+\pi_S(\hat{z})}{NM}\right)S^{2N-2}_{\epsilon,\epsilon',d}.
        \end{equation}
        In particular, if $\hat{z}\notin\AC(\epsilon)$, we have
        \begin{equation}
            Q_M(z,\hat{z})\geq Q_S(z,\hat{z})(\delta'_{\epsilon'})^{2N-2}\frac{\delta_\epsilon}{M}\left(1\wedge\frac{\delta_\epsilon}{M}\right)S^{2N-2}_{\epsilon,\epsilon',d}.
        \end{equation}
        \item (locally-balanced case) when the weight function is taken as $\omega(z,\hat{z})=\sqrt{\frac{\pi_S(\hat{z})}{\pi_S(z)}}$, 
        \begin{equation}
            Q_M(z,\hat{z})\geq Q_S(z,\hat{z})(\delta'_{\epsilon'})^{2N-2}\sqrt{\frac{\pi_S(\hat{z})}{M}}\left(1\wedge\frac{(N-1)\sqrt{\delta_\epsilon\pi_S(\hat{z})}+\pi_S(\hat{z})}{NM}\right)S^{2N-2}_{\epsilon,\epsilon',d}.
        \end{equation}
        In particular, if $\hat{z}\notin\AC(\epsilon)$, we have
        \begin{equation}
            Q_M(z,\hat{z})\geq Q_S(z,\hat{z})(\delta'_{\epsilon'})^{2N-2}\sqrt{\frac{\delta_\epsilon}{M}}\left(1\wedge\frac{\delta_\epsilon}{M}\right)S^{2N-2}_{\epsilon,\epsilon',d}.
        \end{equation}
    \end{itemize}
\end{lemma}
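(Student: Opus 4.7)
The plan is to lower bound the integral representation of $Q_M(z,\hat{z})$ obtained by marginalizing over the auxiliary proposal variables, restricting integration to a subdomain on which every factor admits a uniform bound. The hypothesis $\sup_x\pi(x)(R^2+\|x\|^2)^d<\infty$ supplies $M$ with $\pi_S\leq M$ on $\mathbb{S}^d$. Continuity and positivity of $\pi_S$ on the compact set $\{z'\in\mathbb{S}^d:z'_{d+1}\leq 1-\epsilon\}$ supply $\delta_\epsilon>0$ with $\pi_S(z')\geq\delta_\epsilon$ for all $z'\notin\AC(\epsilon)$. The SRWM proposal $Q_S(z,\cdot)$, constructed by an isotropic Gaussian perturbation in the tangent plane at $z$ followed by normalization to $\mathbb{S}^d$, admits a jointly continuous, strictly positive density on pairs $(z,z')$ with $z'\neq -z$; since $\HS(z,\epsilon')$ is a compact subset of $\mathbb{S}^d\setminus\{-z\}$, continuity combined with the rotational symmetry of the proposal produces $\delta'_{\epsilon'}>0$ with $Q_S(z,z')\geq\delta'_{\epsilon'}$ for $z'\in\HS(z,\epsilon')$, uniformly in $z$. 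Let $S_{\epsilon,\epsilon',d}$ denote the rotation-invariant surface measure of $\HS(z,\epsilon')\setminus\AC(\epsilon)$.

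Reprising the expansion used in the proof of \cref{prop:detailed balance},
\begin{equation*}
    Q_M(z,\hat{z}) = N\,Q_S(z,\hat{z}) \int \prod_{i=1}^{N-1} Q_S(z,\hat{z}_i)\,Q_S(\hat{z},z_i^{\ast}) \cdot W \cdot \alpha \,\dee\hat{z}_1\cdots\dee\hat{z}_{N-1}\,\dee z_1^{\ast}\cdots\dee z_{N-1}^{\ast},
\end{equation*}
where $W$ is the weight-selection fraction associated with choosing $\hat{z}$ and $\alpha$ is the acceptance probability. I restrict integration to $\hat{z}_i\in\HS(z,\epsilon')\setminus\AC(\epsilon)$ for $i=1,\dots,N-1$ and $z_i^{\ast}\in\HS(\hat{z},\epsilon')\setminus\AC(\epsilon)$ for $i=1,\dots,N-1$. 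On this subdomain, the $2(N-1)$ proposal factors are each at least $\delta'_{\epsilon'}$, the volume of integration is at least $S_{\epsilon,\epsilon',d}^{2N-2}$, and every auxiliary value of $\pi_S$ lies in $[\delta_\epsilon,M]$.

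Next I bound $W$ and $\alpha$ in the two weight choices. In the GB case, $W=\pi_S(\hat{z})/\bigl(\sum_{i=1}^{N-1}\pi_S(\hat{z}_i)+\pi_S(\hat{z})\bigr)\geq \pi_S(\hat{z})/(NM)$, and \cref{lemma:alpha computation} gives $\alpha=1\wedge\frac{\sum_{i=1}^{N}\pi_S(\hat{z}_i)}{\sum_{i=1}^{N-1}\pi_S(z_i^{\ast})+\pi_S(z)}\geq 1\wedge\frac{(N-1)\delta_\epsilon+\pi_S(\hat{z})}{NM}$. Combining these with the prefactor $N$ yields the first displayed bound. In the LB case the square-root weights instead give $W\geq\sqrt{\pi_S(\hat{z})}/(N\sqrt{M})$, and rewriting the LB formula in \cref{lemma:alpha computation} as $\alpha=1\wedge\frac{\sqrt{\pi_S(\hat{z})}\sum_{i=1}^N\sqrt{\pi_S(\hat{z}_i)}}{\sqrt{\pi_S(z)}(\sum_{i=1}^{N-1}\sqrt{\pi_S(z_i^{\ast})}+\sqrt{\pi_S(z)})}$ and bounding the ratio on the restricted domain yields $\alpha\geq 1\wedge\frac{(N-1)\sqrt{\delta_\epsilon\pi_S(\hat{z})}+\pi_S(\hat{z})}{NM}$, giving the second bound. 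When additionally $\hat{z}\notin\AC(\epsilon)$ so that $\pi_S(\hat{z})\geq\delta_\epsilon$, the numerator of the minimum becomes at least $N\delta_\epsilon$ and the factor $\pi_S(\hat{z})/M$ (respectively $\sqrt{\pi_S(\hat{z})/M}$) is bounded below by $\delta_\epsilon/M$ (respectively $\sqrt{\delta_\epsilon/M}$), producing the two specialized bounds.

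I expect the main obstacle to be establishing the uniform lower bound $\delta'_{\epsilon'}$: because $Q_S$ is constructed by a Gaussian-then-projection scheme rather than as an explicit closed-form density on $\mathbb{S}^d$, one must track the Jacobian of the normalization step and invoke the isotropy of the Gaussian perturbation carefully to produce a bound depending only on $\epsilon'$ and the fixed tuning parameter $h$, not on the base point $z$. All remaining estimates are routine manipulations of the finite sums defining $W$ and $\alpha$.
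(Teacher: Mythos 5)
Your proof is correct and follows essentially the same route as the paper: expand $Q_M(z,\hat z)$ via the $N$-fold symmetric integral representation, restrict the auxiliary integrations to $\HS(\cdot,\epsilon')\setminus\AC(\epsilon)$, and bound the selection weight, acceptance probability, proposal densities and domain volume by $\pi_S(\hat z)/(NM)$ (resp.\ its square-root analogue), $1\wedge\frac{(N-1)\delta_\epsilon+\pi_S(\hat z)}{NM}$ (resp.\ its LB analogue), $\delta'_{\epsilon'}$ and $S_{\epsilon,\epsilon',d}$, exactly as in the paper's argument. The only mild difference is that you spell out the justification of the uniform lower bound $\delta'_{\epsilon'}$ for $Q_S$ on the hemisphere (which the paper simply asserts), which is a harmless refinement rather than a different approach.
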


\begin{proof}
    We prove for the globally-balanced case. The proof is analogous for the locally-balanced case. By the definition of SMTM, we have
    \begin{equation}
        \begin{aligned}
            &\quad Q_M(z,\hat{z})=\mathbb{P}\left(\bigcup_{j=1}^N\left\{(\hat{z}_j=\hat{z})\cap(I=j)\right\}\middle| z\right)\\
            &=N\mathbb{P}\left((\hat{z}_N=\hat{z})\cap(I=N)\middle| z\right)\\
            &=N\int_{\mathbb{S}^d}\dots\int_{\mathbb{S}^d} Q_S(z,\hat{z})Q_S(z,\hat{z}_1)\dots Q_S(z,\hat{z}_{N-1})\\
            &\quad\cdot\frac{\omega(z,\hat{z})}{\sum_{i=1}^{N-1}\omega(z,\hat{z}_i)+\omega(z,\hat{z})}\left (1\wedge\frac{\pi_S(\hat{z})\omega(\hat{z},z)/(\sum_{i=1}^{N-1}\omega(\hat{z},z_i^{\ast})+\omega(\hat{z},z))}{\pi_S(z)\omega(z,\hat{z})/(\sum_{i=1}^{N-1}\omega(z,\hat{z}_i)+\omega(z,\hat{z}))}\right)\\
            &\quad\cdot Q_S(\hat{z},z_1^{\ast})\dots Q_S(\hat{z},z_{N-1}^{\ast}) \dee\hat{z}_1\dots  \dee\hat{z}_{N-1} \dee z_1^{\ast}\dots  \dee z_{N-1}^{\ast}
        \end{aligned}
    \end{equation}
    Since $\omega(z,\hat{z})=\frac{\pi_S(\hat{z})}{\pi_S(z)}$, it follows that
    \begin{equation}
        \begin{aligned}
            Q_M(z,\hat{z})&=N\int_{\mathbb{S}^d}\dots\int_{\mathbb{S}^d} Q_S(z,\hat{z})Q_S(z,\hat{z}_1)\dots Q_S(z,\hat{z}_{N-1})\\
            &\quad\cdot\frac{\pi_S(\hat{z})}{\sum_{i=1}^{N-1}\pi_S(\hat{z}_i)+\pi_S(\hat{z})}\left(1\wedge\frac{\sum_{i=1}^{N-1}\pi_S(\hat{z}_i)+\pi_S(\hat{z})}{\sum_{i=1}^{N-1}\pi_S(z_i^{\ast})+\pi_S(z)}\right)\\
            &\quad\cdot Q_S(\hat{z},z_1^{\ast})\dots Q_S(\hat{z},z_{N-1}^{\ast}) \dee\hat{z}_1\dots  \dee\hat{z}_{N-1} \dee z_1^{\ast}\dots  \dee z_{N-1}^{\ast}
        \end{aligned}
    \end{equation}
    Since $\sup_{x\in\mathbb{R}^d}\pi(x)(R^2+\|x\|^2)^d<\infty$, for any $z\in\mathbb{S}^d$, there exists $M>0$ such that $\pi_S(z)\leq M$. Moreover, if $z\notin \AC(\epsilon)$, then there exist $\delta_\epsilon>0$ such that $\pi_S(z)\geq\delta_\epsilon$. And if $\hat{z}\in\HS(z,\epsilon')$, then there exists $\delta'_{\epsilon'}>0$ such that $Q_S(z,\hat{z})\geq\delta'_{\epsilon'}$.
    Therefore, 
    \begin{equation}
        \begin{aligned}
            Q_M(z,\hat{z})&\geq N\int_{\HS(\hat{z},\epsilon')\backslash\AC(\epsilon)}\dots\int_{\HS(\hat{z},\epsilon')\backslash\AC(\epsilon)}\int_{\HS(z,\epsilon')\backslash\AC(\epsilon)}\dots\int_{\HS(z,\epsilon')\backslash\AC(\epsilon)}\\
            &\quad\cdot Q_S(z,\hat{z})Q_S(z,\hat{z}_1)\dots Q_S(z,\hat{z}_{N-1})\\
            &\quad\cdot\frac{\pi_S(\hat{z})}{\sum_{i=1}^{N-1}\pi_S(\hat{z}_i)+\pi_S(\hat{z})}\left(1\wedge\frac{\sum_{i=1}^{N-1}\pi_S(\hat{z}_i)+\pi_S(\hat{z})}{\sum_{i=1}^{N-1}\pi_S(z_i^{\ast})+\pi_S(z)}\right)\\
            &\quad\cdot Q_S(\hat{z},z_1^{\ast})\dots Q_S(\hat{z},z_{N-1}^{\ast}) \dee\hat{z}_1\dots  \dee\hat{z}_{N-1} \dee z_1^{\ast}\dots  \dee z_{N-1}^{\ast}\\
            &\geq NQ_S(z,\hat{z})(\delta'_{\epsilon'})^{N-1}\frac{\pi_S(\hat{z})}{NM}\left(1\wedge\frac{(N-1)\delta_\epsilon+\pi_S(\hat{z})}{NM}\right)(\delta'_{\epsilon'})^{N-1}\\
            &\quad\cdot\int_{\HS(\hat{z},\epsilon')\backslash\AC(\epsilon)}\dots\int_{\HS(\hat{z},\epsilon')\backslash\AC(\epsilon)}\int_{\HS(z,\epsilon')\backslash\AC(\epsilon)}\dots\int_{\HS(z,\epsilon')\backslash\AC(\epsilon)} \dee\hat{z}_1\dots  \dee\hat{z}_{N-1}\\
            &\quad\cdot  \dee z_1^{\ast}\dots  \dee z_{N-1}^{\ast}
        \end{aligned}
    \end{equation}
    Denote the area of $\HS(z_N,\epsilon')\backslash\AC(\epsilon)$ as $S_{\epsilon,\epsilon',d}$, where $z_N$ is the north pole of sphere $\mathbb{S}^d$. Then we have
    \begin{equation}
        Q_M(z,\hat{z})\geq Q_S(z,\hat{z})(\delta'_{\epsilon'})^{2N-2}\frac{\pi_S(\hat{z})}{M}\wedge\left(1\wedge\frac{(N-1)\delta_\epsilon+\pi_S(\hat{z})}{NM}\right)S^{2N-2}_{\epsilon,\epsilon',d}.
    \end{equation}
    If $\hat{z}\notin\AC(\epsilon)$, then $\pi_S(\hat{z})\geq\delta_{\epsilon}$ and
    \begin{equation}
        Q_M(z,\hat{z})\geq Q_S(z,\hat{z})(\delta'_{\epsilon'})^{2N-2}\frac{\delta_\epsilon}{M}\left(1\wedge\frac{\delta_\epsilon}{M}\right)S^{2N-2}_{\epsilon,\epsilon',d}.
    \end{equation}
\end{proof}

\begin{proof}[Proof of \cref{Thm:uniform ergodic}]
    We prove for the globally-balanced case. The proof is analogous for the locally-balanced case. It is equivalent to show that the sphere $\mathbb{S}^d$ is a small set. We will show that, for all measurable sets $A$, there exist a probability measure $\mu$ on $\mathbb{S}^d$ and a constant $\epsilon > 0$ such that
    \begin{equation}
        P^3(z,A)\geq\epsilon\mu(A).
    \end{equation}
    This minorization condition implies uniform ergodicity \citep{meyn2012markov}. Consider the 3-step path $z\to z_1\to z_2\to z'$, then we have
    \begin{equation}
        P^3(z,A)=\int_{z'\in A}\int_{z_2\in\mathbb{S}^d}\int_{z_1\in\mathbb{S}^d}Q_M(z,z_1)Q_M(z_1,z_2)Q_M(z_2,z') \dee z_1 \dee z_2\dee z'.
    \end{equation}
    By \cref{Lemma:uniform ergodic},
    \begin{equation}
        \begin{aligned}
            &\quad \int_{z_2\in\mathbb{S}^d}\int_{z_1\in\mathbb{S}^d}Q_M(z,z_1)Q_M(z_1,z_2)Q_M(z_2,z') \dee z_1 \dee z_2\\
            &\geq\int_{z_2\in\HS(z',\epsilon')\backslash\AC(\epsilon)}\int_{z_1\in\HS(z,\epsilon')\backslash\AC(\epsilon)}Q_M(z,z_1)Q_M(z_1,z_2)Q_M(z_2,z') \dee z_1 \dee z_2\\
            &\geq\int_{z_2\in\HS(z',\epsilon')\backslash\AC(\epsilon)}\int_{z_1\in\HS(z,\epsilon')\backslash\AC(\epsilon)}Q_S(z,z_1)Q_S(z_1,z_2)\left[(\delta'_{\epsilon'})^{2N-2}\frac{\delta_\epsilon}{M}\left(1\wedge\frac{\delta_\epsilon}{M}\right)S_{\epsilon,\epsilon',d}^{2N-2}\right]^2\\
            &\quad\cdot Q_S(z_2,z')(\delta'_{\epsilon'})^{2N-2}\frac{\pi_S(z')}{M}\left(1\wedge\frac{(N-1)\delta_\epsilon+\pi_S(z')}{NM}\right)S_{\epsilon,\epsilon',d}^{2N-2} \dee z_1 \dee z_2\\
            &\geq (\delta'_{\epsilon'})^{6N-3}S_{\epsilon,\epsilon',d}^{6N-6}\left(\frac{\delta_\epsilon}{M}\right)^2\left(1\wedge\frac{\delta_\epsilon}{M}\right)^2\frac{\pi_S(z')}{M}\left(1\wedge\frac{(N-1)\delta_\epsilon+\pi_S(z')}{NM}\right)\\
            &\quad\cdot \int_{z_2\in\HS(z',\epsilon')\backslash\AC(\epsilon)}\int_{z_1\in\HS(z,\epsilon')\backslash\AC(\epsilon)}\mathbf{1}_{z_1\in\HS(z_2,\epsilon')} \dee z_1 \dee z_2.
        \end{aligned}
    \end{equation}
    Using the same arguments as \citet[Proof of Theorem 2.1]{yang2024}, the last integration term can be bounded by a positive constant $C$:
    \begin{equation}
        \inf_{z,z'\in\mathbb{S}^d}\int_{z_2\in\HS(z',\epsilon')\backslash\AC(\epsilon)}\int_{z_1\in\HS(z,\epsilon')\backslash\AC(\epsilon)}\mathbf{1}_{z_1\in\HS(z_2,\epsilon')} \dee z_1 \dee z_2\geq\frac{1}{2}C>0,
    \end{equation}
    where 
    \begin{equation}
        C:=\inf_{z,z'\in\mathbb{S}^d}\int_{z_2\in\HS(z',0)\backslash \AC(0)}\int_{z_1\in\HS(z,0)\backslash\AC(0)}\mathbf{1}_{z_1\in\HS(z_2,0)}\dee z_1\dee z_2>0.
    \end{equation}
    Therefore,
    \begin{equation}
        \begin{aligned}
            \int_{z_2\in\mathbb{S}^d}\int_{z_1\in\mathbb{S}^d}&Q_M(z,z_1)Q_M(z_1,z_2)Q_M(z_2,z') \dee z_1 \dee z_2\\
            &\geq\frac{1}{2}C(\delta'_{\epsilon'})^{6N-3}S_{\epsilon,\epsilon',d}^{6N-6}\left(\frac{\delta_\epsilon}{M}\right)^2\left(1\wedge\frac{\delta_\epsilon}{M}\right)^2\frac{\pi_S(z')}{M}\left(1\wedge\frac{(N-1)\delta_\epsilon+\pi_S(z')}{NM}\right).
        \end{aligned}
    \end{equation}
    Denoting 
    \begin{equation}
        g(z')=\frac{1}{2}C(\delta'_{\epsilon'})^{6N-3}S_{\epsilon,\epsilon',d}^{6N-6}\left(\frac{\delta_\epsilon}{M}\right)^2\left(1\wedge\frac{\delta_\epsilon}{M}\right)^2\frac{\pi_S(z')}{M}\left(1\wedge\frac{(N-1)\delta_\epsilon+\pi_S(z')}{NM}\right),
    \end{equation}
    then $g(z)>0$ almost surely. It follows that
    \begin{equation}
        P^3(z,A)\geq\int_{A}g(z') \dee z'=\int_{\mathbb{S}^d}g(z') \dee z'\frac{\int_{A}g(z') \dee z'}{\int_{\mathbb{S}^d}g(z') \dee z'}=:\epsilon\mu(A),
    \end{equation}
    where $\epsilon=\int_{\mathbb{S}^d}g(z')\dee z'$ and $\mu(A)=\frac{\int_{A}g(z')\dee z'}{\int_{\mathbb{S}^d}g(z')\dee z'}$ is a probability measure.
    
\end{proof}

\subsection{Proofs in \cref{Sec:optimal scaling}}

\subsubsection{Proof of \cref{lemma:alpha computation}}\label{proof:lamma:alpha computation}
\begin{proof}

Note $\alpha_1^j$ is exactly the probability in \cref{eq:SMTM_alpha1} in \cref{alg:SMTM}. In the globally-balanced case, we have
\begin{equation}
    \begin{aligned}
        \alpha_1^j&=1\wedge\frac{\pi_S(\hat{z}_j)\omega(\hat{z}_j,z)/(\sum_{i=1}^{N-1}\omega(\hat{z}_j,z_i^{\ast})+\omega(\hat{z}_j,z))}{\pi_S(z)\omega(z,\hat{z}_j)/(\sum_{i=1}^N\omega(z,\hat{z}_i))}\\
        &=1\wedge\frac{\sum_{i=1}^N\pi_S(\hat{z}_i)}{\sum_{i=1}^{N-1}\pi_S(z_i^{\ast})+\pi_S(z)}\\
        &=1\wedge\frac{\sum_{i=1}^N\frac{\pi_S(\hat{z}_i)}{\pi_S(z)}}{\sum_{i=1}^{N-1}\frac{\pi_S(z_i^{\ast})}{\pi_S(\hat{z}_j)}\frac{\pi_S(\hat{z}_j)}{\pi_S(z)}+1}.
    \end{aligned}
\end{equation}
For $\alpha_2^j$, by the definition of conditional probability, it can be viewed as the product between $\alpha_1^j$ and the probability to choose $\hat{z}_j$, which is $\frac{\omega(z,\hat{z}_j)}{\sum_i\omega(z,\hat{z}_i)}$. Therefore, in the globally-balanced case, we have
\begin{equation}
    \begin{aligned}
        \alpha_2^j&=\frac{\pi_S(\hat{z}_j)}{\sum_{i=1}^N\pi_S(\hat{z}_i)}\cdot\alpha_1^j\\
        &=\frac{\pi_S(\hat{z}_j)}{\sum_{i=1}^N\pi_S(\hat{z}_i)}\left(1\wedge\frac{\sum_{i=1}^N\pi_S(\hat{z}_i)}{\sum_{i=1}^{N-1}\pi_S(z_i^{\ast})+\pi_S(z)}\right)\\
        &=\frac{\pi_S(\hat{z}_j)}{\sum_{i=1}^N\pi_S(\hat{z}_i)}\wedge\frac{\pi_S(\hat{z}_j)}{\sum_{i=1}^{N-1}\pi_S(z_i^{\ast})+\pi_S(z)}\\
        &=\frac{\frac{\pi_S(\hat{z}_j)}{\pi_S(z)}}{\sum_{i=1}^N\frac{\pi_S(\hat{z}_i)}{\pi_S(z)}}\wedge\frac{\frac{\pi_S(\hat{z}_j)}{\pi_S(z)}}{\sum_{i=1}^{N-1}\frac{\pi_S(z_i^{\ast})}{\pi_S(\hat{z}_j)}\frac{\pi_S(\hat{z}_j)}{\pi_S(z)}+1}.
    \end{aligned}
\end{equation}
The proof for the locally-balanced case is analogous.
\end{proof}

\subsubsection{Proof of \cref{Lemma:lipschitz}}\label{proof:Lemma:lipschitz}
\begin{proof}
    Obviously, for any $j$, $\phi_1^j$ and $\phi_2^j$ are upper bounded by $1$ and lower bounded by $0$, so they are both bounded. It suffices to prove the Lipschitz continuity. For simplicity, we only prove it for $\phi_1^j$ in the globally-balanced case. For the other three functions, the proof follows similarly. Write 
    \begin{equation}
        \zeta^j\left((x_i)_{i=1}^{N},(y_i)_{i=1}^{N-1}\right)=\frac{\sum_{i=1}^{N}e^{x_i}}{\sum_{i=1}^{N-1}e^{x_j}e^{y_i}+1},
    \end{equation}
    so $\phi_1^j$ can be rewritten as $\phi_1^j=1\wedge\zeta^j$, and 
    \begin{equation}
        \begin{aligned}
            &D=\left\{\left((x_i)_{i=1}^{N},(y_i)_{i=1}^{N-1}\right)\in\mathbb{R}^{2N-1}:\zeta\left((x_i)_{i=1}^{N},(y_i)_{i=1}^{N-1}\right)\leq1\right\},\\
            &F=\left\{\left((x_i)_{i=1}^{N},(y_i)_{i=1}^{N-1}\right)\in\mathbb{R}^{2N-1}:\zeta\left((x_i)_{i=1}^{N},(y_i)_{i=1}^{N-1}\right)\geq1\right\}.
        \end{aligned}
    \end{equation}
    By the definition of Lipschitz continuity, it suffices to show that there exists a positive constant $L$ such that for any $((x_i)_{i=1}^{N},(y_i)_{i=1}^{N-1})$ and $((\tilde{x}_i)_{i=1}^{N},(\tilde{y}_i)_{i=1}^{N-1})$ in $\mathbb{R}^{2N-1}$,
    \begin{equation}
        \left|\phi_1^j\left((x_i)_{i=1}^{N},(y_i)_{i=1}^{N-1}\right)-\phi_1^j\left((\tilde{x}_i)_{i=1}^{N},(\tilde{y}_i)_{i=1}^{N-1}\right)\right|\leq L\left\|\left((x_i)_{i=1}^{N},(y_i)_{i=1}^{N-1}\right)-\left((\tilde{x}_i)_{i=1}^{N},(\tilde{y}_i)_{i=1}^{N-1}\right)\right\|.
        \label{eq:lipschitz_def}
    \end{equation}
    If $((x_i)_{i=1}^{N},(y_i)_{i=1}^{N-1})\in F$ and $((\tilde{x}_i)_{i=1}^{N},(\tilde{y}_i)_{i=1}^{N-1})\in F$, then 
    \begin{equation}
        \phi_1^j\left((x_i)_{i=1}^{N},(y_i)_{i=1}^{N-1}\right)=\phi_1^j\left((\tilde{x}_i)_{i=1}^{N},(\tilde{y}_i)_{i=1}^{N-1}\right)=1.
    \end{equation}
    In this case, $\phi_1^j$ is a trivial constant function, and any constant function is Lipschitz continuous for any $L>0$.
    If $((x_i)_{i=1}^{N},(y_i)_{i=1}^{N-1})\in D$ and $((\tilde{x}_i)_{i=1}^{N},(\tilde{y}_i)_{i=1}^{N-1})\in D$, then 
    \begin{equation}
        \phi_1^j\left((x_i)_{i=1}^{N},(y_i)_{i=1}^{N-1}\right)=\zeta^j\left((x_i)_{i=1}^{N},(y_i)_{i=1}^{N-1}\right), 
    \end{equation}
    \begin{equation}
        \phi_1^j\left((\tilde{x}_i)_{i=1}^{N},(\tilde{y}_i)_{i=1}^{N-1}\right)=\zeta^j\left((\tilde{x}_i)_{i=1}^{N},(\tilde{y}_i)_{i=1}^{N-1}\right),
    \end{equation}
    so showing \cref{eq:lipschitz_def} is equivalent to showing that
    \begin{equation}
        \left|\zeta^j\left((x_i)_{i=1}^{N},(y_i)_{i=1}^{N-1}\right)-\zeta^j\left((\tilde{x}_i)_{i=1}^{N},(\tilde{y}_i)_{i=1}^{N-1}\right)\right|\leq L\left\|\left((x_i)_{i=1}^{N},(y_i)_{i=1}^{N-1}\right)-\left((\tilde{x}_i)_{i=1}^{N},(\tilde{y}_i)_{i=1}^{N-1}\right)\right\|.
    \end{equation}
    Taking the partial derivatives, we get that
    \begin{equation}
        \begin{aligned}
            &\frac{\partial\zeta^j}{\partial x_i}=\frac{e^{x_i}}{\sum_{i=1}^{N-1}e^{x_j}e^{y_i}+1}\quad(i\neq j),\\
            &\frac{\partial\zeta^j}{\partial x_j}=\frac{e^{x_j}}{\sum_{i=1}^{N-1}e^{x_j}e^{y_i}+1}-\frac{\sum_{i=1}^{N}e^{x_i}}{\sum_{i=1}^{N-1}e^{x_j}e^{y_i}+1}\cdot\frac{\sum_{i=1}^{N-1}e^{x_j}e^{y_i}}{\sum_{i=1}^{N-1}e^{x_j}e^{y_i}+1},\\
            &\frac{\partial\zeta^j}{\partial y_i}=\frac{-\sum_{i=1}^{N}e^{x_i}}{\sum_{i=1}^{N-1}e^{x_j}e^{y_i}+1}\cdot\frac{e^{x_j}e^{y_i}}{\sum_{i=1}^{N-1}e^{x_j}e^{y_i}+1},
        \end{aligned}
    \end{equation}
    which are all bounded since $\frac{\sum_{i=1}^{N}e^{x_i}}{\sum_{i=1}^{N-1}e^{x_j}e^{y_i}+1}\in (0,1]$. Thus $\zeta^j$ is Lipschitz on $D$.

    If $((x_i)_{i=1}^{N},(y_i)_{i=1}^{N-1})\in D$ and $((\tilde{x}_i)_{i=1}^{N},(\tilde{y}_i)_{i=1}^{N-1})\in F$, then 
    \begin{equation}
        \left|\phi_1^j\left((x_i)_{i=1}^{N},(y_i)_{i=1}^{N-1}\right)-\phi_1^j\left((\tilde{x}_i)_{i=1}^{N},(\tilde{y}_i)_{i=1}^{N-1}\right)\right|=\left|\zeta^j\left((x_i)_{i=1}^{N},(y_i)_{i=1}^{N-1}\right)-1\right|.
    \end{equation}
    Consider the line segment connecting $((x_i)_{i=1}^{N},(y_i)_{i=1}^{N-1})$ and $((\tilde{x}_i)_{i=1}^{N},(\tilde{y}_i)_{i=1}^{N-1})$, parameterized by $\gamma(t)=(1-t)\left((x_i)_{i=1}^{N},(y_i)_{i=1}^{N-1}\right)+t\left((\tilde{x}_i)_{i=1}^{N},(\tilde{y}_i)_{i=1}^{N-1}\right)$ for $t\in[0,1]$. Since $\zeta^j$ is continuous and $\zeta^j((x_i)_{i=1}^{N},(y_i)_{i=1}^{N-1})\leq 1,\zeta^j((\tilde{x}_i)_{i=1}^{N},(\tilde{y}_i)_{i=1}^{N-1})\geq 1$, by Intermediate Value Theorem, there exists some $t^\ast\in[0,1]$ such that 
    \begin{equation}
        \left((\bar{x}_i)_{i=1}^{N},(\bar{y}_i)_{i=1}^{N-1}\right)=t^\ast\left((x_i)_{i=1}^{N},(y_i)_{i=1}^{N-1}\right)+(1-t^\ast)\left((\tilde{x}_i)_{i=1}^{N},(\tilde{y}_i)_{i=1}^{N-1}\right)\in D\cap F.
    \end{equation}
    Then, we get
    \begin{equation}
        \begin{aligned}
            \left|\zeta^j\left((x_i)_{i=1}^{N},(y_i)_{i=1}^{N-1}\right)-1\right|&=\left|\zeta^j\left((x_i)_{i=1}^{N},(y_i)_{i=1}^{N-1}\right)-\zeta^j\left((\bar{x}_i)_{i=1}^{N},(\bar{y}_i)_{i=1}^{N-1}\right)\right|\\
            &\leq L\left\|\left((x_i)_{i=1}^{N},(y_i)_{i=1}^{N-1}\right)-\left((\bar{x}_i)_{i=1}^{N},(\bar{y}_i)_{i=1}^{N-1}\right)\right\|\\
            &\leq L\left\|\left((x_i)_{i=1}^{N},(y_i)_{i=1}^{N-1}\right)-\left((\tilde{x}_i)_{i=1}^{N},(\tilde{y}_i)_{i=1}^{N-1}\right)\right\|,
        \end{aligned}
    \end{equation}
    where $L$ is the Lipschitz constant, and we have used the fact that $\zeta^j$ is Lipschitz on $D$.

    Thus, we have proven the Lipschitz continuity for $\phi_1^j$.
\end{proof}

\subsubsection{Proof of \cref{Thm:accept rate}}\label{proof:Thm:accept rate}
The proof of \cref{Thm:accept rate} relies heavily on \citet[Lemma 5.1]{yang2024}, which is given in \cref{Lemma:accept rate}.
\begin{lemma}
\label{Lemma:accept rate}\cite[Lemma 5.1]{yang2024}
    Under the assumptions on $\pi$ in Section \ref{Sec:optimal scaling}, suppose the current state of SRWM $X\sim\pi$ and the parameter of SRWM $R=\sqrt{\lambda d}$ where $\lambda>0$ is a fixed constant. Then, if either $\lambda\neq1$ or $f$ is not the standard Gaussian density, there exists a sequence of sets $\{F_d\}$ such that $\pi(F_d)\to1$ and
    \begin{equation}
        \sup_{X\in F_d}\mathbb{E}_{\hat{X}\mid X}\left[\left|1\wedge\frac{\pi(\hat{X})(R^2+\|\hat{X}\|^2)^d}{\pi(X)(R^2+\|X\|^2)^d}-1\wedge\exp(W_{\hat{X}\mid X})\right|\right]=o\left(d^{-1/4}\log(d)\right),
    \end{equation}
    where $W_{\hat{X}\mid X}\sim\mathcal{N}(\mu,\sigma^2)$,
    \begin{equation}
        \mu=\frac{\ell^2}{2}\left(\frac{4\lambda}{(1+\lambda)^2}-\mathbb{E}[((\log f)')^2]\right),\quad\sigma^2=\ell^2\left(\mathbb{E}[((\log f)')^2]-\frac{4\lambda}{(1+\lambda)^2}\right),
    \end{equation}
    and $\ell$ is a re-parametrization of $h$ satisfying
    \begin{equation}
        \frac{1}{\sqrt{1+h^2(d-1)}}=1-\frac{\ell^2}{2d}\frac{4\lambda}{(1+\lambda)^2}.
    \end{equation}
\end{lemma}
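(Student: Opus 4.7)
The strategy is to establish a strong coupling between the log-acceptance ratio
\[
L_d := \log\frac{\pi(\hat{X})(R^2+\|\hat{X}\|^2)^d}{\pi(X)(R^2+\|X\|^2)^d}
\]
and a Gaussian random variable $W_{\hat X\mid X}\sim\mathcal N(\mu,\sigma^2)$ with the parameters in \cref{Eq:normal distribution}, and then transfer it to $1\wedge\exp(\cdot)$ using that this map is $1$-Lipschitz and bounded by $1$. Concretely, I would show that for $X$ in a suitable set $F_d$ with $\pi(F_d)\to 1$,
\[
\mathbb{E}_{\hat{X}\mid X}\bigl[|L_d - W_{\hat{X}\mid X}|\bigr]=o(d^{-1/4}\log d),
\]
from which the stated bound follows immediately. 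The set $F_d$ is the intersection of typical-set events on $X$: that $\|X\|^2/d$ concentrates around $\mathbb{E}_f[X^2]=1$, that $\sum_i((\log f)'(X_i))^2/d$ concentrates around $\mathbb{E}_f[((\log f)')^2]$, and similar conditions for $(\log f)''$ and $X\,(\log f)'(X)$. The assumed fourth/eighth moment bounds make each event hold with probability $1-o(1)$ by Chebyshev.

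First I would decompose $L_d=T_1+T_2$ with
\[
T_1=\sum_{i=1}^d\log\frac{f(\hat{X}_i)}{f(X_i)},\qquad T_2=d\log\frac{R^2+\|\hat{X}\|^2}{R^2+\|X\|^2},
\]
and expand each to cubic order in the increment $\hat X - X$, which I would represent through the explicit stereographic formulas and the tangent-Gaussian step $dZ$. For $T_1$, Taylor expansion gives a linear term $\sum_i(\log f)'(X_i)(\hat{X}_i-X_i)$ that is conditionally Gaussian up to an $O(h^2)$ spherical-renormalization correction, a quadratic term $\tfrac12\sum_i(\log f)''(X_i)(\hat{X}_i-X_i)^2$ that concentrates around a deterministic value, and a cubic remainder controlled by the Lipschitz assumption on $f'/f$ and the sixth moment of $f$. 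For $T_2$, on $F_d$ one has $R^2+\|X\|^2=d(\lambda+1)(1+o(1))$, so a second-order expansion of the $\log$ combined with $\|\hat X\|^2-\|X\|^2$ expressed via the sphere update yields a Gaussian linear contribution plus a deterministic quadratic contribution, with the geometric factor $4\lambda/(1+\lambda)^2$ emerging from the ratio of the sphere-induced variance to $(R^2+\|X\|^2)^2$.

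Adding the two, the net linear contribution is a conditionally Gaussian random variable whose mean and variance, after substituting the reparametrization of $h$ in \cref{Eq:reparametrization}, exactly match the stated $\mu$ and $\sigma^2$ on $F_d$; this is the definition of the coupling $W_{\hat X\mid X}$. The total error in $L^1$ is then bounded by three pieces: cubic Taylor remainders of order $O(h^3\cdot d)=O(d^{-1/2})$ using Cauchy--Schwarz and the moment bounds; fluctuations of the quadratic sums around their deterministic limits of order $O(d^{-1/2}\log d)$ via Chebyshev together with a logarithmic truncation; and a Berry--Esseen-type normal approximation of the linear sum that is sharp at rate $O(d^{-1/4})$, which is the dominant term and accounts for the factor $d^{-1/4}$ in the claim.

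The main obstacle will be the bookkeeping that delivers the sharp rate $o(d^{-1/4}\log d)$. The linear sum is a weighted functional of a tangent Gaussian whose covariance involves the projection $I-zz^T/\|z\|^2$; propagating the resulting $O(1/d)$ covariance mismatch through the Berry--Esseen step without losing the rate is delicate and will require the quantitative normal-approximation bound for smooth Lipschitz test functions (here $u\mapsto 1\wedge e^u$). A second subtle point is the excluded degenerate case: when $\lambda=1$ and $f$ is standard Gaussian one checks that $\mu=\sigma^2=0$ and $L_d$ becomes higher order, so the Gaussian coupling breaks down; the hypothesis of the lemma is exactly what keeps $\sigma^2$ bounded away from $0$, which is needed for the $1$-Lipschitz image bound to deliver the claimed rate rather than a degenerate one.
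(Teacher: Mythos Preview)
The paper does not prove this lemma. It is quoted verbatim from \cite[Lemma 5.1]{yang2024} and used as a black box: immediately after stating it, the authors write ``From the proof of \cite[Lemma 5.1]{yang2024} we can also get that\ldots'' and extract two consequences (equations \eqref{Eq:accept convergence} and \eqref{Eq:north pole}) which they then apply in the proof of \cref{Thm:accept rate}. There is therefore no proof in this paper to compare your proposal against.

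Your outline is the standard optimal-scaling strategy (decompose the log-ratio, Taylor-expand, show linear terms are conditionally Gaussian and quadratic terms concentrate, control cubic remainders, then push through the $1$-Lipschitz map $u\mapsto 1\wedge e^u$) and is the natural route one would expect the original proof in \cite{yang2024} to take. If you want to verify the details and in particular the sharp rate $o(d^{-1/4}\log d)$, you will need to consult that reference directly; the present paper provides no independent argument.
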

Furthermore, from the proof of \citet[Lemma 5.1]{yang2024} we can also get that
\begin{equation}
\label{Eq:accept convergence}
    \sup_{X\in F_d}\mathbb{E}_{\hat{X}\mid X}\left[\left|\log\frac{\pi(\hat{X})(R^2+\|\hat{X}\|^2)^d}{\pi(X)(R^2+\|X\|^2)^d}\mathbf{1}_{\hat{z}_{d+1}\leq1-\epsilon}-W_{\hat{X}\mid X}\right|\right]=o\left(d^{-1/4}\log(d)\right),
\end{equation}
and
\begin{equation}
\label{Eq:north pole}
    \sup_{X\in F_d}\mathbb{P}(\hat{z}_{d+1}>1-\epsilon)= o(d^{-1/2}),
\end{equation}
where $F_d$ is explicitly constructed in \citet[Proof of Lemma 5.1]{yang2024} by
\begin{equation}
    \begin{aligned}
        F_d:=&\left\{x\in\mathbb{R}^d:\left|\frac{1}{d}\sum_{i=1}^d\left[(\log f(x_i))'\right]^2-\mathbb{E}_f\left[((\log f)')^2\right]\right|<d^{-1/2}\log d\right\}\\
        &\cap\left\{x\in\mathbb{R}^d:\left|\frac{1}{d}\sum_{i=1}^d\left[(\log f(x_i))''\right]-\mathbb{E}_f\left[(\log f)''\right]\right|<d^{-1/2}\log d\right\}\\
        &\cap\left\{x\in\mathbb{R}^d:\left|\frac{1}{d}\sum_{i=1}^dx_i\left(\log f(x_i)\right)'-\mathbb{E}_f\left[X(\log f)'\right]\right|<d^{-1/2}\log d\right\}\\
        &\cap\left\{x\in\mathbb{R}^d:\left|\frac{1}{d}\sum_{i=1}^dx_i^2-\mathbb{E}_f\left[X^2\right]\right|<d^{-1/2}\log d\right\}.
    \end{aligned}
    \label{eq:definition_Fd1}
\end{equation}

\begin{proof}[Proof of Theorem \ref{Thm:accept rate}]
    We prove the result for $\alpha_1^j$, and the proof for $\alpha_2^j$ is analogous. 
    
    Note that
    \begin{equation}
        \alpha_1^j=\phi_1^j\left(\left(\log\frac{\pi_S(\hat{z}_i)}{\pi_S(z)}\right)_{i=1}^{N},\left(\log\frac{\pi_S(z_i^{\ast})}{\pi_S(\hat{z}_j)}\right)_{i=1}^{N-1}\right).
    \end{equation}
    Denote $\hat{z}_{i,d+1}$ as the $(d+1)$-th coordinate of $\hat{z}_i$. For simplicity we denote the truncated log-likelihoods vector as $\mathcal{L}$ and the random variables vector as $\mathcal{R}$, i.e.
    \begin{equation}
        \begin{aligned}
            &\mathcal{L}=\left(\left(\log\frac{\pi_S(\hat{z}_i)}{\pi_S(z)}\mathbf{1}_{\hat{z}_{i,d+1}\leq1-\epsilon}\right)_{i=1}^{N},\left(\log\frac{\pi_S(z_i^{\ast})}{\pi_S(\hat{z}_j)}\mathbf{1}_{z_{i,d+1}^{\ast}\leq1-\epsilon}\right)_{i=1}^{N-1}\right),\\
            &\mathcal{R}=\left(\left(W_i\right)_{i=1}^{N},\left(V_i\right)_{i=1}^{N-1}\right).
        \end{aligned}
    \end{equation}
Using the $F_d$ defined in \cref{eq:definition_Fd1}, by union bound, \cref{Eq:north pole}, the upper bound and Lipschitz continuity of $\phi_1^j$, we can get that
    \begin{equation}
        \begin{aligned}
            &\quad\sup_{X\in F_d}\mathbb{E}\left[\left|\phi_1^j\left(\left(\log\frac{\pi_S(\hat{z}_i)}{\pi_S(z)}\right)_{i=1}^{N},\left(\log\frac{\pi_S(z_i^{\ast})}{\pi_S(\hat{z}_j)}\right)_{i=1}^{N-1}\right)-\phi_1^j\left(\left(W_i\right)_{i=1}^{N},\left(V_i\right)_{i=1}^{N-1}\right)\right|\middle| X\right]\\
            &\leq\sup_{X\in F_d}\mathbb{E}\left[\left|\phi_1^j(\mathcal{L})-\phi_1^j(\mathcal{R})\right|\left(\mathbf{1}_{\hat{X}_j\in F_d}+\mathbf{1}_{\hat{X}_j\in F_d^c}\right)\middle| X\right]\\
            &\qquad+\sup_{X\in F_d}\mathbb{P}\left(\exists z\in\left\{\hat{z}_1,\dots,\hat{z}_N,z_1^{\ast},\dots,z_{N-1}^{\ast}\right\}\text{ satisfying }z_{d+1}>1-\epsilon\right)\\
            &\leq\sup_{X\in F_d}\mathbb{E}\left[\left|\phi_1^j(\mathcal{L})-\phi_1^j(\mathcal{R})\right|\mathbf{1}_{\hat{X}_j\in F_d}\middle| X\right]+\sup_{X\in F_d}\mathbb{E}\left[\mathbf{1}_{\hat{X}_j\in F_d^c}\middle |X\right]+(2N-1)\cdot o(d^{-1/2})\\
            &\leq L\sup_{X\in F_d}\mathbb{E}\left[\|\mathcal{L}-\mathcal{R}\|\mathbf{1}_{\hat{X}_j\in F_d}\middle| X\right]+(2N-1)\cdot o\left(d^{-1/2}\right)\\
            &\leq L\sup_{X\in F_d}\mathbb{E}\left[\sum_{i=1}^{2N-1}\left|\mathcal{L}_i-\mathcal{R}_i\right|\mathbf{1}_{\hat{X}_j\in F_d}\middle| X\right]+(2N-1)\cdot o\left(d^{-1/2}\right)\\
            &\leq L\sum_{i=1}^{2N-1}\sup_{X\in F_d}\mathbb{E}\left[|\mathcal{L}_i-\mathcal{R}_i|\mathbf{1}_{\hat{X}_j\in F_d}\middle| X\right]+(2N-1)\cdot o\left(d^{-1/2}\right),
        \end{aligned}
    \end{equation}
    where $L$ stands for the Lipschitz constant.
    For any $i\in\{1,\dots,N\}$, by \eqref{Eq:accept convergence} we can directly get that
    \begin{equation}
        \sup_{X\in F_d}\mathbb{E}[|\mathcal{L}_i-\mathcal{R}_i|\mathbf{1}_{\hat{X}_j\in F_d}\mid X]\leq\sup_{X\in F_d}\mathbb{E}[|\mathcal{L}_i-\mathcal{R}_i|\mid X]=o\left(d^{-1/4}\log(d)\right).
    \end{equation}
    For any $i\in\{N+1,\dots,2N-1\}$, by the construction of SMTM, $z_i^{\ast}$'s are chosen with the same SRWM method but starting at $\hat{z}_j$, which implies
    \begin{equation}
        \begin{aligned}
            \sup_{X\in F_d}\mathbb{E}[|\mathcal{L}_i-\mathcal{R}_i|\mathbf{1}_{\hat{X}_j\in F_d}\mid X]&=\sup_{X\in F_d}\mathbb{E}\left[\mathbb{E}\left[|\mathcal{L}_i-\mathcal{R}_i|\mathbf{1}_{\hat{X}_j\in F_d}\middle| \hat{X}_j\right]\middle| X\right]\\ 
            &\leq\sup_{X\in F_d}\mathbb{E}\left[\sup_{\hat{X}_j\in F_d}\mathbb{E}\left[|\mathcal{L}_i-\mathcal{R}_i|\middle | \hat{X}_j\right]\middle| X\right]\\
            &=o\left(d^{-1/4}\log(d)\right).
        \end{aligned}
    \end{equation}
    Hence
    \begin{equation}
        \sup_{X\in F_d}\mathbb{E}\left[\left|\alpha_1^j-\phi_1^j\left((W_i)_{i=1}^{N},(V_i)_{i=1}^{N-1}\right)\right|\middle| X\right]=o\left(d^{-1/4}\log(d)\right).
    \end{equation}
    
    Obviously $(W_i)_{i=1}^{N}$ are identically distributed given current state $X$ and $(V_i)_{i=1}^{N-1}$ are identically distributed given $\hat{X}_j$. Now we prove that they are conditionally independent. By the construction of $W_i$'s in the proof of \citet[Lemma 5.1]{yang2024}, for any $i\in\{1,\dots,N\}$, $W_i$ can be written as
    \begin{equation}
        W_i=\mu-\sigma^2\tilde{U}_i,
    \end{equation}
    where $\mu,\sigma$ are defined in \eqref{Eq:normal distribution}, $\tilde{U}_i\mid X\sim\mathcal{N}(0,1)$ stands for the marginal of the normalized $i$-th proposal distribution to the direction $\tilde{v}$ and
    \begin{equation}
        \tilde{v}:=\left(\left(\log f(X_1)\right)',\dots,\left(\log f(X_d)\right)',\sum_{i=1}^d\left(\frac{1+\lambda}{2\lambda}\left(\log f(X_i)\right)'z_i+\frac{1}{R}\right)\right)
    \end{equation}
    depends only on the current state $X$. Since the proposal distributions are conditionally independent, $\tilde{U}_i$'s are also conditionally independent, and so are $W_i$'s. Similarly, we can also get that $V_i$'s are independent when conditioned on $\hat{X}_j$.
\end{proof}

\subsubsection{Proof of \cref{Thm:maxesjd}}\label{proof:thm:maxesjd}
From \cref{Eq:approx esjd} we get that
\begin{equation}
    \widetilde{\ESJD}(\ell)=N\ell^2\mathbb{E}\left[\phi_2^1((W_i)_{i=1}^N,(V_i)_{i=1}^{N-1})\middle| X\right],
\end{equation}
where $(W_i)_{i=1}^N$ and $(V_i)_{i=1}^{N-1}$ are defined in \cref{Thm:accept rate} and have conditional distribution
\begin{equation}
    \mathcal{N}\left(-\frac{\ell^2}{2}\left(\mathbb{E}\left[((\log f)')^2\right]-\frac{4\lambda}{(1+\lambda)^2}\right),\ell^2\left(\mathbb{E}\left[((\log f)')^2\right]-\frac{4\lambda}{(1+\lambda)^2}\right)\right).
\end{equation}
Also, by \citet[Theorem 2]{bedard2012scaling} we can get the limit of the approximated ESJD of MTM:
\begin{equation}
    \lim_{d\to\infty}\ESJD_{\MTM}(\ell)=N\ell^2\mathbb{E}\left[\phi_2^1((\bar{W}_i)_{i=1}^N,(\bar{V}_i)_{i=1}^{N-1})\middle| X\right],
\end{equation}
where $(\bar{W}_i)_{i=1}^N$ and $(\bar{V}_i)_{i=1}^{N-1}$ are conditionally i.i.d.\,random variables with distribution
\begin{equation}
    \mathcal{N}\left(-\frac{\ell^2}{2}\mathbb{E}\left[((\log f)')^2\right],\ell^2\mathbb{E}\left[((\log f)')^2\right]\right).
\end{equation}
Rescaling the step size of SMTM by 
\begin{equation}
    \ell'^2\mathbb{E}\left[((\log f)')^2\right]=\ell^2\left(\mathbb{E}\left[((\log f)')^2\right]-\frac{4\lambda}{(1+\lambda)^2}\right),
\end{equation}
we have
\begin{equation}
    \widetilde{\ESJD}(\ell)=N\ell'^2\frac{\mathbb{E}\left[((\log f)')^2\right]}{\mathbb{E}\left[((\log f)')^2\right]-\frac{4\lambda}{(1+\lambda)^2}}\mathbb{E}\left[\phi_2^1((\tilde{W}_i)_{i=1}^N,(\tilde{V}_i)_{i=1}^{N-1})\middle| X\right],
\end{equation}
where $(\tilde{W}_i)_{i=1}^N$ and $(\tilde{V}_i)_{i=1}^{N-1}$ are conditionally i.i.d.\,random variables with the same distribution as $(\bar{W}_i)_{i=1}^N$ and $(\bar{V}_i)_{i=1}^{N-1}$. 

Then, by maximizing the approximate ESJD of SMTM, we get
\begin{equation}
    \begin{aligned}
        \max_{\ell}\widetilde{\ESJD}(\ell)&=N\frac{\mathbb{E}\left[((\log f)')^2\right]}{\mathbb{E}\left[((\log f)')^2\right]-\frac{4\lambda}{(1+\lambda)^2}}\max_{\ell'}\ell'^2\mathbb{E}\left[\phi_2^1((\tilde{W}_i)_{i=1}^N,(\tilde{V}_i)_{i=1}^{N-1})\middle| X\right]\\
        &=N\frac{\mathbb{E}\left[((\log f)')^2\right]}{\mathbb{E}\left[((\log f)')^2\right]-\frac{4\lambda}{(1+\lambda)^2}}\max_{\ell}\ell^2\mathbb{E}\left[\phi_2^1((\bar{W}_i)_{i=1}^N,(\bar{V}_i)_{i=1}^{N-1})\middle| X\right]\\
        &=\frac{\mathbb{E}\left[((\log f)')^2\right]}{\mathbb{E}\left[((\log f)')^2\right]-\frac{4\lambda}{(1+\lambda)^2}}\max_{\ell}\lim_{d\to\infty}\ESJD_{\MTM}.
    \end{aligned}
\end{equation}
Therefore,
\begin{equation}
        \frac{\max_\ell\widetilde{\ESJD}}{\max_\ell\lim_{d\to\infty}\ESJD_{\MTM}}=\frac{\mathbb{E}\left[((\log f)')^2\right]}{\mathbb{E}\left[((\log f)')^2\right]-\frac{4\lambda}{(1+\lambda)^2}}=\frac{1}{1-\alpha\cdot\beta\cdot\gamma}.
    \end{equation}
\subsubsection{Proof of \cref{Thm:esjd}}
\label{proof:thm:esjd}
\begin{proof}
    By the assumptions on $\pi$, it suffices to consider the first coordinate,
    \begin{equation}
        \ESJD_{\SMTM}=N\mathbb{E}_{X\sim\pi}\left[\mathbb{E}\left[(\hat{X}_1-X)^2\alpha_2^1\middle| X\right]\right]=Nd\mathbb{E}_{X\sim\pi}\left[\mathbb{E}\left[(\hat{X}_{1,1}-X_1)^2\alpha_2^1\middle| X\right]\right].
    \end{equation}
    Denote $X_j^{\ast}=\SP(z_j^{\ast})$ for $j\in\{1,\dots,N-1\}$. Define $F_d$ as in \citet[Proof of Theorem 5.1]{yang2024},
    i.e.,
    \begin{equation}
    \begin{aligned}
        F_d:=&\left\{x\in\mathbb{R}^d:\left|\frac{1}{d-1}\sum_{i=2}^d\left[(\log f(x_i))'\right]^2-\mathbb{E}_f\left[((\log f)')^2\right]\right|<d^{-1/8}\right\}\\
        &\cap\left\{x\in\mathbb{R}^d:\left|\frac{1}{d-1}\sum_{i=2}^d\left[(\log f(x_i))''\right]-\mathbb{E}_f\left[(\log f)''\right]\right|<d^{-1/8}\right\}\\
        &\cap\left\{x\in\mathbb{R}^d:\left|\frac{1}{d-1}\sum_{i=2}^dx_i\left(\log f(x_i)\right)'-\mathbb{E}_f\left[X(\log f)'\right]\right|<d^{-1/8}\right\}\\
        &\cap\left\{x\in\mathbb{R}^d:\left|\frac{1}{d}\sum_{i=1}^dx_i^2-\mathbb{E}_f\left[X^2\right]\right|<d^{-1/6}\log d\right\}\\
        &\cap\left\{x\in\mathbb{R}^d:|x_1|<d^{1/5}\right\}.
    \end{aligned}
    \label{eq:definition_Fd2}
\end{equation}
    and $\bar{F}_d=\{x\in\mathbb{R}^{d}:z_{d+1}\leq 1-\epsilon\}$. Then, using the same argument about the Cauchy--Schwarz inequality in \citet[Proof of Theorem 5.1]{yang2024}, we have
    \begin{equation}
        \frac{\ESJD_{\SMTM}}{Nd}\leq\sup_{X\in F_d}\mathbb{E}\left[\left(\hat{X}_{1,1}-X_1\right)^2\phi_2^1\left(\left(\log\frac{\pi_S(\hat{z}_i)}{\pi_S(z)}\right)_{i=1}^N,\left(\log\frac{\pi_S(z_i^{\ast})}{\pi_S(\hat{z}_1)}\right)_{i=1}^{N-1}\right)\mathbf{1}_{\hat{X}_1,X_1^{\ast}\in\bar{F}_d}\middle| X\right]+o(d^{-1}).
    \end{equation}
    Construct the couplings $(\tilde{X}_j)_{j=1}^N$ and $(\bar{X}_j)_{j=1}^{N-1}$ such that
    \begin{equation}
        \begin{aligned}
            &\tilde{X}_{j,i}=\hat{X}_{j,i}\quad\text{for }i=2,\dots,d\\
            &\tilde{X}_{j,1}\stackrel{d}{=}\hat{X}_{j,1}\quad\text{and}\quad\left(\tilde{X}_{j,1}\perp \!\!\! \perp\hat{X}_{j,1}\mid\hat{X}_{j,2:d}\right),
        \end{aligned}
    \end{equation}
    and
    \begin{equation}
        \begin{aligned}
            &\bar{X}_{j,i}=X_{j,i}^{\ast}\quad\text{for }i=2,\dots,d\\
            &\bar{X}_{j,1}\stackrel{d}{=}X_{j,1}^{\ast}\quad\text{and}\quad\left(\bar{X}_{j,1}\perp \!\!\! \perp X_{j,1}^{\ast}\mid X_{j,2:d}^{\ast}\right).
        \end{aligned}
    \end{equation}
    Write $(W_{\tilde{X}_i\mid X})_{i=1}^N$ and $(W_{\bar{X}_i\mid\hat{X}_1})_{i=1}^{N-1}$ as conditionally i.i.d.\,random variables with distribution $\mathcal{N}(\mu,\sigma^2)$ with
    \begin{equation}
        \mu=\frac{\ell^2}{2}\left(1-\mathbb{E}[((\log f)')^2]\right),\quad\sigma^2=\ell^2\left(\mathbb{E}[((\log f)')^2]-1\right),
    \end{equation}
    then by the arguments in \citet[Proof of Theorem 5.1]{yang2024}, the Lipschitz continuity of $\phi_2^1$ and \eqref{Eq:accept convergence}, we get that
    \begin{equation}
        \begin{aligned}
            &\sup_{X\in F_d}\mathbb{E}\left[\left(\hat{X}_{1,1}-X_1\right)^2\phi_2^1\left(\left(\log\frac{\pi_S(\hat{z}_i)}{\pi_S(z)}\right)_{i=1}^N,\left(\log\frac{\pi_S(z_i^{\ast})}{\pi_S(\hat{z}_1)}\right)_{i=1}^{N-1}\right)\mathbf{1}_{\hat{X}_1,X_1^{\ast}\in\bar{F}_d}\middle| X\right]\\
            \to&\sup_{X\in F_d}\mathbb{E}\left[\left(\hat{X}_{1,1}-X_1\right)^2\phi_2^1\left(\mathcal{A}_i)_{i=1}^N,(\mathcal{B}_i)_{i=1}^{N-1}\right)\mathbf{1}_{\hat{X}_1,X_1^{\ast}\in\bar{F}_d}\middle| X\right]\\
            \to&\sup_{X\in F_d}\mathbb{E}\left[\left(\hat{X}_{1,1}-X_1\right)^2\phi_2^1\left(\mathcal{A}_i)_{i=1}^N,(\mathcal{B}_i)_{i=1}^{N-1}\right)\middle| X\right],
        \end{aligned}
    \end{equation}
    where
    \begin{equation}
        \mathcal{A}_i=\log\frac{f(\hat{X}_{i,1})}{f(X_1)}+\frac{\hat{X}_{i,1}^2-\mathbb{E}[\tilde{X}_{i,1}^2]}{2}+W_{\tilde{X}_i\mid X},\quad \mathcal{B}_i=\log\frac{f(X_{i,1}^{\ast})}{f(\hat{X}_{1,1})}+\frac{X_{i,1}^{\ast}-\mathbb{E}[\bar{X}_{i,1}^2]}{2}+W_{\bar{X}_i\mid\hat{X}_1}.
    \end{equation} 
    Now we construct two conditionally i.i.d.\,series $(\tilde{W}_i)_{i=1}^N$ and $(\bar{W})_{i=1}^{N-1}$ satisfying for any $i$
    \begin{equation}
    \label{Eq:independent coupling}
        \tilde{W}_i\stackrel{d}{=}W_{\tilde{X}_i\mid X},\quad\tilde{W}_i\perp \!\!\! \perp\hat{X}_i\quad\text{and}\quad\sup_{X\in F_d}\mathbb{E}\left[\left(\tilde{W}_i-W_{\tilde{X}_i\mid X}\right)^2\right]=o(1),
    \end{equation}
    \begin{equation}
        \bar{W}_i\stackrel{d}{=}W_{\bar{X}_i\mid \hat{X}_1},\quad\bar{W}_i\perp \!\!\! \perp X_i^{\ast}\quad\text{and}\quad\sup_{X\in F_d}\mathbb{E}\left[\left(\bar{W}_i-W_{\bar{X}_i\mid \hat{X}_1}\right)^2\right]=o(1).
    \end{equation}
    According to \citet[S8.10.1]{yang2024}, for any $i\in\{1,\dots,N\}$ we can write 
    \begin{equation}
        W_{\tilde{X}_i\mid X}=\mathcal{F}_1(\sum_{j=1}^dV_{i,j}^2,U_{i,2:d+1}),\quad\hat{X}_{i,1}=\mathcal{F}_2(\sum_{j=1}^dV_{i,j}^2,U_{i,1},U_{i,d+1}),
    \end{equation}
    where $(U_i)_{i=1}^N$ are conditionally independent since the proposals are conditionally independent and $\mathcal{F}$ represents the source of randomness: $W_{\tilde{X}_i\mid X}=\mathcal{F}_1(\sum_{j=1}^dV_{i,j}^2,U_{i,2:d+1})$ denotes the randomness of $W_{\tilde{X}_i\mid X}$ comes from $\sum_{j=1}^dV_{i,j}^2$ and $U_{i,2:d+1}$, and $\hat{X}_{i,1}=\mathcal{F}_2(\sum_{j=1}^dV_{i,j}^2,U_{i,1},U_{i,d+1})$ denotes the randomness of $\hat{X}_{i,1}$ comes from $\sum_{j=1}^dV_{i,j}^2$, $U_{i,1}$ and $U_{i,d+1}$.
    
    Define 
    \begin{equation}
        W_{\tilde{X}_i\mid X}':=\mathcal{F}_1(d,U_{i,2:d+1}).
    \end{equation}
    For $l=2,\dots,d+1$, $U_{i,l}$ can be decomposed by
    \begin{equation}
        U_{i,l}=c_{i,l}^{\|}U_{i,1}+c_{i,l}^{\perp}U_{i,l}^{\perp},
    \end{equation}
    where $c_{i,l}^{\perp}$ is the component independent with $U_{i,1}$ and $c_{i,l}^{\|}$ is the remaining component. Now we replace the $U_{i,1}$ by an independent copy, i.e.
    \begin{equation}
        U_{i,l}'=c_{i,l}^{\|}\tilde{U}_{i,1}+c_{i,l}^{\perp}U_{i,l}^{\perp},
    \end{equation}
    where $\tilde{U}_{i,1}$ is a standard Gaussian random variable independent with $U_{i,1}$ and $(\tilde{U}_{i,1})_{i=1}^N$ are conditionally independent. Since $(U_i)_{i=1}^N$ are conditionally independent, for a fixed $l$, $(c_{i,l}^{\perp}U_{i,l}^{\perp})_{i=1}^N$ are also conditionally independent and so are $(U_{i,l}')_{i=1}^N$. Then we define
    \begin{equation}
        W_{\tilde{X}_i\mid X}'':=\mathcal{F}_1(d,U_{i,2:d+1}').
    \end{equation}
    For $l=2,\dots,d$, $U_{i,l}'$ can be decomposed similarly with respect to $U_{i,d+1}'$, i.e.
    \begin{equation}
        U_{i,l}'=\tilde{c}_{i,l}^{\|}U_{i,d+1}'+\tilde{c}_{i,l}^{\perp}U_{i,l}'^{\perp},
    \end{equation}
    and define
    \begin{equation}
        \tilde{U}_{i,l}=\tilde{c}_{i,l}^{\|}\tilde{U}_{i,d+1}+\tilde{c}_{i,l}^{\perp}U_{i,l}'^{\perp},
    \end{equation}
    where $\tilde{U}_{i,d+1}$ is an independent copy of $U_{i,d+1}'$ and $(\tilde{U}_{i,d+1})_{i=1}^N$ are conditionally independent for any fixed $l$. Using the same argument we get that $(\tilde{U}_{i,l})_{i=1}^N$ are conditionally independent. Finally we define
    \begin{equation}
        \tilde{W}_i:=\mathcal{F}_1(d,\tilde{U}_{i,2:d+1}).
    \end{equation}
    Then $(\tilde{W}_i)_{i=1}^N$ are conditionally independent and satisfy \eqref{Eq:independent coupling} by the same arguments in \citet[S8.10.2-S8.10.4]{yang2024}. Analogously we can construct $(\bar{W}_i)_{i=1}^{N-1}$.
    
    Write $\tilde{\mathcal{A}}_i=\log\frac{f(\hat{X}_{i,1})}{f(X_1)}+\frac{\hat{X}_{i,1}^2-\mathbb{E}[\tilde{X}_{i,1}^2]}{2}+\tilde{W}_i$ and $\tilde{\mathcal{B}}_i=\log\frac{f(X_{i,1}^{\ast})}{f(\hat{X}_{1,1})}+\frac{X_{i,1}^{\ast}-\mathbb{E}[\bar{X}_{i,1}^2]}{2}+\bar{W}_i$, then
    \begin{equation}
        \begin{aligned}
            &\sup_{X\in F_d}\mathbb{E}\left[\left(\hat{X}_{1,1}-X_1\right)^2\phi_2^1\left((\mathcal{A}_i)_{i=1}^N,(\mathcal{B}_i)_{i=1}^{N-1}\right)\middle| X\right]\\
            \to&\sup_{X\in F_d}\mathbb{E}\left[\left(\hat{X}_{1,1}-X_1\right)^2\phi_2^1\left((\tilde{\mathcal{A}}_i)_{i=1}^N,(\tilde{\mathcal{B}}_i)_{i=1}^{N-1}\right)\middle| X\right]\\
            =&\sup_{X\in F_d}\mathbb{E}\left[\left(\hat{X}_{1,1}-X_1\right)^2\mathbb{E}\left[\phi_2^1\left((\tilde{\mathcal{A}}_i)_{i=1}^N,(\tilde{\mathcal{B}}_i)_{i=1}^{N-1}\right)\middle| X,\hat{X}_{1:N},X^{\ast}_{1:N-1}\right]\middle| X\right].
        \end{aligned}
    \end{equation}
    Thus, we have, as $d\to\infty$
    \begin{equation}
        \begin{aligned}
            \frac{\ESJD_{\SMTM}}{Nd}&\to \sup_{X\in F_d}\mathbb{E}\left[\left(\hat{X}_{1,1}-X_1\right)^2\mathbb{E}\left[\phi_2^1\left((\tilde{\mathcal{A}}_i)_{i=1}^N,(\tilde{\mathcal{B}}_i)_{i=1}^{N-1}\right)\middle| X,\hat{X}_{1:N},X^{\ast}_{1:N-1}\right]\middle| X\right]\\
            &\to \mathbb{E}\left[\left(\hat{X}_{1,1}-X_1\right)^2\right]\mathbb{E}\left[\phi_2^1\left((W_i)_{i=1}^N,(V_i)_{i=1}^{N-1}\right)\middle| X\right],\\
            \ESJD_{\SMTM}&\to N\ell^2\mathbb{E}\left[\phi_2^1\left((W_i)_{i=1}^N,(V_i)_{i=1}^{N-1}\right)\middle| X\right].
        \end{aligned}
    \end{equation}
\end{proof}

\subsection{Proofs in \cref{Sec:n to infty}}
\subsubsection{Proof of \cref{prop:accept rate converge}}\label{proof:prop:accept rate converge}
\begin{proof}
    Firstly, in the globally-balanced case, since $(W_i)_{i=1}^{N}$ and $(V_i)_{i=1}^{N-1}$ are conditionally i.i.d.\,we have
    \begin{equation}
        \begin{aligned}
            \sum_{j=1}^{N}\mathbb{E}\left[\phi_2^j\left((W_i)_{i=1}^N,(V_i)_{i=1}^{N-1}\right)\middle | X\right]&=\sum_{j=1}^{N}\mathbb{E}\left[\frac{e^{W_j}}{\sum_{i=1}^Ne^{W_i}}\wedge\frac{e^{W_j}}{e^{W_j}\sum_{i=1}^{N-1}e^{V_i}+1}\middle | X\right]\\
            &=N\mathbb{E}\left[\frac{e^{W_1}}{e^{W_1}+\sum_{i=2}^Ne^{W_i}}\wedge\frac{e^{W_1}}{e^{W_1}\sum_{i=1}^{N-1}e^{V_i}+1}\middle | X\right]\\
            &=\mathbb{E}\left[\frac{e^{W_1}}{\frac{1}{N}e^{W_1}+\frac{1}{N}\sum_{i=2}^Ne^{W_i}}\wedge\frac{e^{W_1}}{e^{W_1}\frac{1}{N}\sum_{i=1}^{N-1}e^{V_i}+\frac{1}{N}}\middle | X\right].
        \end{aligned}
    \end{equation} 
    The variance of a log-normal random variable with $\mu=-\frac{\sigma^2}{2}$ is $e^{\sigma^2}-1$, so by strong law of large numbers we have
    \begin{equation}
        \frac{1}{N-1}\sum_{i=2}^Ne^{W_i}\xrightarrow{a.s.} 1,\quad\frac{1}{N-1}\sum_{i=1}^{N-1}e^{V_i}\xrightarrow{a.s.} 1.
    \end{equation}
    Therefore, by the dominated convergence theorem, we have
    \begin{equation}
        \begin{aligned}
            \sum_{j=1}^{N}\mathbb{E}\left[\phi_2^j\left((W_i)_{i=1}^N,(V_i)_{i=1}^{N-1}\right)\middle | X\right]&=\mathbb{E}\left[\mathbb{E}\left[\frac{e^{W_1}}{\frac{1}{N}e^{W_1}+\frac{1}{N}\sum_{i=2}^Ne^{W_i}}\wedge\frac{e^{W_1}}{e^{W_1}\frac{1}{N}\sum_{i=1}^{N-1}e^{V_i}+\frac{1}{N}}\middle | W_1,X\right]\middle | X\right]\\
            &\to\mathbb{E}\left[\frac{e^{W_1}}{\frac{1}{N}e^{W_1}+\frac{N-1}{N}}\wedge\frac{e^{W_1}}{\frac{1}{N}+\frac{N-1}{N}e^{W_1}}\middle | X\right]\\
            &\to\mathbb{E}\left[e^{W_1}\wedge1\middle | X\right].
        \end{aligned}
    \end{equation}
    From \citet[Corollary 5.2]{yang2024}, we know that $\mathbb{E}\left[e^{W_1}\wedge1\middle | X\right]$ is approximately 0.234 using the optimal $\ell$ that maximizes the ESJD.
    
    Observing that the minimum between $\frac{e^{W_1}}{\frac{1}{N}e^{W_1}+\frac{N-1}{N}}$ and $\frac{e^{W_1}}{\frac{1}{N}+\frac{N-1}{N}e^{W_1}}$ always changes when $e^{W_1}\geq 1$ versus $e^{W_1}<1$ if $N\geq 2$, we can rewrite $\frac{e^{W_1}}{\frac{1}{N}e^{W_1}+\frac{N-1}{N}}\wedge\frac{e^{W_1}}{\frac{1}{N}+\frac{N-1}{N}e^{W_1}}$ as 
    \begin{equation}
        \begin{aligned}
            \frac{e^{W_1}}{\frac{1}{N}e^{W_1}+\frac{N-1}{N}}\wedge\frac{e^{W_1}}{\frac{1}{N}+\frac{N-1}{N}e^{W_1}}&=\frac{N}{1+(N-1)e^{-W_1}}\wedge\frac{N}{(N-1)+e^{-W_1}}\\
            &=\frac{1+x}{x+e^{-W_1}}\wedge\frac{1+x}{1+xe^{-W_1}}\\
            &=\frac{1+x}{1+x e^{-W_1}}\mathbf{1}_{e^{-W_1}<1}+\frac{1+x}{e^{-W_1}+x}\mathbf{1}_{e^{-W_1}\geq 1}:=A(x),
        \end{aligned}
    \end{equation}
    where we changed variable $N$ to $x=\frac{1}{N-1}\in (0,1]$.

    Next, we study the monotonicity of $A(x)$. Conditional on $W_1$, we switch the expectation and derivative, then (a) when $e^{-W_1}<1$ the first term increases from $1\to \frac{2}{1+e^{-W_1}}>1$ when $x$ from $0\to1$; (b) similarly the second term increases from $\frac{1}{e^{-W_1}}$ to $\frac{2}{1+e^{-W_1}}$ since $e^{-W_1}\geq1$ when $x$ from $0\to1$. Therefore, for any give $W_1$, the function $A(x)$ is increasing w.r.t.\,$x$, which implies that it is decreasing w.r.t.\,$N$. Overall when $N\to\infty$, we have
    \[
    \mathbb{E}\left[\frac{1}{\frac{1}{N}+\frac{N-1}{N}e^{-W_1}}\wedge \frac{1}{\frac{N-1}{N}+\frac{1}{N}e^{-W_1}}\middle|X\right]\downarrow \mathbb{E}[e^{W_1}\wedge 1\mid X].
    \]
    This shows that the optimal acceptance rate by optimizing $\ell$ in the $N\to\infty$ limit is still roughly $0.234$ for the globally-balanced weight function. 
    
    In the locally-balanced case, similarly, we have
    \begin{equation}
        \phi_2^j((W_i)_{i=1}^N,(V_i)_{i=1}^{N-1})=\frac{e^{W_j}}{\sum_{i=1}^N\sqrt{e^{W_i}e^{W_j}}}\wedge\frac{e^{W_j}}{\sum_{i=1}^{N-1}\sqrt{e^{V_i}e^{W_j}}+1},
    \end{equation}
    and since the variances are finite, by the strong law of large numbers
    \begin{equation}
        \frac{1}{N-1}\sum_{i=2}^{N}\sqrt{e^{W_i-W}}\xrightarrow{a.s.} e^{-\frac{W}{2}-\frac{\sigma^2}{8}},\quad\frac{1}{N-1}\sum_{i=1}^{N-1}\sqrt{e^{V_i-W}}\xrightarrow{a.s.} e^{-\frac{W}{2}-\frac{\sigma^2}{8}}.
    \end{equation}
    Then by the dominated convergence theorem,
    \begin{equation}
        \begin{aligned}
            &\sum_{j=1}^N\mathbb{E}\left[\frac{e^{W_j}}{\sum_{i=1}^N\sqrt{e^{W_i}e^{W_j}}}\wedge\frac{e^{W_j}}{\sum_{i=1}^{N-1}\sqrt{e^{V_i}e^{W_j}}+1}\middle | X\right]\\
            &=\mathbb{E}\left[\frac{1}{\frac{1}{N}\sum_{i=2}^{N}\sqrt{e^{W_i-W_1}}+\frac{1}{N}}\wedge\frac{1}{\frac{1}{N}\sum_{i=1}^{N-1}\sqrt{e^{V_i-W_1}}+\frac{1}{N}e^{-W_1}}\middle | X\right]\\
            &\to\mathbb{E}\left[\frac{1}{\frac{N-1}{N}e^{-\frac{W_1}{2}-\frac{\sigma^2}{8}}+\frac{1}{N}}\wedge\frac{1}{\frac{N-1}{N}e^{-\frac{W_1}{2}-\frac{\sigma^2}{8}}+\frac{1}{N}e^{-W_1}}\middle | X\right]\\
            &\to\mathbb{E}\left[e^{\frac{W_1}{2}+\frac{\sigma^2}{8}}\middle | X\right]=1.
        \end{aligned}
    \end{equation}
    Let $x=\frac{1}{N-1}$. We can rewrite $\frac{1}{\frac{N-1}{N}e^{-\frac{W_1}{2}-\frac{\sigma^2}{8}}+\frac{1}{N}}\wedge\frac{1}{\frac{N-1}{N}e^{-\frac{W_1}{2}-\frac{\sigma^2}{8}}+\frac{1}{N}e^{-W_1}}$ as
    \begin{equation}
        B(x):=\frac{1+x}{e^{-\frac{W_1}{2}-\frac{\sigma^2}{8}}+x}\mathbf{1}_{e^{-W_1}<1}+\frac{1+x}{e^{-\frac{W_1}{2}-\frac{\sigma^2}{8}}+xe^{-W_1}}\mathbf{1}_{e^{-W_1}\geq1}.
    \end{equation}
    Using similar arguments we have: (a) when $e^{-W_1}<1$ the first term decreases from $\frac{1}{e^{-\frac{W_1}{2}-\frac{\sigma^2}{8}}}\to \frac{2}{1+e^{-\frac{W_1}{2}-\frac{\sigma^2}{8}}}>1$ since $e^{-\frac{W_1}{2}-\frac{\sigma^2}{8}}<1$ when $x$ from $0\to1$; (b) when $e^{-W_1}\geq 1$, the second term decreases from $\frac{1}{e^{-\frac{W_1}{2}-\frac{\sigma^2}{8}}}$ to $\frac{2}{e^{-\frac{W_1}{2}-\frac{\sigma^2}{8}}+e^{-W_1}}$, since $W_1\leq0$ and $e^{-W_1}>e^{-\frac{W_1}{2}-\frac{\sigma^2}{8}}$ when $x$ from $0\to1$. Therefore, when $N\to\infty$, we have
    \begin{equation}
        \mathbb{E}\left[\frac{1}{\frac{N-1}{N}e^{-\frac{W_1}{2}-\frac{\sigma^2}{8}}+\frac{1}{N}}\wedge\frac{1}{\frac{N-1}{N}e^{-\frac{W_1}{2}-\frac{\sigma^2}{8}}+\frac{1}{N}e^{-W_1}}\middle | X\right]\uparrow 1.
    \end{equation}
\end{proof}

\subsubsection{Proof of \cref{thm:infinity weak convergence}}\label{proof:thm:infinity weak convergence}
\begin{proof}
    Since the stereographic projection is continuous and weak convergence is preserved under continuous function, it is equivalent to show that $\{Z_N(m)\}$ converges weakly to $\{Z_{\infty}(m)\}$ provided that $Z_N(0)\sim\pi_S$ and $Z_{\infty}(0)\sim\pi_S$, where $\{Z_N(m)\}$ and $\{Z_{\infty}(m)\}$ are SMTM and the ideal scheme on the sphere, respectively. The Markov kernels of $Z_N$ and $Z_{\infty}$ can be respectively written as
    \begin{equation}
        \begin{aligned}
            P_N(z,A)=&\sum_{j=1}^N\int_{\hat{z}_j\in A}\frac{\omega(z,\hat{z}_j)}{\sum_{i=1}^N\omega(z,\hat{z}_i)}\prod_{i=1}^NQ_S(z,\hat{z}_i)\prod_{i=1}^{N-1}Q_S(\hat{z}_j,z_i^{\ast})\alpha(z,\hat{z}_j) \dee\hat{z}_{1:N} \dee z_{1:N-1}^{\ast}\\
            +&\mathbf{1}_{z\in A}\sum_{j=1}^N\int\frac{\omega(z,\hat{z}_j)}{\sum_{i=1}^N\omega(z,\hat{z}_i)}\prod_{i=1}^NQ_S(z,\hat{z}_i)\prod_{i=1}^{N-1}Q_S(\hat{z}_j,z_i^{\ast})\left(1-\alpha(z,\hat{z}_j)\right) \dee\hat{z}_{1:N} \dee z_{1:N-1}^{\ast},
        \end{aligned}
    \end{equation}
    \begin{equation}
        P_{\infty}(z,A)=\int_{\hat{z}\in A}Q_{\infty}(z,\hat{z})\alpha_{\infty}(z,\hat{z}) \dee\hat{z}+\mathbf{1}_{z\in A}\int Q_{\infty}(z,\hat{z})\left(1-\alpha_{\infty}(z,\hat{z})\right) \dee\hat{z},
    \end{equation}
    where 
    \begin{equation}
        \alpha_\infty(z,\hat{z})=1\wedge\frac{\pi_S(\hat{z})Q_\infty(\hat{z},z)}{\pi_S(z)Q_\infty(z,\hat{z})}.
    \end{equation}
    Since $\hat{z}_1,\dots,\hat{z}_N$ are conditionally i.i.d., the integrals in $P_N$ are the same, so we can further write it as
    \begin{equation}
        \begin{aligned}
            P_N(z,A)=&\int_{\hat{z}_1\in A}\frac{\omega(z,\hat{z}_1)}{\frac{1}{N}\sum_{i=1}^N\omega(z,\hat{z}_i)}\prod_{i=1}^NQ_S(z,\hat{z}_i)\prod_{i=1}^{N-1}Q_S(\hat{z}_1,z_i^{\ast})\alpha(z,\hat{z}_1) \dee\hat{z}_{1:N} \dee z_{1:N-1}^{\ast}\\
            +&\mathbf{1}_{z\in A}\int\frac{\omega(z,\hat{z}_1)}{\frac{1}{N}\sum_{i=1}^N\omega(z,\hat{z}_i)}\prod_{i=1}^NQ_S(z,\hat{z}_i)\prod_{i=1}^{N-1}Q_S(\hat{z}_1,z_i^{\ast})\left(1-\alpha(z,\hat{z}_1)\right) \dee\hat{z}_{1:N} \dee z_{1:N-1}^{\ast}.
        \end{aligned}
    \end{equation}
    By \citet[Theorem 4]{Gagnon2023}, it suffices to show that
    \begin{enumerate}
        \item For any bounded continuous function $h:\mathbb{S}^d\to\mathbb{R}$, as $N\to\infty$,
        \begin{equation}
        \label{Eq:weak convergence}
            \int\left|P_Nh(z)-P_\infty h(z)\right|\pi_S(z)\dee z\to0,
        \end{equation}
        where $P_Nh(z):=\int P_N(z,\hat{z})h(\hat{z}) \dee\hat{z}$ and $P_\infty h(z):=\int P_\infty(z,\hat{z})h(\hat{z}) \dee\hat{z}$.
        \item $P_\infty h$ is continuous for any bounded continuous function $h$.
    \end{enumerate}
    
    We first prove that $\int|P_Nh(z)-P_\infty h(z)|\pi_S(z)\dee z\to0$. By the definition of $P_Nh$ and $P_\infty h$ we can write that
    \begin{equation}
        \begin{aligned}
            P_Nh(z)=&\int\frac{\omega(z,\hat{z}_1)}{\frac{1}{N}\sum_{i=1}^N\omega(z,\hat{z}_i)}\prod_{i=1}^NQ_S(z,\hat{z}_i)\prod_{i=1}^{N-1}Q_S(\hat{z}_1,z_i^{\ast})\alpha(z,\hat{z}_1)h(\hat{z}_1) \dee\hat{z}_{1:N} \dee z_{1:N-1}^{\ast}\\
            +&h(z)\int\frac{\omega(z,\hat{z}_1)}{\frac{1}{N}\sum_{i=1}^N\omega(z,\hat{z}_i)}\prod_{i=1}^NQ_S(z,\hat{z}_i)\prod_{i=1}^{N-1}Q_S(\hat{z}_1,z_i^{\ast})\left(1-\alpha(z,\hat{z}_1)\right) \dee\hat{z}_{1:N} \dee z_{1:N-1}^{\ast},
        \end{aligned}
    \end{equation}
    \begin{equation}
        \begin{aligned}
            P_{\infty}h(z)=\int Q_{\infty}(z,\hat{z})\alpha_{\infty}(z,\hat{z})h(\hat{z}) \dee\hat{z}+h(z)\int Q_{\infty}(z,\hat{z})\left(1-\alpha_{\infty}(z,\hat{z})\right) \dee\hat{z}.
        \end{aligned}
    \end{equation}
    By the triangle inequality, \eqref{Eq:weak convergence} can be achieved as long as we prove that
    \begin{equation}
        \begin{aligned}
            \int\bigg|&\int\frac{\omega(z,\hat{z}_1)}{\frac{1}{N}\sum_{i=1}^N\omega(z,\hat{z}_i)}\prod_{i=1}^NQ_S(z,\hat{z}_i)\prod_{i=1}^{N-1}Q_S(\hat{z}_1,z_i^{\ast})\alpha(z,\hat{z}_1)h(\hat{z}_1) \dee\hat{z}_{1:N} \dee z_{1:N-1}^{\ast}\\
            &-\int Q_{\infty}(z,\hat{z})\alpha_{\infty}(z,\hat{z})h(\hat{z}) \dee\hat{z}\bigg|\pi_S(z)\dee z\to 0,
        \end{aligned}
        \label{eq:convergence_1}
    \end{equation}
    and
    \begin{equation}
        \begin{aligned}
            \int\bigg|&h(z)\int\frac{\omega(z,\hat{z}_1)}{\frac{1}{N}\sum_{i=1}^N\omega(z,\hat{z}_i)}\prod_{i=1}^NQ_S(z,\hat{z}_i)\prod_{i=1}^{N-1}Q_S(\hat{z}_1,z_i^{\ast})\left(1-\alpha(z,\hat{z}_1)\right) \dee\hat{z}_{1:N} \dee z_{1:N-1}^{\ast}\\
            &-h(z)\int Q_{\infty}(z,\hat{z})\left(1-\alpha_{\infty}(z,\hat{z})\right) \dee\hat{z}\bigg|\pi_S(z)\dee z\to 0.
        \end{aligned}
        \label{eq:convergence_2}
    \end{equation}
    For the convergence in \cref{eq:convergence_1}, since
    \begin{equation}
        \begin{aligned}
            &\int Q_{\infty}(z,\hat{z})\alpha_{\infty}(z,\hat{z})h(\hat{z}) \dee\hat{z}=\int\frac{\omega(z,\hat{z}_1)Q_S(z,\hat{z}_1)}{\int\omega(z,\hat{z}_1)Q_S(z,\hat{z}_1) \dee\hat{z}_1}\alpha_\infty(z,\hat{z}_1)h(\hat{z}_1) \dee\hat{z}_1\\
            &=\int\frac{\omega(z,\hat{z}_1)}{\int\omega(z,\hat{z}_1)Q_S(z,\hat{z}_1) \dee\hat{z}_1}\alpha_\infty(z,\hat{z}_1)h(\hat{z}_1)\prod_{i=1}^NQ_S(z,\hat{z}_i)\prod_{i=1}^{N-1}Q_S(\hat{z}_1,z_i^{\ast}) \dee\hat{z}_{1:N} \dee z_{1:N-1}^{\ast},
        \end{aligned}
    \end{equation}
    one can get that
    \begin{equation}
        \begin{aligned}
            \int\bigg|&\int\frac{\omega(z,\hat{z}_1)}{\frac{1}{N}\sum_{i=1}^N\omega(z,\hat{z}_i)}\prod_{i=1}^NQ_S(z,\hat{z}_i)\prod_{i=1}^{N-1}Q_S(\hat{z}_1,z_i^{\ast})\alpha(z,\hat{z}_1)h(\hat{z}_1) \dee\hat{z}_{1:N} \dee z_{1:N-1}^{\ast}\\
            &-\int Q_{\infty}(z,\hat{z})\alpha_{\infty}(z,\hat{z})h(\hat{z}) \dee\hat{z}\bigg|\pi_S(z) \dee z\\
            =&\int\int\bigg|\frac{\omega(z,\hat{z}_1)}{\frac{1}{N}\sum_{i=1}^N\omega(z,\hat{z}_i)}\alpha(z,\hat{z}_1)-\frac{\omega(z,\hat{z}_1)}{\int\omega(z,\hat{z}_1)Q_S(z,\hat{z}_1) \dee\hat{z}_1}\alpha_\infty(z,\hat{z}_1)\bigg|\\
            &|h(\hat{z}_1)|\prod_{i=1}^NQ_S(z,\hat{z}_i)\prod_{i=1}^{N-1}Q_S(\hat{z}_1,z_i^{\ast}) \dee\hat{z}_{1:N} \dee z_{1:N-1}^{\ast}\pi_S(z) \dee z\\
            \leq&M\int\int\bigg|\frac{\omega(z,\hat{z}_1)}{\frac{1}{N}\sum_{i=1}^N\omega(z,\hat{z}_i)}\alpha(z,\hat{z}_1)-\frac{\omega(z,\hat{z}_1)}{\int\omega(z,\hat{z}_1)Q_S(z,\hat{z}_1) \dee\hat{z}_1}\alpha_\infty(z,\hat{z}_1)\bigg|\\
            &\prod_{i=1}^NQ_S(z,\hat{z}_i)\prod_{i=1}^{N-1}Q_S(\hat{z}_1,z_i^{\ast}) \dee\hat{z}_{1:N} \dee z_{1:N-1}^{\ast}\pi_S(z) \dee z\\
            =&M\mathbb{E}\left[\left|\frac{\omega(Z,\hat{Z}_1)}{\frac{1}{N}\sum_{i=1}^N\omega(Z,\hat{Z}_i)}\alpha(Z,\hat{Z}_1)-\frac{\omega(Z,\hat{Z}_1)}{\int\omega(Z,\hat{z}_1)Q_S(Z,\hat{z}_1) \dee\hat{z}_1}\alpha_\infty(Z,\hat{Z}_1)\right|\right],
        \end{aligned}
        \label{eq:expectation_bound}
    \end{equation}
    where $M$ is the upper bound of $|h(z)|$. By the strong law of large numbers, with probability 1,
    \begin{equation}
        \left|\frac{\omega(Z,\hat{Z}_1)}{\frac{1}{N}\sum_{i=1}^N\omega(Z,\hat{Z}_i)}\alpha(Z,\hat{Z}_1)-\frac{\omega(Z,\hat{Z}_1)}{\int\omega(Z,\hat{z}_1)Q_S(Z,\hat{z}_1) \dee\hat{z}_1}\alpha_\infty(Z,\hat{Z}_1)\right|\to0.
    \end{equation}
    To get the convergence of expectation, we also need to show that the random variable is uniformly integrable. By Cauchy--Schwarz inequality and \citet[Proposition 4]{Gagnon2023}, we have that
    \begin{equation}
        \begin{aligned}
            &\mathbb{E}\left[\left(\frac{\omega(Z,\hat{Z}_1)}{\frac{1}{N}\sum_{i=1}^N\omega(Z,\hat{Z}_i)}\alpha(Z,\hat{Z}_1)-\frac{\omega(Z,\hat{Z}_1)}{\int\omega(Z,\hat{z}_1)Q_S(Z,\hat{z}_1) \dee\hat{z}_1}\alpha_\infty(Z,\hat{Z}_1)\right)^2\right]\\
            \leq&2\mathbb{E}\left[\left(\frac{\omega(Z,\hat{Z}_1)}{\frac{1}{N}\sum_{i=1}^N\omega(Z,\hat{Z}_i)}\alpha(Z,\hat{Z}_1)\right)^2\right]+2\mathbb{E}\left[\left(\frac{\omega(Z,\hat{Z}_1)}{\int\omega(Z,\hat{z}_1)Q_S(Z,\hat{z}_1) \dee\hat{z}_1}\alpha_\infty(Z,\hat{Z}_1)\right)^2\right]\\
            \leq&2\mathbb{E}\left[\left(\frac{\omega(Z,\hat{Z}_1)}{\frac{1}{N}\sum_{i=1}^N\omega(Z,\hat{Z}_i)}\right)^2\right]+2\mathbb{E}\left[\left(\frac{\omega(Z,\hat{Z}_1)}{\int\omega(Z,\hat{z}_1)Q_S(Z,\hat{z}_1) \dee\hat{z}_1}\right)^2\right]\\
            \leq&2\mathbb{E}\left[\omega(Z,\hat{Z}_1)^4\right]^{1/2}\mathbb{E}\left[\omega(Z,\hat{Z}_1)^{-4}\right]^{1/2}+2\mathbb{E}\left[\omega(Z,\hat{Z}_1)^2\right]\left(\int\omega(Z,\hat{z}_1)Q_S(Z,\hat{z}_1) \dee z_1\right)^{-2},
        \end{aligned}
    \end{equation}
    which is finite by assumption. Therefore, the last expectation term in \cref{eq:expectation_bound} converges to $0$, thereby establishing the convergence in \cref{eq:convergence_1}. By applying the same arguments, we can also demonstrate the convergence in \cref{eq:convergence_2}. Thus, we have proven \cref{Eq:weak convergence}.

    Now we show that $P_\infty h$ is continuous. Let $\textrm{dist}(\cdot,\cdot)$ be the geodesic distance on sphere $\mathbb{S}^d$. Take $z_\epsilon$ satisfying $\textrm{dist}(z,z_\epsilon)\leq\epsilon$ for any $\epsilon>0$. Then, using the explicit formula of $Q_S$ \citep{Milinanni2025}, we have in the globally-balanced case 
    \begin{equation}
        \begin{aligned}
            Q_\infty(z_\epsilon,\hat{z})&=\frac{\pi_S(\hat{z})Q_S(z_\epsilon,\hat{z})}{\int \pi_S(\hat{z})Q_S(z_\epsilon,\hat{z}) \dee\hat{z}}\\
            &=\frac{\pi_S(\hat{z})\left(\frac{1}{\cos(\textrm{dist}(z_\epsilon,\hat{z}))}\right)^{d+1}\exp\left(\frac{1}{2h^2}-\frac{1}{2h^2\cos^2(\textrm{dist}(z_\epsilon,\hat{z}))}\right)}{\int \pi_S(\hat{z})\left(\frac{1}{\cos(\textrm{dist}(z_\epsilon,\hat{z}))}\right)^{d+1}\exp\left(\frac{1}{2h^2}-\frac{1}{2h^2\cos^2(\textrm{dist}(z_\epsilon,\hat{z}))}\right) \dee\hat{z}}.
        \end{aligned}
    \end{equation}
    By Fatou's lemma, we have
    \begin{equation}
        \begin{aligned}
            \liminf_{\epsilon\to 0}&\int \pi_S(\hat{z})\left(\frac{1}{\cos(\textrm{dist}(z_\epsilon,\hat{z}))}\right)^{d+1}\exp\left(\frac{1}{2h^2}-\frac{1}{2h^2\cos^2(\textrm{dist}(z_\epsilon,\hat{z}))}\right) \dee\hat{z}\\
            &\geq \int \pi_S(\hat{z})\left(\frac{1}{\cos(\textrm{dist}(z,\hat{z}))}\right)^{d+1}\exp\left(\frac{1}{2h^2}-\frac{1}{2h^2\cos^2(\textrm{dist}(z,\hat{z}))}\right) \dee\hat{z}.
        \end{aligned}
    \end{equation}
    Then, for any $\delta>0$, we can choose $\epsilon$ such that 
    \begin{equation}
        \begin{aligned}
            &\int \pi_S(\hat{z})\left(\frac{1}{\cos(\textrm{dist}(z_\epsilon,\hat{z}))}\right)^{d+1}\exp\left(\frac{1}{2h^2}-\frac{1}{2h^2\cos^2(\textrm{dist}(z_\epsilon,\hat{z}))}\right) \dee\hat{z}\\
            &\geq \int \pi_S(\hat{z})\left(\frac{1}{\cos(\textrm{dist}(z,\hat{z}))}\right)^{d+1}\exp\left(\frac{1}{2h^2}-\frac{1}{2h^2\cos^2(\textrm{dist}(z,\hat{z}))}\right) \dee\hat{z}-\delta.
        \end{aligned}
    \end{equation}
    Now let 
    \begin{equation}
        \delta\leq\frac{1}{2}\int \pi_S(\hat{z})\left(\frac{1}{\cos(\textrm{dist}(z,\hat{z}))}\right)^{d+1}\exp\left(\frac{1}{2h^2}-\frac{1}{2h^2\cos^2(\textrm{dist}(z,\hat{z}))}\right) \dee\hat{z},
    \end{equation}
    we get
    \begin{equation}
        \begin{aligned}
            &\int \pi_S(\hat{z})\left(\frac{1}{\cos(\textrm{dist}(z_\epsilon,\hat{z}))}\right)^{d+1}\exp\left(\frac{1}{2h^2}-\frac{1}{2h^2\cos^2(\textrm{dist}(z_\epsilon,\hat{z}))}\right) \dee\hat{z}\\
            &\geq \frac{1}{2}\int \pi_S(\hat{z})\left(\frac{1}{\cos(\textrm{dist}(z,\hat{z}))}\right)^{d+1}\exp\left(\frac{1}{2h^2}-\frac{1}{2h^2\cos^2(\textrm{dist}(z,\hat{z}))}\right) \dee\hat{z}.
        \end{aligned}
    \end{equation}
    Since the function $f(x)=x^{d+1}\exp(a-ax^2)$ is bounded, there exists a constant $\bar{M}>0$ such that 
    \begin{equation}
        \left(\frac{1}{\cos(\textrm{dist}(z_\epsilon,\hat{z}))}\right)^{d+1}\exp\left(\frac{1}{2h^2}-\frac{1}{2h^2\cos^2(\textrm{dist}(z_\epsilon,\hat{z}))}\right)\leq \bar{M}.
    \end{equation}
    Therefore 
    \begin{equation}
        Q_\infty(z_\epsilon,\hat{z})\leq\frac{\pi_S(\hat{z})\bar{M}}{\frac{1}{2}\int \pi_S(\hat{z})\left(\frac{1}{\cos(\textrm{dist}(z,\hat{z}))}\right)^{d+1}\exp\left(\frac{1}{2h^2}-\frac{1}{2h^2\cos^2(\textrm{dist}(z,\hat{z}))}\right) \dee\hat{z}}:=g(\hat{z}),
    \end{equation}
    where $g(\hat{z})$ is obviously an integrable function with respect to $\hat{z}$. Since both $h$ and $\alpha_\infty$ are bounded, by the dominated convergence theorem and the continuity of $h$, we can prove $P_\infty h$ is continuous by
    \begin{equation}
        \begin{aligned}
            \lim_{\epsilon\to0}P_\infty h(z_\epsilon)&=\lim_{\epsilon\to0}\left(\int Q_{\infty}(z_\epsilon,\hat{z})\alpha_{\infty}(z_\epsilon,\hat{z})h(\hat{z}) \dee\hat{z}+h(z_\epsilon)\int Q_{\infty}(z_\epsilon,\hat{z})\left(1-\alpha_{\infty}(z_\epsilon,\hat{z})\right) \dee\hat{z}\right)\\
            &=\int\lim_{\epsilon\to0}Q_{\infty}(z_\epsilon,\hat{z})\alpha_{\infty}(z_\epsilon,\hat{z})h(\hat{z}) \dee\hat{z}+h(z)\int\lim_{\epsilon\to0}Q_{\infty}(z_\epsilon,\hat{z})\left(1-\alpha_{\infty}(z_\epsilon,\hat{z})\right) \dee\hat{z}\\
            &=\int Q_{\infty}(z,\hat{z})\alpha_{\infty}(z,\hat{z})h(\hat{z}) \dee\hat{z}+h(z)\int Q_{\infty}(z,\hat{z})\left(1-\alpha_{\infty}(z,\hat{z})\right) \dee\hat{z}=P_\infty h(z).
        \end{aligned}
    \end{equation}
    
\end{proof}
\subsubsection{Proof of \cref{thm: ideal uniform ergodic}}\label{proof:thm: ideal uniform ergodic}
\begin{proof}
    Denote the proposal density of \cref{alg:ideal} on $\mathbb{S}^d$ as $T_\infty(z,\cdot)$. It suffices to prove that $T_\infty(z,\cdot)$ has a lower bound w.r.t.\,$Q_S(z,\cdot)$. We can use the same argument in the proof of \cref{Thm:uniform ergodic}. Considering the accept/reject step, we have
    \begin{equation}
        T_\infty(z,\hat{z})\geq Q_\infty(z,\hat{z})\alpha_\infty(z,\hat{z}).
    \end{equation}
    In the globally-balanced case, if $\hat{z}\notin\AC(\epsilon)$,
    \begin{equation}
        \begin{aligned}
            Q_\infty(z,\hat{z})&=\frac{\pi_S(\hat{z})Q_S(z,\hat{z})}{\int_{\mathbb{S}^d}\pi_S(y)Q_S(z,y) \dee y}\\
            &\geq\frac{\delta_\epsilon Q_S(z,\hat{z})}{M},
        \end{aligned}
    \end{equation}
    where we used $\pi_S(y)\leq M$. Moreover,
    \begin{equation}
        \begin{aligned}
            \alpha_\infty(z,\hat{z})&=1\wedge\frac{\int_{\mathbb{S}^d}\pi_S(y)Q_S(z,y) \dee y}{\int_{\mathbb{S}^d}\pi_S(y)Q_S(\hat{z},y) \dee y}\\
            &\geq 1\wedge\frac{\int_{\HS(z,\epsilon')\backslash\AC(\epsilon)}\pi_S(y)Q_S(z,y) \dee y}{M}\\
            &\geq 1\wedge\frac{\delta_\epsilon\delta'_{\epsilon'}S_{\epsilon,\epsilon',d}}{M}.
        \end{aligned}
    \end{equation}
    Therefore, 
    \begin{equation}
        T_\infty(z,\hat{z})\geq\frac{\delta_\epsilon Q_S(z,\hat{z})}{M}\left(1\wedge\frac{\delta_\epsilon\delta'_{\epsilon'}S_{\epsilon,\epsilon',d}}{M}\right).
    \end{equation}
    In the locally-balanced case, similarly when $\hat{z}\notin\AC(\epsilon)$, we have
    \begin{equation}
        \begin{aligned}
            Q_\infty(z,\hat{z})&=\frac{\sqrt{\pi_S(\hat{z})}Q_S(z,\hat{z})}{\int_{\mathbb{S}^d}\sqrt{\pi_S(y)}Q_S(z,y) \dee y}\\
            &\geq\frac{\sqrt{\delta_\epsilon}Q_S(z,\hat{z})}{\sqrt{M}},\\
            \alpha_\infty(z,\hat{z})&=1\wedge\sqrt{\frac{\pi_S(\hat{z})}{\pi_S(z)}}\cdot\frac{\int_{\mathbb{S}^d}\sqrt{\pi_S(y)}Q_S(z,y) \dee y}{\int_{\mathbb{S}^d}\sqrt{\pi_S(y)}Q_S(\hat{z},y) \dee y}\\
            &\geq1\wedge\sqrt{\frac{\delta_\epsilon}{M}}\cdot\frac{\sqrt{\delta_\epsilon}\delta'_{\epsilon'}S_{\epsilon,\epsilon',d}}{\sqrt{M}}=1\wedge\frac{\delta_\epsilon\delta'_{\epsilon'}S_{\epsilon,\epsilon',d}}{M}.
        \end{aligned}
    \end{equation}
    Thus we have
    \begin{equation}
        T_\infty(z,\hat{z})\geq\frac{\sqrt{\delta_\epsilon}Q_S(z,\hat{z})}{\sqrt{M}}\left(1\wedge\frac{\delta_\epsilon\delta'_{\epsilon'}S_{\epsilon,\epsilon',d}}{M}\right).
    \end{equation}
\end{proof}

\subsubsection{Proof of \cref{prop:infinity lower bound}}\label{proof:prop:infinity lower bound}
\begin{proof}
    In the globally-balanced case, $Q_{\infty}(z,\hat{z})=\frac{\pi_S(\hat{z})Q_S(z,\hat{z})}{\int_{\mathbb{S}^d}\pi_S(y)Q_S(z,y) \dee y}$, so the acceptance rate can be written as 
    \begin{equation}
        \alpha_{\infty}(z,\hat{z})=1\wedge\frac{\int_{\mathbb{S}^d}\pi_S(y)Q_S(z,y) \dee y}{\int_{\mathbb{S}^d}\pi_S(y)Q_S(\hat{z},y) \dee y}.
    \end{equation}
    On the one hand, 
    \begin{equation}
        \begin{aligned}
            \int_{\mathbb{S}^d}\pi(y)Q_S(z,y) \dee y&\geq\int_{\mathbb{S}^d\backslash\AC(\epsilon)}\pi_S(y)Q_S(z,y) \dee y\\
            &\geq\int_{\HS(z,\epsilon')\backslash\AC(\epsilon)}\delta_{\epsilon}Q_S(z,y) \dee y\\
            &\geq\int_{\HS(z,\epsilon')\backslash\AC(\epsilon)}\delta_{\epsilon}\delta'_{\epsilon'} \dee y\\
            &\geq\delta_{\epsilon}\delta'_{\epsilon'}S_{\epsilon,\epsilon',d}>0.
        \end{aligned}
    \end{equation}
    On the other hand, notice that we have $M=\sup_z\pi_S(z)<\infty$ if we set $\alpha>2d$ or $\beta>0$ in the target distribution  $\pi(x)\propto\|x\|^{-\alpha}$ or $\pi(x)\propto \exp(-\|x\|^{\beta})$, then
    \begin{equation}
        \int_{\mathbb{S}^d}\pi_S(y)Q_S(\hat{z},y) \dee y\leq M\int_{\mathbb{S}^d}Q_S(\hat{z},y) \dee y=M.
    \end{equation}
    Therefore, in the globally-balanced case, we have
    \begin{equation}
        \alpha_{\infty}(z,\hat{z})\geq 1\wedge\frac{\delta_{\epsilon}\delta'_{\epsilon'}S_{\epsilon,\epsilon',d}}{M}:=\xi>0.
    \end{equation}
    In the locally-balanced case, 
    \begin{equation}
        \alpha_{\infty}(z,\hat{z})=1\wedge\sqrt{\frac{\pi_S(\hat{z})}{\pi_S(z)}}\cdot\frac{\int_{\mathbb{S}^d}\sqrt{\pi_S(y)}Q_S(z,y) \dee y}{\int_{\mathbb{S}^d}\sqrt{\pi_S(y)}Q_S(\hat{z},y) \dee y}.
    \end{equation}
    Similarly,
    \begin{equation}
        \int_{\mathbb{S}^d}\sqrt{\pi_S(y)}Q_S(z,y) \dee y\geq\sqrt{\delta_\epsilon}\delta'_{\epsilon'}S_{\epsilon,\epsilon',d}>0.
    \end{equation}
    And
    \begin{equation}
        \int_{\mathbb{S}^d}\sqrt{\pi_S(y)}Q_S(\hat{z},y) \dee y\leq \sqrt{M}\int_{\mathbb{S}^d}Q_S(\hat{z},y) \dee y=\sqrt{M}.
    \end{equation}
    When our target distribution is $\pi(x)\propto\|x\|^{-\alpha}$ or $\pi(x)\propto \exp(-\|x\|^{\beta})$, $\pi(x)$ is only the function of $\|x\|$, then the isotropy of $\pi(x)$ implies that $\pi_S(z)$ is only the function of $z_{d+1}$ and when $1/\|x\|=o(h)$ or $z_{d+1}^2=1-o(h^2)$, according to \citet[Lemma 3.1]{yang2024}, if $h=O(d^{-1/2})$, then we have
    \begin{equation}
        \frac{\hat{z}_{d+1}}{z_{d+1}}=\frac{1}{\sqrt{1+h^2(d-1)}}\left(1-\frac{1}{2}h^2U^2\right)+O_\mathbb{P}(d^{-1/2}),
    \end{equation}
    where $U\sim\mathcal{N}(0,1)$. Then for any $\varepsilon>0$, there exists $C_\varepsilon,D_\varepsilon$ such that when $d>D_\varepsilon$ we have
    \begin{equation}
        \begin{aligned}
            &\mathbb{P}\left(\frac{\hat{z}_{d+1}}{z_{d+1}}-\frac{1}{\sqrt{1+h^2(d-1)}}\left(1-\frac{1}{2}h^2U^2\right)>C_\varepsilon d^{-1/2}\right)<\varepsilon,\\
            &\Rightarrow\mathbb{P}\left(\frac{\hat{z}_{d+1}}{z_{d+1}}>C_\varepsilon d^{-1/2}+\frac{1}{\sqrt{1+h^2(d-1)}}\left(1-\frac{1}{2}h^2U^2\right)\right)<\varepsilon.
        \end{aligned}
    \end{equation}
    Since 
    \begin{equation}
        \lim_{d\to\infty}C_\varepsilon d^{-1/2}+\frac{1}{\sqrt{1+h^2(d-1)}}\left(1-\frac{1}{2}h^2U^2\right)\leq 1,
    \end{equation}
    for any $\delta>0$, there exists $D_\delta$ such that when $d>D_\delta$
    \begin{equation}
        C_\varepsilon d^{-1/2}+\frac{1}{\sqrt{1+h^2(d-1)}}\left(1-\frac{1}{2}h^2U^2\right)<1+\delta.
    \end{equation}
    Now when $d>\mathscr{D}_\varepsilon:=\max\{D_\varepsilon,D_\delta\}$, we have
    \begin{equation}
        \mathbb{P}\left(\frac{\hat{z}_{d+1}}{z_{d+1}}>1+\delta\right)\leq\mathbb{P}\left(\frac{\hat{z}_{d+1}}{z_{d+1}}>C_\varepsilon d^{-1/2}+\frac{1}{\sqrt{1+h^2(d-1)}}\left(1-\frac{1}{2}h^2U^2\right)\right)<\varepsilon.
    \end{equation}
    Let $\delta\to0$, we get that
    \begin{equation}
        \mathbb{P}\left(\frac{\hat{z}_{d+1}}{z_{d+1}}>1\right)<\varepsilon,
    \end{equation}
    which implies that with probability no smaller than $1-\varepsilon$, $\hat{z}_{d+1}\leq z_{d+1}$ and $\pi_S(\hat{z})/\pi_S(z)\geq 1$. Therefore,
    \begin{equation}
        \alpha_{\infty}(z,\hat{z})\geq 1\wedge\frac{\sqrt{\delta_\epsilon}\delta'_{\epsilon'}S_{\epsilon,\epsilon',d}}{\sqrt{M}}:=\xi>0.
    \end{equation}
\end{proof}

\subsubsection{Proof of \cref{prop:almost linear gap}}\label{proof:prop:almost linear gap}
Before giving the proof of \cref{prop:almost linear gap}, we first show that the proposal distribution of SRWM in $\mathbb{R}^d$ has finite moments to the order of $d-1$.
\begin{lemma}
\label{lemma:SPS finite moments}
    Denote $\widetilde{Q_S}(x,\cdot)$ as the proposal distribution of SRWM on $\mathbb{R}^d$, then for any $x$, $\widetilde{Q_S}(x,\cdot)$ has finite moments up to order $d-1$.
\end{lemma}
\begin{proof}
     Denote $Q_S(z,\cdot)$ as the proposal distribution of SRWM on $\mathbb{S}^d$. If $x=\SP(z)$ then by Jacobian we have
    \begin{equation}
        \widetilde{Q_S}(x,\hat{x})\propto Q_S(z,\hat{z})(R^2+\|\hat{x}\|^2)^{-d},
    \end{equation}
    where $\hat{x}=\SP(\hat{z})$. According to \citet{Milinanni2025}, the explicit formula of $Q_S$  can be written as
    \begin{equation}
        Q_S(z,\hat{z})\propto(1+\|w\|)^{\frac{d+1}{2}}\exp(-\frac{\|w\|^2}{2h^2}).
    \end{equation} 
    To finish the proof, we consider the following three cases:
    \begin{itemize}
        \item  If $z_{d+1}<0$ i.e. $\|x\|<R$, then $\hat{z}$ cannot be the north pole. In this case, $\|\hat{x}\|$ is bounded and $\widetilde{Q_S}(x,\cdot)$ has a bounded support. Any probability distribution with bounded support has all moments finite.
        \item   If $z_{d+1}>0$ i.e. $\|x\|>R$, then $\hat{z}$ reaches the north pole (i.e. $\hat{x}\to\infty$) only when $w,e_{d+1}$ are coplanar, $w^Te_{d+1}>0$ and $\|w\|=\frac{2\|x\|R}{\|x\|^2-R^2}$ (see \cref{fig:geo1}), where $e_{d+1}$ is the unit vector of $(d+1)$-th coordinate. In this case, $Q_S(z,\hat{z})$ is a finite constant and the tail of $\widetilde{Q_S}(x,\hat{x})$ when $\|\hat{x}\|\to\infty$ behaves like $\|\hat{x}\|^{-2d}$, which implies that $\widetilde{Q_S}(x,\cdot)$ has finite moments up to order $d-1$.
        \item   If $z_{d+1}=0$ i.e. $\|x\|=R$, then by \cref{fig:geo2} the relationship between $\|w\|$ and $\|\hat{x}\|$ can be written as $\|w\|=\frac{\|\hat{x}\|^2-R^2}{2\|\hat{x}\|R}$. Therefore, $\widetilde{Q_S}(x,\cdot)$ has the exponential tail:
    \begin{equation}
        \widetilde{Q_S}(x,\hat{x})\propto\left(1+\frac{\|\hat{x}\|}{R}-\frac{R}{2\|\hat{x}\|}\right)^{\frac{d+1}{2}}\left(R^2+\|\hat{x}\|^2\right)^{-d}\exp\left(-\frac{1}{2h^2}\left(\frac{\|\hat{x}\|}{2R}-\frac{R}{2\|\hat{x}\|}\right)^2\right),
    \end{equation}
    and thus has all finite moments.
    \end{itemize} 
    \begin{figure}[h]

    \begin{subfigure}{0.48\textwidth}
    \includegraphics[width=1\linewidth]{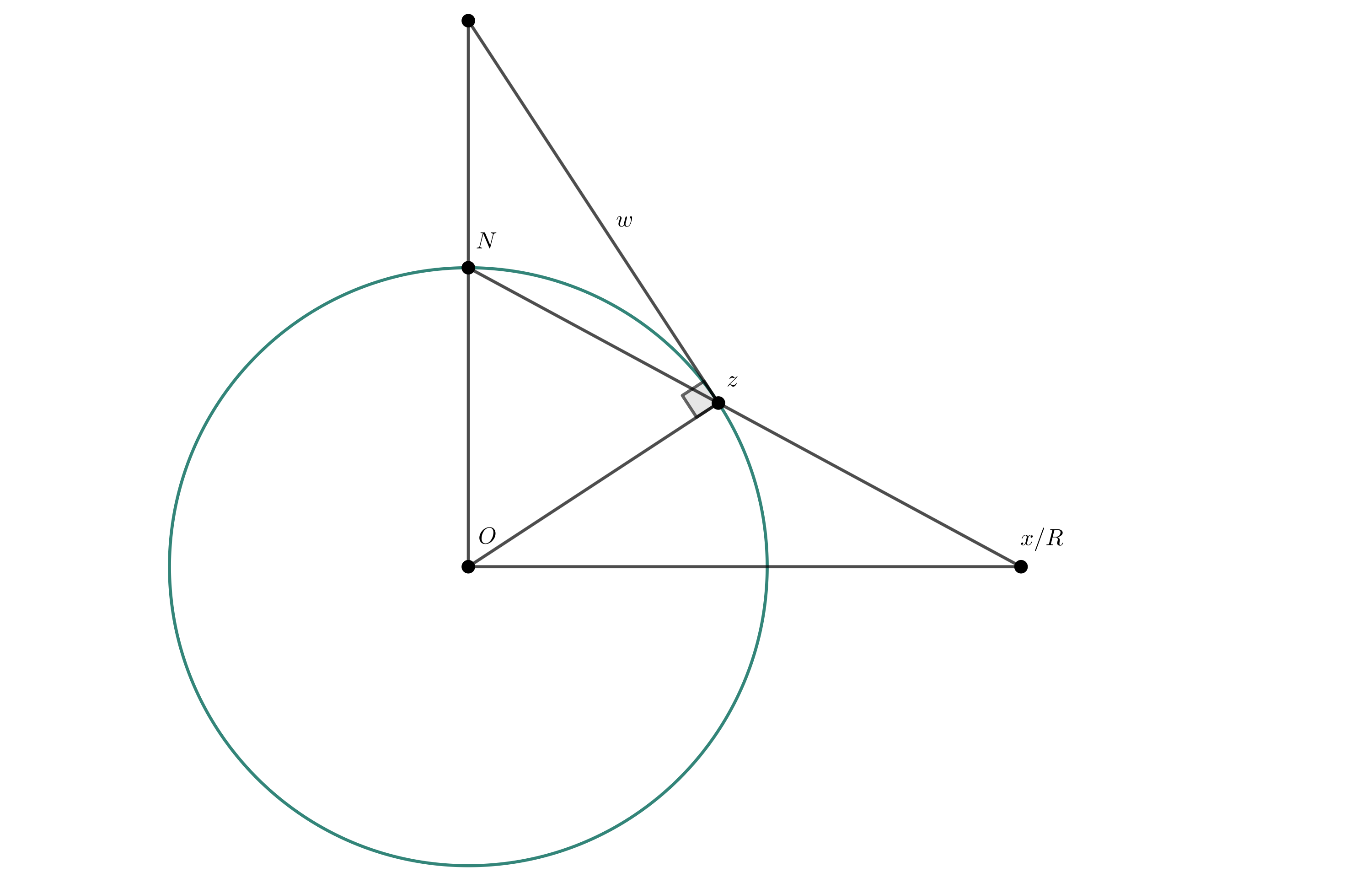} 
    \caption{$z_{d+1}>0$}
    \label{fig:geo1}
    \end{subfigure}
    \begin{subfigure}{0.48\textwidth}
    \includegraphics[width=1\linewidth]{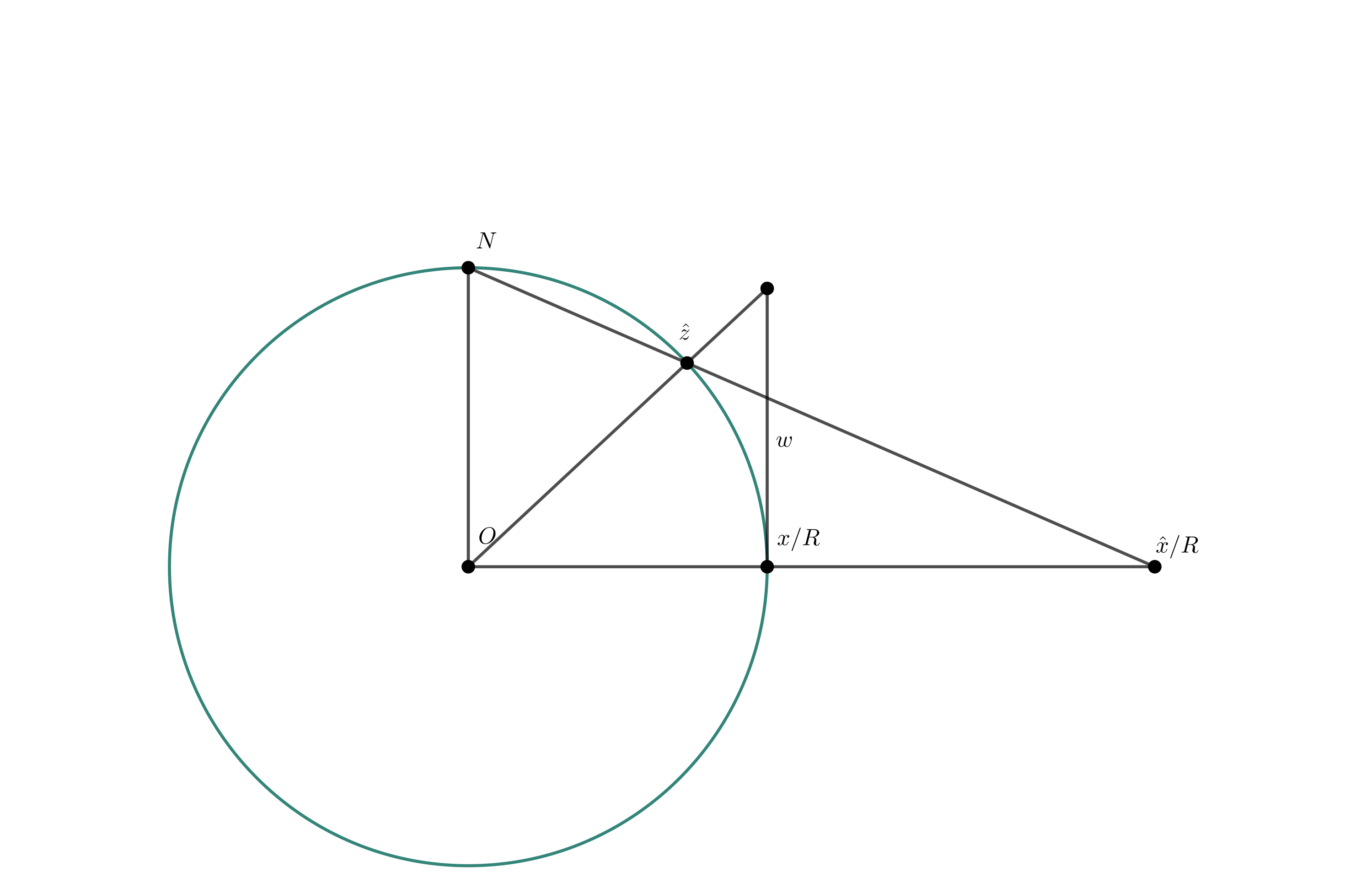}
    \caption{$z_{d+1}=0$}
    \label{fig:geo2}
    \end{subfigure}

    \caption{Illustrations on $x$, $\hat{x}$ and $w$}
    \end{figure}

\end{proof}
\begin{proof}[Proof of Proposition \ref{prop:almost linear gap}]
    By \citet[Theorem 2]{pozza2024fundamental} we have that
    \begin{equation}
        \Gap(P^{(N)})\leq\min\left(2N\mathop{\essinf}\limits_{x \in \mathbb{R}^d}\tilde{P}(x,\mathbb{R}^d\backslash\{x\}),\inf_{\nu\in\mathbb{R}^{d},\|\nu\|=1}\frac{\mathbb{E}[\max_{i=1,\dots,N}(\nu^T(\hat{X}_i-X))^2]}{2\Var(\nu^TX)}\right),
    \end{equation}
    where $X$ follows the Student's $t$ distribution and $\hat{X}_i\mid X\sim\widetilde{Q_S}(X,\cdot)$ are the candidates in the stationary phase. Denote $A_i^2=(\nu^T(\hat{X}_i-X))^2$, then by Jensen's inequality
    \begin{equation}
        (\mathbb{E}[\max_{i=1,\dots,N}A_i^2])^{\frac{p}{2}}\leq\mathbb{E}[\max_{i=1,\dots,N}A_i^{p}]\leq\sum_{i=1}^{N}\mathbb{E}[A_i^{p}].
    \end{equation}
    By \cref{lemma:SPS finite moments}, for any $p<d$
    \begin{equation}
        \begin{aligned}
            \mathbb{E}[A_i^{p}]&=\mathbb{E}[\mathbb{E}[(\nu^T(\hat{X}_i-X))^{p}\mid X]]\\
            &=\sum_{n=0}^{p}\binom{p}{n}\mathbb{E}[\mathbb{E}[(\nu^T\hat{X}_i)^{p-n}(-\nu^TX)^{n}\mid X]]\\
            &=\sum_{n=0}^{p}C(n)\mathbb{E}[(\nu^TX)^{n}]\\
            &\leq M^{\nu}(p)<\infty.
        \end{aligned}
    \end{equation}
    where $C(n)$'s are constants. It follows that
    \begin{equation}
        \mathbb{E}[\max_{i=1,\dots,N}(\nu^T(\hat{X}_i-X))^2]\leq (N\mathbb{E}[A_i^{p}])^{\frac{2}{p}}\leq N^{\frac{2}{p}}M^\nu(p)^{\frac{2}{p}}.
    \end{equation}
    Thus, we have proven the proposition.
\end{proof}
\subsection{Choice of $N$}

For multiple-try algorithms, the optimal ESJD typically increases with $N$ (as an example, see \cref{table:optimal_parameters}), indicating that it is better to use a large $N$. However, increasing $N$ also increases the computational cost, so the optimal choice of $N$ requires balancing efficiency and computational cost. In practice, the optimal $N$ is problem-dependent, and there is no universal prescription.

We recommend one dynamic strategy for choosing $N$: (1) In the burn-in phase, according to the numerical experiments of SMTM, employing a large $N$ not only improves robustness to the sphere parameters but also accelerates the convergence of SMTM (see, e.g., \cref{fig:bayeisanlasso,fig:bayesianstudent});  
Therefore, we recommend choosing a large $N$ in this phase; 
(2) In the stationary phase, we adopt the heuristic approach proposed by \cite{bedard2012scaling}. As an illustration, \cref{fig:choice_of_N} shows an example of the relationship between $N$ and the optimal ESJD, denoted as $\ESJD^\ast$. The results show that the improvement of the $\ESJD^\ast$ from $N=1$ to $N=2$ is the most significant. Given the associated computing costs, this suggests that setting $N=2$ yields the most practical trade-off during the stationary phase of the chain.

However, as discussed at the end of \cref{Sec:n to infty}, if one considers using parallel computing, SMTM might be more suited to such a context. In this sense, the choice of $N$ is often dictated by the hardware, such as parallel workers, rather than pure statistical theory. In this case, $N$ can be set equal to the number of available CPU cores or GPU threads. A detailed treatment of parallel implementations is beyond the scope of this paper and is not pursued further.

\begin{table}[]
    \centering
    \begin{subtable}[h]{0.8\textwidth}
        \centering
        \begin{tabular}{llllll}
        $N$          & 1      & 2      & 3      & 4      & 5      \\ \hline
        $\ell^\ast$  & 23.17  & 25.34  & 27.60  & 29.90  & 31.20  \\
        $\ESJD^\ast$ & 135.94 & 221.27 & 284.22 & 338.22 & 378.95 \\
        $(\sum\alpha_2^j)^{\ast}$    & 0.24   & 0.32   & 0.36   & 0.36   & 0.38   \\ \hline
        \end{tabular}
        \caption{LB SMTM}
    \end{subtable}
    \begin{subtable}[h]{0.8\textwidth}
        \centering
        \begin{tabular}{llllll}
        $N$          & 1      & 2      & 3      & 4      & 5      \\ \hline
        $\ell^\ast$  & 23.17  & 26.06  & 28.32  & 28.97  & 31.48  \\
        $\ESJD^\ast$ & 135.94 & 230.93 & 299.92 & 359.95 & 405.31 \\
        $(\sum\alpha_2^j)^{\ast}$    & 0.24   & 0.32   & 0.37   & 0.39   & 0.40   \\ \hline
        \end{tabular}
        \caption{GB SMTM}
    \end{subtable}
    \caption{Optimal parameters with different $N$}
    \label{table:optimal_parameters}
\end{table}
\begin{figure}[htbp]
    \centering
    \includegraphics[width=0.6\linewidth]{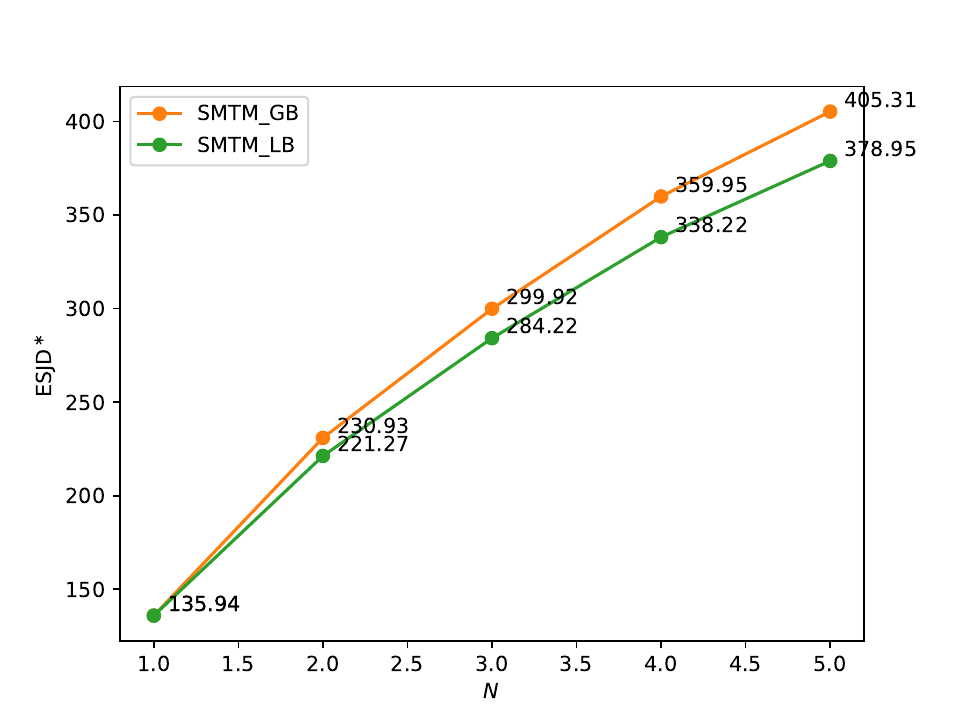}
    \caption{Optimal $\ESJD$ with different $N$}
    \label{fig:choice_of_N}
\end{figure}

\end{appendix}
\end{document}